\documentclass[11pt,a4paper]{article}
\usepackage[margin=25mm]{geometry}
\usepackage[T1]{fontenc}
\usepackage{lmodern,microtype,amsmath,amssymb,amsthm,booktabs,tabularx,array,graphicx,xcolor,placeins}
\usepackage[colorlinks=true,linkcolor=blue!50!black,citecolor=blue!50!black,urlcolor=blue!50!black]{hyperref}
\hypersetup{pdftitle={Collapse and Causality in Generalized Trace Dynamics},pdfauthor={Tejinder P. Singh}}
\newcommand{\Tr}{\operatorname{Tr}}
\newcommand{\E}{\mathbb E}
\newcommand{\id}{\mathbb I}
\newcommand{\dd}{\mathrm d}
\newcommand{\sinc}{\operatorname{sinc}}
\newcommand{\logit}{\operatorname{logit}}
\theoremstyle{plain}
\newtheorem{proposition}{Proposition}
\newtheorem{lemma}{Lemma}
\theoremstyle{definition}
\newtheorem{example}{Example}
\theoremstyle{remark}
\newtheorem{remark}{Remark}
\title{Collapse and Causality in Generalized Trace Dynamics\\[5pt]\large Record selection in Connes time and the causal consistency of emergent records}
\author{Tejinder P. Singh\\[3pt]
\small Tata Institute of Fundamental Research\\
\small Homi Bhabha Road, Mumbai 400005, India\\
\small\texttt{tpsingh@tifr.res.in}}
\date{October 4, 2026}
\begin{document}
\maketitle

\begin{abstract}
In generalized trace dynamics (GTD) the fluctuations proposed to select a measurement record come from unresolved pregeometric degrees of freedom and are ordered by Connes time rather than by a spacetime coordinate. The hypersurface on which collapse occurs is then not a question posed at the level at which selection happens; it returns as the requirement that the emergent records do not signal. We state this requirement operationally, apply the standard no-signalling theorem to an effective collapse law of the continuous-spontaneous-localisation type, and establish results that constrain how GTD can meet it. The norm-preserving but response-free law of a naive coarse-graining is solved exactly: it never selects a record, and for two records it erodes every initial weight towards one half; the Born-restoring correction is a drift tangent to the sphere of states, invisible to norm conservation. In history form, record weights are martingales only if the fluctuation history leans towards the leading record with gain twice its own diffusion and, for record operators with three or more collinear eigenvalues, the linear law carries its counter-term with unit coefficient; a smooth coefficient drive gives the counter-term with the wrong sign, and any other gain is an ensemble nonlinearity that signals. For a sea in canonical equilibrium the fluctuation--dissipation relation fixes the gain, so the Born gain is an Einstein relation between coupling, temperature and the coefficient's correlation time. A rate that reads the entanglement of the conditioned state signals; state dependence may enter only through the unravelling, where it suppresses selection unobservably, so observable enhancement must be carried by fixed collective operators. A split-signature geodesic length is a norm only after a positive subspace is selected. Positive states annihilate the Grassmann bilinear that carries the collapse coefficient, so the physical block is a reconstruction on which the kinetic form is positive. The coefficient noise is evaluated for a sea taken from the programme, the Sutherland texture of the relational master matrix: a block coupled to its local strain acquires an Ohmic spectral density with computed friction, white-noise weight, continuum criterion and recurrence, whereas a purely kinetic cross-coupling gives no friction. What remains is the projection that supplies the counter-term and the gain for an interacting sector.
\end{abstract}

\tableofcontents

\section{Introduction}\label{sec:intro}
\subsection{One quantum, many detectors, one record}\label{sec:puzzle}
A source emits a single quantum towards a sphere of detectors. The wavefront reaches all of them, yet an ideal experiment leaves exactly one absorption record. Unitary quantum mechanics describes how the quantum becomes correlated with the apparatus and produces a superposition of alternatives---detector~1 fired, detector~2 fired, and so on---but it does not pick one of them. This is the single-particle measurement problem in the form Einstein gave it at the 1927 Solvay conference \cite{solvay}; the detector sphere is a convenient idealisation of his argument.

It is tempting to picture the winning detector sending an instantaneous instruction to all the others not to respond. A stochastic collapse model contains no such instruction. It specifies a single joint law for the state and for the histories of the fluctuations that drive it; in each run the relative weights of the alternatives change, and one of them eventually dominates. Whether this is causally acceptable is an operational question: can a choice made at one detector alter the statistics recorded at a spacelike-separated one? Conditional probabilities change whenever information becomes available, and that by itself is not a communication channel.

\subsection{The GTD proposal: selection before spacetime}
Generalized trace dynamics (GTD) asks a prior question. It proposes that quantum theory, classical bodies and spacetime itself arise from a deterministic dynamics of noncommuting matter--geometry variables, the aikyons---the proposed atoms of space-time-matter \cite{adler,gtdcritical,book}. A detector is then not a classical object introduced from outside; localisation must account for its classical behaviour as well as for the definiteness of its readings \cite{space2018,space2019}.

The physical picture examined in this paper is the following. When a quantum system's interaction with a large number of unresolved aikyons is coarse-grained, the proposal is that what survives is a fluctuating anti-self-adjoint contribution to its effective evolution. These fluctuations are not fields on spacetime. They are indexed by Connes time $\tau$---the modular, or thermal, time of Connes and Rovelli \cite{connes}, which GTD takes as the ordering parameter of the pregeometric dynamics \cite{gtdtime}---and they couple to collective alternatives of the system, not to spacetime points. Spacetime, with its light cones, is a later description: a property of the localised configurations that the same dynamics produces. The familiar question of relativistic collapse---on which spacelike hypersurface does the reduction occur, and how can the answer be the same in every frame---is therefore not posed at the level at which selection happens, because no hypersurface exists there yet. It is deferred, not removed: once records admit a spacetime description, their statistics must respect the causal structure of that spacetime, and the emergent description must be covariant to the accuracy observed.

It is important to be clear about what this picture does and does not buy. Deferring the hypersurface question is not the same as proving no-signalling. No-signalling is a property of the effective law obeyed by the emergent records: it holds when that law is linear on ensembles and composed of locally complete operations, and it fails otherwise, whatever the fluctuations are called and wherever they live. The emergent law must therefore be exhibited and tested. This paper constructs such a law from standard ingredients, applies the no-signalling theorem to it, and shows by explicit calculation where natural-looking prescriptions fall outside its conditions. It then follows the GTD coefficient that is supposed to supply the fluctuations as far as the construction of Appendix~\ref{app:micro} allows, and states what is still missing.

\subsection{The argument in plain terms}\label{sec:plain}
The following ten steps state the content of the paper without equations; each points to the section that does the work.
\begin{enumerate}
\item \emph{The puzzle.} One quantum reaches every detector on a sphere, yet one detector clicks. Ordinary quantum mechanics does not say what happens at the moment the click is chosen, and Einstein's worry was that the choice seems to be communicated across the sphere instantaneously (Section~\ref{sec:puzzle}).
\item \emph{The usual fix and its sore spot.} Collapse models add a small random nudge to every quantum system, which over many repetitions favours one alternative until it wins; no message is sent. But relativity has no universal ``now'', so a collapse that happens ``at an instant'' looks different in every frame (Section~\ref{sec:connes}).
\item \emph{The GTD proposal.} Space and time are not fundamental. The random nudges come from unresolved pregeometric building blocks and act in an internal ordering parameter, Connes time, before space and time exist, so the question ``in whose frame does the collapse happen?'' is not posed at that level. It is deferred, not removed: once spacetime has emerged, the clicks we record must respect its light cones, and the emergent description must be covariant (Sections~\ref{sec:connes}, \ref{sec:continuum}).
\item \emph{Checking the clicks obey relativity.} For a standard collapse rule with fixed local operators, one detector wins with the frequencies quantum mechanics predicts, and Alice's statistics never depend on Bob's choice. This is the standard no-signalling theorem; nothing in it is specific to GTD, and its value here is in what it rules in and out for the deeper theory (Sections~\ref{sec:solvay}, \ref{sec:nosig}).
\item \emph{Keeping the total probability fixed is not enough.} The shortcut ``the nudges just redistribute probability, keeping the total at one'' is solved exactly: the odds between two detectors wander like a fair coin, return to even odds infinitely often, and the leading alternative loses its lead on average. A real collapse needs an extra steady push toward the leader, and that push is invisible to any bookkeeping of total probability (Proposition~\ref{prop:bare}, Section~\ref{sec:normdrift}).
\item \emph{The push, seen from the side of the nudges.} The same requirement can be read off the random history itself: the history must lean toward the leading alternative with a definite gain, exactly twice its own jitter, and the law must carry one fixed correction term besides. A history that leans with half that gain, or twice it, still ends in one record but with the wrong frequencies, and such a law would let distant observers signal. If the history is produced by a surrounding medium in thermal balance, the lean is that medium's response to the system, and the relation between its jitter and its response---the fluctuation--dissipation relation---fixes the gain: the Born gain is an Einstein relation between coupling, temperature and the medium's memory time. The correction term is a separate demand that the medium's response does not supply, and a finite memory time makes the collapse law one with memory (Propositions~\ref{prop:history}, \ref{prop:einstein}; Sections~\ref{sec:history}, \ref{sec:einstein}).
\item \emph{Where ``faster collapse for bigger systems'' may come from.} If the collapse rate reads off how entangled the whole system is, Bob can signal Alice. The rate may instead depend on the state in a way no experiment can see, but then it can only slow collapse down. Any speed-up that is both admissible and observable must come from the size of what is being distinguished, as in existing collapse models (Propositions~\ref{prop:ent}, \ref{prop:bound}, Example~\ref{ex:amplification}).
\item \emph{The geometry does not hand over probabilities.} The six-dimensional form of the programme has three time-like and three space-like directions, so a length can be positive, negative or zero. A conserved positive length exists on a cone of velocities, but a probability norm must be positive on a linear space of states, and that requires choosing a positive subspace; the choice is of the same kind as choosing a four-dimensional spacetime inside the six, which is why norm conservation cannot be an input to that choice (Section~\ref{sec:geodesicnorm}). The six dimensions do no work in the no-signalling argument of step~4.
\item \emph{The noise can be there, and the programme's own medium can supply it.} In the free theory the proposed source of the nudges does not fluctuate at all. Coupled to a continuum of unresolved neighbours it jitters as a collapse rule needs, with a computed strength, and a finite set of neighbours makes it echo back instead. The surrounding medium is then taken from the programme itself: the ordered texture of the relational master matrix is a chain of neighbours whose vibration frequencies form a continuous band when the chain is taken long at fixed spacing (a choice the programme has not made), and a system attached to it by an inserted coupling acquires friction and jitter of the required kind, with the number of neighbours needed and the time after which the echo returns both computed. Two cautions come with this: the block whose kinetic energy is the indefinite collapse coefficient admits only one state, in which that coefficient is invisible, so the physical block is a reconstructed one on which the kinetic energy is positive; and a coupling through velocities alone gives no friction, so the interaction must act on the block's position (Lemma~\ref{lem:positive}, Proposition~\ref{prop:bath}, Example~\ref{ex:sutherland}, Section~\ref{sec:bath}).
\item \emph{What is still missing.} One calculation: start from the interacting theory, follow a system with its unresolved neighbours, and show that what comes out is a collapse rule with the push of step~5 at the gain of step~6, the operator-carried size dependence of step~7, the positive norm of step~8 and the noise of step~9 (Section~\ref{sec:closure}).
\end{enumerate}

\subsection{Results}
The paper combines an effective analysis, which uses established collapse and quantum-operations theory, with microscopic tests that use the GTD equations of Appendix~\ref{app:micro} directly. Its results are the following.
\begin{enumerate}
\item \emph{The bare norm-preserving law never selects (Proposition~\ref{prop:bare}).} The law obtained by centring a fluctuating gain so that the norm is conserved, with no further response, is the normalisation of the linear law driven by centred noise, with no counter-term and no reweighting. Solved exactly, every log-ratio of record weights is a driftless Brownian motion: it returns to even odds infinitely often, so no record is ever permanently selected, and for two records the mean weight erodes to one half from every initial value. That a norm-preserving law without the Born weighting of its noise does not give Born statistics is known \cite{pearle89,wiseman96}, and the two-record recurrence is stated in Refs.~\cite{gtdborn,projectvii}; the exact solution adds the multi-record identity, the erosion law, and the identification of the entire difference from the CSL law as a drift tangent to the sphere of normalised states, to which norm conservation is blind.
\item \emph{The Born condition in history form (Proposition~\ref{prop:history}).} For the linear law driven by a c-number history with diffusion matrix $\mathsf D$, counter-term coefficient $\varkappa$ and conditional history drift $\mathsf b$, the drift of every record weight is computed in closed form. All weights are martingales only if $\mathsf b=2\mathsf D\bar\ell$ up to terms that do not distinguish records, so the history must lean towards the leading record with gain twice its own diffusion; and the counter-term must be present with unit coefficient, $\varkappa=1$, exactly when the squared eigenvalue vector is not an affine function on the records---forced by three collinear eigenvalues, as for a collective operator, invisible for two symmetric records and for the local detector channels of the Solvay model, and of the wrong sign in the white-noise limit of a smooth drive. The covariant family $\varkappa=2\alpha-1$, $\mathsf b=2\alpha\mathsf D\bar\ell$ interpolates between the bare law and CSL and reduces for two records to the fractional-response family of Ref.~\cite{projectvii}; for $\alpha\neq1$ the ensemble evolution is not affine in the density matrix, so it signals.
\item \emph{An Einstein relation for the sea (Proposition~\ref{prop:einstein}).} If the history is the integrated coefficient of a classical sea in canonical equilibrium at temperature $T$, coupled with strength $g$ and responding to a record expectation with strength $\lambda_{\rm b}$, the classical fluctuation--dissipation relation gives the gain $\alpha=\lambda_{\rm b}/(4gT\tau_c)$, $\tau_c$ the coefficient's correlation time. The Born gain is $\lambda_{\rm b}=4gT\tau_c$; causal consistency of the records fixes this ratio; the counter-term is a separate demand, which a smooth coefficient drive does not meet; the Markov law is the limit $\tau_c\ll$ selection time and the emergent law has memory otherwise.
\item \emph{Where state dependence may enter a collapse rate (Propositions~\ref{prop:ent} and \ref{prop:bound}).} A rate evaluated on the entanglement of the conditioned joint state lets an unread remote measurement change local statistics when it multiplies the ensemble generator. It may instead enter through the unravelling of a fixed Lindblad equation, the freedom identified by Bassi, D\"urr and Hinrichs \cite{uniqueness}, who note that it ranges from the random-phase law to the CSL law. This does not signal at the level of the density matrix, provided the fluctuation history is not itself a local record; but it can only suppress the realised selection rate below the value fixed by the operators, and it is invisible in every ensemble prediction. Enhancement beyond that value must be carried by the spectrum of fixed collective operators, as the $N^2$ amplification of CSL is.
\item \emph{The indefinite length (Section~\ref{sec:geodesicnorm}).} If the geodesic length proposed as the origin of norm conservation is built from the split-signature form of the aikyon algebra, a conserved positive value exists on the positive cone, but a probability norm on a linear space of states requires a positive-definite form, which exists only on a subspace of at most half the dimension. Selecting such a subspace is the same kind of choice as selecting a Lorentzian sector from the six-dimensional form.
\item \emph{Positive states do not see the Grassmann coefficient (Lemma~\ref{lem:positive}).} On the exterior algebra of the dynamical Grassmann generators every positive normalised functional annihilates the bifermionic coefficient, by the Cauchy--Schwarz argument that sends $\beta_1\beta_2$ to zero and by the nilpotent-ideal theorem of Ref.~\cite{physrecon}. The block whose kinetic form is the indefinite coefficient $J$ admits no state in which $J$ is visible; on the Clifford reconstruction the kinetic form is positive semidefinite in every state and definite unless the state is a pure sector; the real kinetic energy of the block remains an input, since the auxiliary factor has imaginary expectation in every positive state. The question whether a sea must share the $(2,2)$ split of the Grassmann block is thereby replaced by the question of which reconstruction is the physical one.
\item \emph{Coefficient noise from a continuum, and from the programme's own sea (Proposition~\ref{prop:bath}, Example~\ref{ex:sutherland}).} The free bifermionic coefficient is a constant of motion. Once its velocities form a stationary Gaussian process it is a mixing process whose white-noise strength follows in closed form from the velocity spectra, $D_J=4\sigma_1^2\sigma_2^2/\Gamma$ in the Ohmic case, and a finite bath recurs instead. The sea is then taken from the relational master-matrix extension of GTD \cite{master,projectvii}: the Sutherland texture is a chain with modes $\omega_\ell\propto\ell(M-\ell)$, which in the thermodynamic limit at fixed spacing---an inserted choice, since the texture is written on a fixed circle---is a gapless band with linear dispersion, an exact period $\propto M^2$ and low-frequency mode spacing $4\Omega_{\max}/M$. A reconstructed block coupled to its local strain acquires an Ohmic spectral density with friction $\gamma=\pi^3c^2/(16mm_b\Omega_{\max}^3)$, continuum criterion $4\Omega_{\max}/M\ll\gamma$ and zero-frequency line $T/M_{\rm eff}$, $M_{\rm eff}=m_b[1+\pi\gamma M/(12\Omega_{\max})]$, confirmed by the exact finite-$M$ dynamics at $M=4096$. A purely kinetic coupling, the structure of the cross terms of the GTD action, conserves the block momentum and gives no friction. The coupling to the strain is inserted, not derived.
\end{enumerate}
Around these the paper keeps, briefly, the conditional results of the effective analysis---a Gaussian instrument law selects one Solvay record asymptotically with Born frequencies, and its records, including those of an EPR--Bell pair, are non-signalling whenever the emergent operations are local and complete---and the negative microscopic results that the free coefficient is frozen and that an unchanged positive representation of the auxiliary algebra sends it to zero, so that a nonhomomorphic physical projection is required. Section~\ref{sec:closure} states, in one equation, the calculation on which a microscopic account of collapse in GTD now depends. Where the present paper overlaps with the author's recent audits of the programme \cite{projectvii,physrecon,audit,microclosure}, which establish that the additive GTD action carries no inter-atom exchange current, that the free master matrix recurs at every size, and that no written action contains an irreversible sector---while leaving open an interacting thermodynamic limit---the overlap is stated at the point of use; those papers specify requirements, and the present one tests two of them in a model built from the programme's own interacting sector.

\subsection{Outline and conventions}
Section~\ref{sec:plain} has just given the argument without equations. Section~\ref{sec:setting} describes the pregeometric setting. Section~\ref{sec:joint} builds the effective collapse law and contains Propositions~\ref{prop:bare} and \ref{prop:history} and the discussion of the geodesic length. Section~\ref{sec:causal} applies the law to Solvay and Bell records, states the no-signalling theorem, and contains Propositions~\ref{prop:ent} and \ref{prop:bound}. Section~\ref{sec:diagnostics} follows the GTD coefficient from the free theory through the linearised bath, Proposition~\ref{prop:bath}, the Sutherland sea of Example~\ref{ex:sutherland}, Lemma~\ref{lem:positive} on the block that carries it, and the Einstein relation of Proposition~\ref{prop:einstein}, to the projection. Section~\ref{sec:closure} states what remains and collects the answers in Table~\ref{tab:status}. Appendices record the microscopic action and its projection tests, the fermionic reservoir, the probability details of the joint law, and the linearised-bath model with the Sutherland sea.

Stochastic equations are in It\^o form unless a Stratonovich circle $\circ$ appears. Wiener processes $W_a$ have independent increments with $\dd W_a\dd W_b=\delta_{ab}\dd\tau$ under the physical law. The operators $L_a$ that distinguish records have units of (time)$^{-1/2}$; record weights and exposures are dimensionless. Ancillary scripts reproduce every numerical statement.

\section{The pregeometric setting}\label{sec:setting}
In an ordinary measurement calculation the positions of the detectors and the readings of their clocks are given at the outset. In GTD these too must be explained: the matter that serves as ruler or clock acquires approximately classical behaviour through localisation, so the causal structure relevant to a laboratory belongs to a later description than the fundamental variables. This section records what geometric information those variables already carry, and in what sense collapse can happen ``before spacetime''.

\subsection{The split-signature form and its null cone}\label{sec:splitform}
The proposal associates six candidate spacetime directions with the split-biquaternionic sector of the aikyon algebra \cite{wsi,book}. Let
\begin{equation}
\mathcal B=\mathbb H\otimes\mathbb D,\qquad
\mathbb D=\mathbb R[\omega]/(\omega^2-1),\qquad
\mathcal B\simeq\mathbb H\oplus\mathbb H,
\label{eq:split_algebra}
\end{equation}
where $\mathbb H$ is the real quaternion algebra and $\omega$ is a commuting unit with $\omega^2=+1$; the idempotents $(1\pm\omega)/2$ split $\mathcal B$ into the two quaternion copies. The six directions of interest are the imaginary quaternion directions in the two copies,
\begin{equation}
V=\operatorname{Im}\mathbb H\oplus\omega\operatorname{Im}\mathbb H,
\qquad v=\mathsf u+\omega\mathsf x,\quad \mathsf u=\sum_{r=1}^3u_r e_r,\quad \mathsf x=\sum_{r=1}^3x_r e_r,
\end{equation}
with $e_re_s=-\delta_{rs}+\epsilon_{rst}e_t$. This is a vector subspace, not a subalgebra. With the conjugation $(a+\omega b)^{\sim}=\bar a-\omega\bar b$ (bar denoting quaternion conjugation) one has $\tilde v=-\mathsf u+\omega\mathsf x$ and
\begin{equation}
v\tilde v=(|u|^2-|x|^2)1+\omega[\mathsf u,\mathsf x],\qquad
\tfrac12(v\tilde v+\tilde v v)=(|u|^2-|x|^2)1.\label{eq:sym_norm}
\end{equation}
The symmetrisation matters, since $v\tilde v$ alone has a non-scalar remainder. With the opposite overall sign the real quadratic form is
\begin{equation}
\mathsf q(v)=-u_1^2-u_2^2-u_3^2+x_1^2+x_2^2+x_3^2,
\label{eq:splitnorm}
\end{equation}
of signature $(3,3)$. Its status is the following. The split signature comes from the choice of conjugation: the conjugation $(a+\omega b)^{\wedge}=\bar a+\omega\bar b$ gives $|u|^2+|x|^2$ instead, positive definite. So \eqref{eq:splitnorm} is the form that carries the spacetime interpretation of Ref.~\cite{wsi}, and that interpretation is what singles it out. And $\mathsf q$ is invariant under $SO(3,3)$ acting on the six coefficients, which says nothing about whether those transformations preserve the algebra multiplication or a microscopic action.

The zero set of $\mathsf q$ is an algebraic null cone, and the form motivates the $SO(3,3)$ BF construction of Ref.~\cite{wsi}, in which a six-dimensional geometry is related to two overlapping four-dimensional Lorentzian sectors. A null cone, however, is not yet a causal order. In Lorentzian geometry the timelike vectors at a point fall into two disconnected components, and a time orientation names one of them the future. With three negative directions a vector can turn into its negative without ever becoming null:
\begin{equation}
v(\vartheta)=(\cos\vartheta,\sin\vartheta,0;0,0,0),\qquad \mathsf q(v(\vartheta))=-1,\qquad v(\pi)=-v(0),
\label{eq:rotation}
\end{equation}
and the rotation lies in the identity component of $SO(3,3)$. The negative-norm set is connected, as is the positive-norm set. So the quadratic form supplies a null cone but not the two-component timelike structure from which a future--past order is built. Restricting to $u_2=u_3=0$ leaves $-u_1^2+|x|^2$, which has the ordinary cones; this shows what a selected sector could provide, not why it is selected. Craig and Weinstein have shown that ultrahyperbolic equations admit a well-posed Cauchy problem only with a nonlocal constraint on codimension-one data \cite{craigweinstein}, so a propagation law on the full six-dimensional form would in any case have to be stated explicitly rather than read off from the signature. None of this rules out additional causal data; algebraic formulations of causality in noncommutative geometry exist with extra operator structure \cite{franco}.

\subsection{Connes time and the hypersurface question}\label{sec:connes}
We write $\tau$ for the ordering parameter of the microscopic dynamics. Its identification with the modular flow of a represented algebra and state is part of the GTD proposal \cite{connes,gtdtime}; for the present paper $\tau$ is simply the parameter in which the fundamental equations, and the stochastic law that coarse-graining is supposed to produce, are first order. Two consequences follow, one helpful and one demanding.

The helpful one is the point made in the Introduction. A collapse law written in $\tau$ is not a law on spacetime and is not asked to be covariant under spacetime transformations; the hypersurface on which collapse ``happens'' is not a question that arises at this level. Relativistic collapse models formulated on spacetime have to answer it explicitly, and they do so by adding structure: nonstandard degrees of freedom, a flash ontology, or a foliation-dependent construction \cite{aharonov81,ggp,tumulka,bedingham}. GTD proposes a different route: selection is placed before the emergence of the arena in which the question would be posed, and the question returns as a constraint on what emerges.

The demanding one is that a dynamics which is first order and Markovian in a single parameter $\tau$ defines a preferred ordering of events. Whether the emergent spacetime description inherits that ordering as a preferred foliation, and whether its records are nevertheless Lorentz covariant to the required accuracy, is a prediction that the completion must make. It is not something the present paper can grant, and we return to it in Section~\ref{sec:continuum}. Modular flows, moreover, are automorphism flows: a Connes time is not by itself a selection mechanism. The selection mechanism has to come from the coarse-graining of the interacting dynamics, which is the subject of Section~\ref{sec:diagnostics}.

A full reconstruction would relate collective microscopic configurations to classical events, select a Lorentzian sector and a time orientation, establish the propagation laws of its matter, and show when selected configurations behave as bodies and clocks. The gravi--weak BF analysis \cite{wsi} and the spectral-action analysis \cite{spectral} supply candidate geometric and low-energy descriptions, but both are formulated on an already six-dimensional manifold, and an action for a selected sector does not determine the probabilities with which that sector is selected. The effective analysis of Sections~\ref{sec:joint}--\ref{sec:causal} assumes these tasks have been carried out to the point where approximately classical detectors, a laboratory spacetime and local operations on it are available, and tests whether a proposed collapse law is consistent with that description.

\subsection{Fluctuations in trace dynamics and the anti-self-adjoint term}\label{sec:asa}
In trace dynamics the coordinates $q_r$ and momenta $p_r$ are matrices with an action built from a trace, and global unitary invariance gives a conserved matrix charge $\widetilde C=\sum_{r\in B}[q_r,p_r]-\sum_{r\in F}\{q_r,p_r\}$, with $B$ and $F$ the bosonic and fermionic variables \cite{adler}. The equilibrium statistical mechanics of this matrix phase space, with $\widetilde C$ among the conserved charges, is Adler's route to quantum theory, and his Chapter~6 adds Brownian corrections to that equilibrium as the route to stochastic reduction, through energy-driven equations of the Hughston type \cite{adler,adler2024,hughston}. GTD extends the programme to geometry \cite{gtdcritical,book}. The present work belongs to Adler's programme: it uses stochastic reduction and Born martingales, and the distinction it draws is between a model of a fluctuation correction and the evaluation of that correction for a specified action and preparation.

The mechanism proposed for collapse is that a strongly coupled collective system interacts with unresolved aikyons, and that coarse-graining their anti-self-adjoint (ASA) contribution produces effective amplitude fluctuations. Why anti-self-adjoint? A self-adjoint Hamiltonian changes phases and mixes amplitudes unitarily; an anti-self-adjoint part can change their relative magnitudes. For an unnormalised vector $\phi$ and a candidate effective generator $H_S+i\hbar A$ with $H_S^\dagger=H_S$, $A^\dagger=A$,
\begin{equation}
\dot\phi=(-iH_S/\hbar+A)\phi,\qquad
\frac{\dd}{\dd\tau}\|\phi\|^2=2\langle\phi,A\phi\rangle,\qquad
\dot\psi=\left[-iH_S/\hbar+A-\langle A\rangle_\psi\right]\psi
\label{eq:detnorm}
\end{equation}
for the normalised $\psi=\phi/\|\phi\|$. The subtraction keeps the total squared amplitude fixed while its distribution among alternatives changes. It does not make the equation stochastic, and even when $A$ fluctuates one must still say how often its histories occur and what systematic response accompanies them. Charge conservation determines neither the number of effective noise channels nor their correlations. The GTD Born-rule analysis identifies a conditional noise--drift relation \cite{gtdborn}, the interaction analysis supplies a candidate coupling \cite{cyclic}, and the monograph distinguishes its real equilibrium Hamiltonian from the additional projection needed for an ASA collapse generator \cite[Secs.~7.3.6 and 9.2--9.4]{book}. We preserve that distinction throughout.

Mass-proportional CSL \cite{grw,csl,pearlesquires,bassi} uses a smeared mass-density operator on a background spacetime; our effective variables distinguish collective alternatives that need not initially denote positions, which lets the selection problem be posed before a metric is assigned but does not by itself explain spatial localisation. The probability construction below is drawn from established collapse and continuous-measurement theory \cite{pearle76,gisin84,wiseman,bouten,diosi88}; it adds no new solution to the measurement problem, and its purpose is to state an exact target for the GTD projection.

\section{The effective collapse law}\label{sec:joint}
We now assume a finite effective Hilbert space with a positive inner product and a set of stable alternatives, and build the simplest law that assigns, together, a change of state and the frequency with which that change occurs. During the selection calculation the alternatives are not mixed by the Hamiltonian, which isolates the reduction of an already formed superposition.

\subsection{One step: an update and its frequency}
Choose commuting self-adjoint operators $L_a$, $a=1,\ldots,m$, with joint eigenvectors $|j\rangle$, projectors $P_j$, and eigenvalue vectors $\ell_j=(\ell_{1j},\ldots,\ell_{mj})$. Think of $\ell_j$ as the list of numbers by which the fluctuations tell alternative $j$ apart; two alternatives with the same list cannot be distinguished by this mechanism. The operators stay fixed. All state dependence will come from the evolving weights, never from changing the apparatus according to a distant observer's conditioned state.

Let $q_h(\dd y)$ be the Gaussian measure of mean zero and covariance $h\id$ on $\mathbb R^m$ and define
\begin{equation}
M_h(y)=\exp\Bigl(\sum_aL_ay_a-h\sum_aL_a^2\Bigr),\qquad
\int M_h(y)^\dagger M_h(y)\,q_h(\dd y)=\id,
\label{eq:M}
\end{equation}
where $y$ is an $m$-component increment and $h$ a step in $\tau$; the completeness identity follows from $\int e^{2\ell_j\cdot y-2|\ell_j|^2h}q_h(\dd y)=1$. The physical operation is assumed to be the instrument
\begin{equation}
\mathcal I_{\dd y}(\rho)=M_h(y)\rho M_h(y)^\dagger\,q_h(\dd y),
\label{eq:instrument}
\end{equation}
which specifies both the probability $\Tr\mathcal I_{\dd y}(\rho)$ of an increment in $\dd y$ and the conditional state after it. Completeness makes the probabilities sum to one, also for a system entangled with something else. This operational trace rule is a premise: it already contains the Born prescription, and nothing below derives that prescription from calling $y$ noise. For record weights $p_j$, completing the square gives the exact joint law
\begin{align}
\mathbb P(\dd y\mid p)&=\sum_jp_j\,\mathcal N(2\ell_jh,h\id)(\dd y),\label{eq:mix}\\
p'_j&=\frac{p_j\exp(2\ell_j\cdot y-2|\ell_j|^2h)}
{\sum_kp_k\exp(2\ell_k\cdot y-2|\ell_k|^2h)}.\label{eq:update}
\end{align}
The first line says how often an increment occurs, the second what it does to the weights; the second is Bayes' rule with the increment as data. Multiplying \eqref{eq:update} by the mixture density in \eqref{eq:mix} cancels the denominator and leaves $p_j$ times a normalised Gaussian, so
\begin{equation}
\E[p'_j\mid p]=p_j,\qquad \E[y\mid p]=2\bar\ell h,\qquad
\operatorname{Cov}(y\mid p)=h\id+4h^2\Bigl(\sum_jp_j\ell_j\ell_j^T-\bar\ell\bar\ell^T\Bigr),
\qquad\bar\ell=\sum_jp_j\ell_j.\label{eq:moments}
\end{equation}
Figure~\ref{fig:joint} shows the two parts of one step. Replacing the physical increment distribution by the reference Gaussian while keeping the update destroys the first identity: the mean weight is preserved only because the histories are distributed as \eqref{eq:mix}. This is the distinction between the ``ostensible'' and the actual distribution of the noise \cite{wiseman96}, which Section~\ref{sec:normdrift} makes quantitative.

\begin{figure}[!htbp]
\centering
\includegraphics[width=\textwidth]{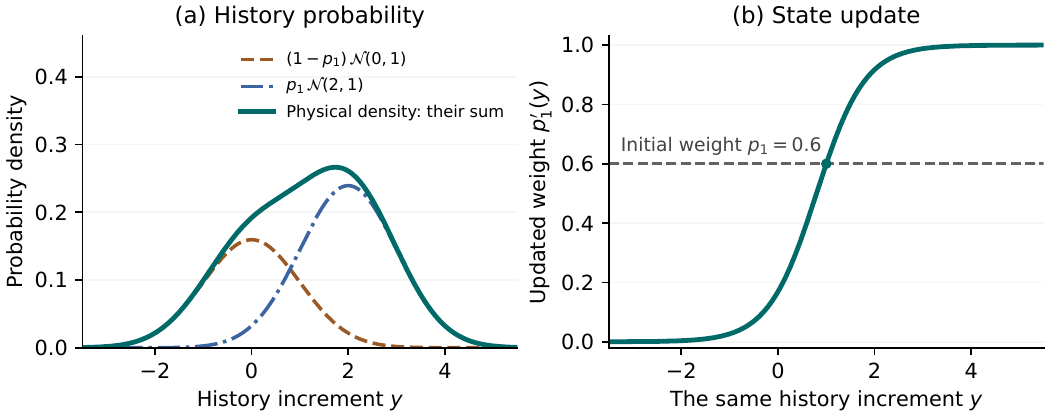}
\caption{One Gaussian step has two parts. In rescaled units $h=1$, $\ell_0=0$, $\ell_1=1$, $p_1=0.6$. (a)~The physical density of the history increment, with its two weighted components. (b)~How that same increment changes the record weight. The posterior varies from run to run, but its average over the density in (a) is exactly $0.6$. Changing the history density while keeping the update changes that average.}\label{fig:joint}
\end{figure}

\subsection{Continuous limit, martingale weights, and selection}\label{sec:continuous}
Repeated steps have a diffusion limit. With $\bar\ell_a=\langle L_a\rangle$ and $W_a$ the innovation Wiener processes under the physical law,
\begin{align}
\dd Y_a&=2\bar\ell_a\dd\tau+\dd W_a,\label{eq:Y}\\
\dd p_j&=\sum_ab_{ja}(p)\dd W_a,\qquad b_{ja}(p)=2p_j(\ell_{aj}-\bar\ell_a),\label{eq:p}\\
\dd\psi&=\Bigl[-\frac{iH}{\hbar}-\frac12\sum_a(L_a-\langle L_a\rangle)^2\Bigr]\psi\dd\tau
+\sum_a(L_a-\langle L_a\rangle)\psi\dd W_a,\label{eq:SSE}
\end{align}
with $H$ diagonal in the record basis, or zero, for the weight equation and the selection proof. Equation~\eqref{eq:SSE} is the CSL-type stochastic Schr\"odinger equation \cite{csl,gisin84,diosi88}; its nonlinear subtractions keep the norm constant, and \eqref{eq:p} has no drift, so each weight is a bounded martingale. The history $Y$ carries a conditional signal $2\bar\ell$ and $W$ is its unpredictable remainder.

The standard way to make the probability premise explicit \cite{wiseman,bouten} is a reference measure $\mathbb Q$ under which the $Y_a$ are independent Wiener processes. The linear equation $\dd\phi=K\phi\dd\tau+\sum_aL_a\phi\dd Y_a$ has a norm with zero reference drift for every state exactly when $K+K^\dagger+\sum_aL_a^\dagger L_a=0$; with $K=-iH/\hbar-\frac12\sum_aL_a^2$ and
\begin{equation}
\dd\mathbb P|_{\mathcal F_\tau}=\|\phi_\tau\|^2\,\dd\mathbb Q|_{\mathcal F_\tau},\label{eq:measure}
\end{equation}
normalisation and the change of drift give \eqref{eq:SSE}. Appendix~\ref{app:joint} records the joint generator and the consistency of the finite-time laws. A microscopic derivation cannot compute some other law and then impose \eqref{eq:measure} by hand; the weight in \eqref{eq:measure} is the content of the Born premise.

Preserving the mean weights is necessary but not sufficient: a state that never moved would preserve them too. For constant $L_a$ the history law has the latent representation ``draw a record $j^\star$ with weights $p_j(0)$, then $Y_\tau=2\ell_{j^\star}\tau+\widetilde W_\tau$'' with $\widetilde W$ a Wiener process independent of $j^\star$; this is the exact Girsanov solution familiar from energy-driven reduction \cite{hughston,abbh}, and $\widetilde W$ is not the innovation $W$. Substituting in \eqref{eq:update},
\begin{equation}
\log\frac{p_j(\tau)}{p_{j^\star}(\tau)}
=\log\frac{p_j(0)}{p_{j^\star}(0)}
+2(\ell_j-\ell_{j^\star})\cdot\widetilde W_\tau-2|\ell_j-\ell_{j^\star}|^2\tau.
\label{eq:selection}
\end{equation}
Since $\widetilde W_\tau/\tau\to0$ almost surely, every weight whose signature differs from $\ell_{j^\star}$ vanishes: one of finitely many distinct eigenvalue vectors is selected asymptotically, with probability equal to its initial weight. Degenerate signatures stay unresolved. The latent $j^\star$ is a representation of the assumed history law, not a hidden GTD record; the result shows that the Born premise is consistent with dynamical selection, not that it is derived. We call \eqref{eq:SSE} the CSL law below.

\subsection{Norm conservation does not fix the drift: the bare law solved exactly}\label{sec:normdrift}
Suppose a positive normalised state has emerged, by whatever mechanism, and ask what conserving its norm requires of its stochastic dynamics. For an It\^o equation $\dd\psi=f(\psi)\dd\tau+\sum_aB_a(\psi)\dd W_a$ on the unit sphere, the product rule gives the necessary and sufficient local conditions
\begin{equation}
\operatorname{Re}\langle\psi,B_a\rangle=0,\qquad
2\operatorname{Re}\langle\psi,f\rangle+\sum_a\|B_a\|^2=0.
\label{eq:norm_conditions}
\end{equation}
The noise must start out tangent, and the drift must have an inward component to compensate the order-$\dd\tau$ contribution of squared increments. Neither condition says anything about a drift tangent to the sphere, and a tangent drift is what moves the state between alternatives.

The simplest norm-preserving law one can write is the one a coarse-graining produces if it merely centres a fluctuating gain: let $A$ in \eqref{eq:detnorm} be $\sum_aL_a\xi_a(\tau)$ with smooth noises $\xi_a$ of short correlation time, so that $\dot\psi=\sum_a\xi_aC_a\psi$ with $C_a=L_a-\langle L_a\rangle$, and take the white-noise limit, which by the Wong--Zakai theorem is the Stratonovich equation (the L\'evy-area correction of the multidimensional limit vanishes because the noise vector fields commute for commuting $L_a$)
\begin{equation}
\dd\psi=\sum_aC_a\psi\circ\dd W_a .
\label{eq:bare_tangent_noise}
\end{equation}
It preserves the norm, since $\langle\psi,C_a\psi\rangle=0$ and the ordinary chain rule holds. We call it the bare law. It is the normalisation of the linear law $\dd\phi=\sum_aL_a\phi\circ\dd W_a$, whose solution for commuting $L_a$ is $\phi_\tau=e^{\sum_aL_aW_a(\tau)}\phi_0$: centred noise exponentiated, with neither the counter-term $-\frac12\sum_aL_a^2$ of the CSL update nor the reweighting \eqref{eq:measure} of its noise. Its exact solution shows what the absence of those two ingredients costs.

\begin{proposition}[The bare law never selects]\label{prop:bare}
Let the $L_a$ commute, with joint eigenvectors $|j\rangle$ and eigenvalue vectors $\ell_j$, and let $p_j=|\langle j|\psi\rangle|^2$ with all $p_j(0)>0$. Under \eqref{eq:bare_tangent_noise}, for every pair of records,
\begin{equation}
\log\frac{p_j(\tau)}{p_k(\tau)}=\log\frac{p_j(0)}{p_k(0)}+2(\ell_j-\ell_k)\cdot W_\tau
\qquad\text{exactly.}
\label{eq:bare_exact}
\end{equation}
Consequently:
\begin{enumerate}
\item[(i)] each log-ratio with $\ell_j\neq\ell_k$ is a driftless one-dimensional Brownian motion, hence recurrent: it returns to zero infinitely often, so no pair of distinguishable records is ever permanently resolved;
\item[(ii)] for two records with $\delta=\ell_1-\ell_0$ and $p=p_1$, writing $\logit p:=\log[p/(1-p)]$ for the log-odds, $\;\logit p(\tau)=\logit p_0+2\delta\cdot W_\tau$; the difference between $p(\tau)$ and the indicator of $\{\delta\cdot W_\tau>0\}$ tends to zero in probability, so $p(\tau)$ converges in law to a fair Bernoulli variable for every $p_0\in(0,1)$, and $\E p(\tau)-\tfrac12\simeq\logit p_0/(2|\delta|\sqrt{2\pi\tau})\to0$;
\item[(iii)] the It\^o form of the two-record law is
$\dd p=2|\delta|^2p(1-p)(1-2p)\dd\tau+2p(1-p)\,\delta\cdot\dd W$.
\end{enumerate}
\end{proposition}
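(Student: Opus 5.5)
The plan is to solve the bare law explicitly through its linear lift, and then read off (i)--(iii) from standard facts about Brownian motion. The paper has already noted that \eqref{eq:bare_tangent_noise} is the normalisation of the Stratonovich linear law $\dd\phi=\sum_aL_a\phi\circ\dd W_a$, and I would verify this directly first. Set $\psi=\phi/\|\phi\|$ and use the ordinary chain rule, which holds in Stratonovich form. The vector field then becomes $L_a\psi-\operatorname{Re}\langle\psi,L_a\psi\rangle\psi=C_a\psi$, since the $L_a$ are self-adjoint. For commuting $L_a$ the linear equation integrates to $\phi_\tau=\exp(\sum_aL_aW_a(\tau))\phi_0$, because the Stratonovich calculus obeys the ordinary rules and the exponents commute. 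Pathwise uniqueness of the Stratonovich SDE on the sphere, whose coefficients are smooth with bounded derivatives there, then identifies $\psi_\tau$ with $\phi_\tau/\|\phi_\tau\|$.

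Projecting onto $|j\rangle$ gives $\langle j|\phi_\tau\rangle=e^{\ell_j\cdot W_\tau}\langle j|\phi_0\rangle$. Hence $p_j(\tau)=p_j(0)e^{2\ell_j\cdot W_\tau}/\sum_kp_k(0)e^{2\ell_k\cdot W_\tau}$. Taking the ratio cancels the normaliser, and this yields \eqref{eq:bare_exact}.

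For (i), $2(\ell_j-\ell_k)\cdot W_\tau$ is a centred Gaussian process with independent increments and variance $4|\ell_j-\ell_k|^2\tau$. It is therefore a scaled one-dimensional Brownian motion, and recurrence of one-dimensional Brownian motion gives infinitely many returns of the log-ratio to $0$, on a sequence of times tending to infinity, almost surely.

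For (ii), (i) reads $\logit p(\tau)=\logit p_0+2\delta\cdot W_\tau$. Write $\delta\cdot W_\tau=|\delta|\sqrt\tau\,Z$ with $Z$ standard normal. Since $\logit p_0$ is fixed, $p(\tau)$ differs from $\mathbf 1\{Z>0\}$ only when $|Z|$ is small. Concretely, for any $\varepsilon$, $|p-\mathbf 1\{\delta\cdot W>0\}|>\varepsilon$ forces $|Z|\lesssim(|\logit p_0|+\logit(1-\varepsilon))/(|\delta|\sqrt\tau)$, whose probability tends to $0$. This gives convergence in probability, hence convergence in law to Bernoulli$(1/2)$.

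The mean asymptotic in (ii) is the only delicate step. I would write $\E p(\tau)=\int\sigma(c+s z)\varphi(z)\dd z$, where $\sigma$ is the logistic function, $c=\logit p_0$ and $s=2|\delta|\sqrt\tau$. Substituting $x=c+sz$ gives $\E p=s^{-1}\int\sigma(x)\varphi((x-c)/s)\dd x$. Then use $\sigma(x)-\mathbf 1\{x>0\}$, which is odd and integrable with integral zero against constants, together with the first-order expansion of $\varphi$ near zero. The leading term comes from the half-line integral $s^{-1}\int_0^\infty\varphi((x-c)/s)\dd x=\Phi(c/s)\approx\frac12+c/(s\sqrt{2\pi})$. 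The correction from the odd part is $O(s^{-3})$, because its pairing with the even part of $\varphi$ vanishes and its pairing with the odd part of $\varphi$ is $O(c/s^3)$. The result is $\E p-\tfrac12\simeq c/(2|\delta|\sqrt{2\pi\tau})$, as stated.

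For (iii), apply Itô's formula to $p=\sigma(X)$ with $\dd X=2\delta\cdot\dd W$. Using $\sigma'=p(1-p)$ and $\sigma''=p(1-p)(1-2p)$, and $\dd X\,\dd X=4|\delta|^2\dd\tau$, this gives the drift $\tfrac12\cdot4|\delta|^2p(1-p)(1-2p)=2|\delta|^2p(1-p)(1-2p)$ and the noise $2p(1-p)\,\delta\cdot\dd W$. This is exactly the stated law.
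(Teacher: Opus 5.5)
Your proof is correct. For (i)--(iii) it follows the paper's lines: recurrence of a nonzero linear combination of independent Wiener processes, the logistic function minus the step function as an odd integrable correction, and It\^o's formula for $\logit^{-1}$. Your treatment of (ii) also supplies details the paper only asserts. These are the explicit small-$|Z|$ event for convergence in probability, and the even/odd splitting that gives the $O(s^{-3})$ remainder. For the latter, the odd part of $x\mapsto\varphi((x-c)/s)$ is bounded by a constant times $|c||x|/s^2$, and $\int|x|\,|\sigma(x)-\mathbf 1\{x>0\}|\,\dd x<\infty$; that is exactly what your last claim needs.

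The genuine difference is how you obtain \eqref{eq:bare_exact}. The paper stays with the nonlinear equation. In the record basis it reads $\dd c_j=\sum_a(\ell_{aj}-\bar\ell_a)c_j\circ\dd W_a$ with real coefficients, so the phases are frozen and the Stratonovich chain rule gives $\dd\log|c_j|$. In the difference of two such equations the state-dependent $\bar\ell_a$ cancels, leaving a constant-coefficient integral on which It\^o and Stratonovich agree. You instead solve the linear lift $\phi_\tau=e^{\sum_aL_aW_a(\tau)}\phi_0$ and use pathwise uniqueness to identify $\psi_\tau$ with $\phi_\tau/\|\phi_\tau\|$.

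The paper's route applies to any solution and needs no existence or uniqueness theory. It only needs the amplitudes to stay nonzero, which their linear equation guarantees, though the paper leaves this implicit. Your route needs the uniqueness step. That step is routine here, since the coefficients are locally Lipschitz and both candidate solutions stay on the compact unit sphere of a finite-dimensional space. In return it gives the full closed form $p_j(\tau)=p_j(0)e^{2\ell_j\cdot W_\tau}/\sum_kp_k(0)e^{2\ell_k\cdot W_\tau}$, with positivity of every weight automatic. It also turns the paper's remark that the bare law is the normalisation of the linear law into a proved step rather than a side comment.
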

\begin{proof}
In the record basis \eqref{eq:bare_tangent_noise} reads $\dd c_j=\sum_a(\ell_{aj}-\bar\ell_a)c_j\circ\dd W_a$, with real coefficients, so the phases of the $c_j$ are constant and the Stratonovich chain rule gives $\dd\log|c_j|=\sum_a(\ell_{aj}-\bar\ell_a)\circ\dd W_a$. In the difference of two such equations the state-dependent $\bar\ell_a$ cancels and the coefficient is constant, so the Stratonovich and It\^o integrals coincide; doubling gives \eqref{eq:bare_exact}. (i) A fixed nonzero linear combination of independent Wiener processes is a Brownian motion, which in one dimension is recurrent \cite{mortersperes}. (ii) The Gaussian $2\delta\cdot W_\tau$ has variance $4|\delta|^2\tau\to\infty$, and the logistic function differs from the step function by an odd integrable function; the expansion of $\E\,\logit^{-1}(x_0+sZ)$ for $Z$ standard normal and large $s$ gives $\tfrac12+x_0/(s\sqrt{2\pi})+O(s^{-3})$. (iii) It\^o's formula for $p=\logit^{-1}(g)$ with $\dd g=2\delta\cdot\dd W$.
\end{proof}

\begin{figure}[!htbp]
\centering
\includegraphics[width=\textwidth]{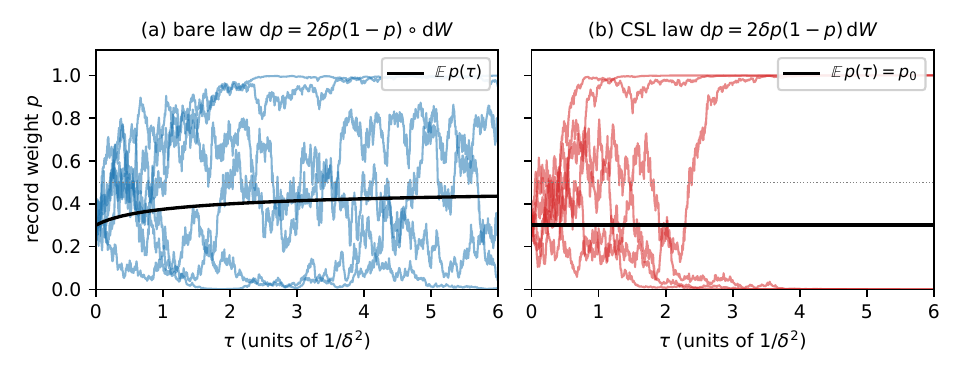}
\caption{Six sample paths of the record weight from $p_0=0.3$ with $\delta=1$; the circle in (a) denotes a Stratonovich differential, and the absence of a drift term in (b) is in the It\^o sense. (a)~The bare law \eqref{eq:bare_tangent_noise}, drawn from its exact solution \eqref{eq:bare_exact} for the log-odds $\logit p=\log[p/(1-p)]$: paths wander between the two records and recross $p=1/2$ (dotted); the mean weight (black) climbs from $0.3$ towards $1/2$, reaching $0.435$ at $\tau=6$. (b)~The CSL law \eqref{eq:SSE}: each path settles on one record, in the ensemble a fraction $p_0$ of them on record~1, and the mean weight is exactly $p_0$. The two laws have the same noise field and the same conserved norm and differ only by the tangent drift \eqref{eq:tangent_response}; the paths in (b) use independent noise realisations.}\label{fig:bare}
\end{figure}

Figure~\ref{fig:bare} shows the contrast. Under the bare law the late-time state is usually near one record or the other, because $|W_\tau|$ grows, but which one keeps changing, and the two are visited with equal frequency whatever the initial weights; the fraction of a long run spent near either record follows L\'evy's arcsine law rather than converging to a constant \cite{mortersperes}. Under the CSL law, by \eqref{eq:selection}, the log-ratio of any record against the selected one is a Brownian motion with drift $-2|\ell_j-\ell_{j^\star}|^2$, transient, and the selected record is drawn with Born frequencies. In the two-record case the two laws differ precisely by a drift in the log-odds,
\begin{equation}
\dd\logit p=2\delta\cdot\dd W\ \ (\text{bare}),\qquad
\dd\logit p=2\delta\cdot\dd W+2|\delta|^2(2p-1)\dd\tau\ \ (\text{CSL}),
\label{eq:logit_compare}
\end{equation}
equivalently by the It\^o drift $-2|\delta|^2p(1-p)(1-2p)$ in $p$, which at $p=0.3$ is $-0.168\,|\delta|^2$. That drift is what a coarse-graining must supply in addition to the centred noise. That a norm-preserving law whose noise carries no Born weighting fails to reproduce Born statistics is known \cite{pearle89,wiseman96}; the exact solution adds that it does not select either.

In state-space language the missing term is a tangent vector field. For one channel, write $\ell_\psi=\langle\psi,L\psi\rangle$, $C_\psi=L-\ell_\psi$ and $\Delta_\psi^2=\langle\psi,C_\psi^2\psi\rangle$ for the variance. Converting \eqref{eq:bare_tangent_noise} to It\^o form requires the derivative of the noise field along itself; since the directional derivative of $\ell_\psi$ along $C_\psi\psi$ is $2\Delta_\psi^2$,
\begin{equation}
D(C_\psi\psi)[C_\psi\psi]=C_\psi^2\psi-2\Delta_\psi^2\psi,\qquad
f_0(\psi)=\tfrac12C_\psi^2\psi-\Delta_\psi^2\psi,
\label{eq:bare_ito_drift}
\end{equation}
whereas the CSL drift in \eqref{eq:SSE} is $f_{\rm CSL}=-\tfrac12C_\psi^2\psi$. Their difference,
\begin{equation}
R(\psi):=f_{\rm CSL}-f_0=\bigl(\Delta_\psi^2\,\id-C_\psi^2\bigr)\psi,\qquad\langle\psi,R(\psi)\rangle=0,
\label{eq:tangent_response}
\end{equation}
is tangent to the sphere; for several commuting channels $R=\sum_a(\Delta_{a,\psi}^2-C_a^2)\psi$, without cross terms. Both $f_0$ and $f_{\rm CSL}$ satisfy \eqref{eq:norm_conditions}. Norm conservation therefore cannot distinguish a law that never selects from one that selects with Born frequencies; the distinction lives in a response that conserving the norm does not see. This is the state-space form of the noise--drift condition of Ref.~\cite{gtdborn}: if a derived two-record process has the Stratonovich form $\dd p=a(p)\dd\tau+g(p)\circ\dd W$, mean preservation requires $a=-\frac12gg'$, and for $g=2|\delta|p(1-p)$ this is $-2|\delta|^2p(1-p)(1-2p)$, which is what \eqref{eq:tangent_response} supplies. The condition is a test for a derived law, not a notational convention; a coarse-graining can produce it, fail it, or produce memory instead.

\begin{remark}[The finite-time version]\label{rem:gain}
The same failure appears whenever a history $y$ acts over a fixed time through $p$-independent diagonal multipliers $M_1(y)$, $M_0(y)$ with a $p$-independent law $\nu$, so that $p_y=pr/(1-p+pr)$ with $r=|M_1|^2/|M_0|^2$. If $\E_\nu p_y=p$ for every $p$, then $r=1$ almost surely: with $F(p)=\E[pr/(1-p+pr)]$, differentiation under the expectation (the integrand's $p$-derivative is bounded by $1/[4p(1-p)]$) gives $F'(\frac12)=\E\,4r/(1+r)^2=1-\E\bigl((r-1)/(r+1)\bigr)^2$, and $F(p)=p$ forces the last term to vanish. For instance $r\in\{3,\frac13\}$ with equal probability sends $p=0.8$ to $\E p_y=0.747$. Proposition~\ref{prop:bare} is the continuous-time case, $r=e^{2\delta\cdot W_\tau}$. Two normalised updates are legitimate states; what fails is their frequency assignment, which is why normalisation alone cannot settle the Born problem. That a static distribution over a state-independent gain cannot reproduce the Born rule is stated in Ref.~\cite{gtdborn}; the computation above is its finite-time proof. Ref.~\cite{kakade} reaches the CSL form from a deterministic nonunitary law by adopting an It\^o noise ansatz at the coarse-grained level; the present result shows that the ansatz is doing the work, since without it the natural limit is the bare law. Ref.~\cite{projectvii} strengthens Proposition~\ref{prop:bare} in the two-record case: with a fraction $\alpha$ of the required drift the process still completes for every $\alpha>0$, with exit law $I_{p_0}(\alpha,\alpha)$, and it is Born only at $\alpha=1$. Section~\ref{sec:history} identifies $\alpha$ with a response gain.
\end{remark}

A further comparison will be needed in Section~\ref{sec:failures}. For the same $L$, the random-phase law $\dd\psi=-iL\psi\dd W-\frac12L^2\psi\dd\tau$, solved by $\psi_\tau=e^{-iLW_\tau}\psi_0$, preserves the norm and every weight and selects nothing, yet it has the same ensemble equation $\dot\rho=-\frac12[L,[L,\rho]]$ as the CSL law \cite{wiseman,bouten}. Decay of averaged interference cannot tell these histories apart. A microscopic derivation has to determine the realised histories, not only their average; Section~\ref{sec:failures} shows that this freedom is also where state dependence can hide.

\subsection{The same condition in history form: counter-term and response gain}\label{sec:history}
A microscopic derivation does not produce $\psi$ directly. It produces a c-number history---in GTD, the integrated coefficient of Section~\ref{sec:diagnostics}---and a linear law for the unnormalised state driven by that history; the normalised state and the record weights follow. It is therefore useful to state the Born condition entirely in terms of what such a derivation has to deliver: the deterministic part of the linear law and the conditional law of the history. Equation~\eqref{eq:Y} already contains the answer for the CSL case, $\dd Y_a=2\bar\ell_a\dd\tau+\dd W_a$. The following proposition separates the two ingredients and quantifies what partial compliance costs.

\begin{proposition}[The Born condition in history form]\label{prop:history}
Let the $L_a$ commute with eigenvalue vectors $\ell_j$ on the records $|j\rangle$, and let the unnormalised state obey the linear law
\begin{equation}
\dd\phi=\Bigl[-\frac\varkappa2\sum_{ab}\mathsf D_{ab}L_aL_b\,\dd\tau+\sum_aL_a\dd Y_a\Bigr]\phi,
\label{eq:linear_history}
\end{equation}
where the history $Y$ is a continuous semimartingale with quadratic covariation $\dd\langle Y_a,Y_b\rangle=\mathsf D_{ab}\dd\tau$, $\mathsf D$ constant and positive definite, and drift $\mathsf b_\tau\dd\tau$ with respect to the history's own filtration, $\mathsf b_\tau$ a function of the past of $\psi$. Let $\psi=\phi/\|\phi\|$, $p_j=|\langle j|\psi\rangle|^2$, $\bar\ell=\sum_kp_k\ell_k$ and $\langle\ell^T\mathsf D\ell\rangle=\sum_kp_k\ell_k^T\mathsf D\ell_k$. Then:
\begin{enumerate}
\item[(i)] the It\^o drift of each record weight is
\begin{equation}
\frac{\E[\dd p_j\mid\mathcal F_\tau]}{\dd\tau}=p_j\Bigl[2(\ell_j-\bar\ell)\cdot\bigl(\mathsf b_\tau-2\mathsf D\bar\ell\bigr)+(1-\varkappa)\bigl(\ell_j^T\mathsf D\ell_j-\langle\ell^T\mathsf D\ell\rangle\bigr)\Bigr];
\label{eq:history_drift}
\end{equation}
\item[(ii)] every weight is a martingale for every state if and only if $\mathsf b_\tau=2\mathsf D\bar\ell+c_\tau$ with $\ell\mapsto2\ell\cdot c_\tau+(1-\varkappa)\ell^T\mathsf D\ell$ constant on the records; components of $c_\tau$ orthogonal to all differences $\ell_j-\ell_k$ are immaterial. The counter-term coefficient is therefore forced to $\varkappa=1$ exactly when $\ell\mapsto\ell^T\mathsf D\ell$ is not an affine function on the set of records; three records with collinear eigenvalue vectors suffice, for instance one channel with three distinct eigenvalues and the other channels equal. It is never forced for two records, nor whenever $\ell_j^T\mathsf D\ell_j$ takes one value on all records, as for the local detector channels of Section~\ref{sec:solvay} with equal diffusion, $\ell_j=e_j$, and more generally whenever the records are affinely independent, since a constant $c$ then absorbs any $\varkappa$. The CSL law is $\varkappa=1$, $\mathsf b=2\mathsf D\bar\ell$; the bare law is $\varkappa=-1$, $\mathsf b=0$; and $\varkappa=-1$ is what the white-noise limit of any smooth real drive produces, with any gain;
\item[(iii)] the one-parameter family that interpolates between the bare law and the CSL law covariantly under $L_a\mapsto L_a+s_a\id$ is $\varkappa=2\alpha-1$, $\mathsf b=2\alpha\mathsf D\bar\ell$, with drift
\begin{equation}
\frac{\E[\dd p_j\mid\mathcal F_\tau]}{\dd\tau}=2(1-\alpha)\,p_j\Bigl[(\ell_j-\bar\ell)^T\mathsf D(\ell_j-\bar\ell)-\bigl\langle(\ell-\bar\ell)^T\mathsf D(\ell-\bar\ell)\bigr\rangle\Bigr];
\label{eq:fractional_drift}
\end{equation}
for two records with $\delta=\ell_1-\ell_0$ this is $2(1-\alpha)\,\delta^T\mathsf D\delta\;p(1-p)(1-2p)$, the fractional-response family of Ref.~\cite{projectvii}, which completes for every $\alpha>0$ with exit law $I_{p_0}(\alpha,\alpha)$, is recurrent for $\alpha\leq0$, and is Born only at $\alpha=1$; $\alpha=0$ is Proposition~\ref{prop:bare}. With $\varkappa$ held at $1$ and only the gain reduced to $2\alpha\mathsf D\bar\ell$, the drift $4(\alpha-1)p_j(\ell_j-\bar\ell)^T\mathsf D\bar\ell$ depends on the zero of the $L_a$ and agrees with \eqref{eq:fractional_drift} only when $\ell_j^T\mathsf D\ell_j$ is the same on all records, for two records when the eigenvalues are symmetric about zero;
\item[(iv)] for $\alpha\neq1$ the ensemble evolution is not affine in $\rho$: a mixture of record eigenstates is stationary, while a mixture of superpositions with the same weights has $\E p_j$ drifting by \eqref{eq:fractional_drift}; two preparations of the same density matrix therefore evolve differently, and the law signals by the construction of Section~\ref{sec:failures}.
\end{enumerate}
\end{proposition}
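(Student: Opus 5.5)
The plan is a direct It\^o computation in the record basis, in the style of the proof of Proposition~\ref{prop:bare}. Write $\phi=\sum_jc_j|j\rangle$, so that \eqref{eq:linear_history} decouples into scalar equations
\[
\dd c_j=\Bigl[-\tfrac\varkappa2\ell_j^T\mathsf D\ell_j\,\dd\tau+\ell_j\cdot\dd Y\Bigr]c_j .
\]
The phases are constant, and It\^o's formula for $n_j=|c_j|^2$ gives
\[
\dd n_j=n_j\bigl[(1-\varkappa)\ell_j^T\mathsf D\ell_j\,\dd\tau+2\ell_j\cdot\dd Y\bigr],
\]
using $\dd\langle Y_a,Y_b\rangle=\mathsf D_{ab}\dd\tau$. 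Summing over $j$ gives $\dd\|\phi\|^2$. Applying It\^o's quotient rule to $p_j=n_j/\sum_kn_k$ produces four contributions:
\begin{itemize}
\item the $(1-\varkappa)$ terms, which combine to $p_j(1-\varkappa)(\ell_j^T\mathsf D\ell_j-\langle\ell^T\mathsf D\ell\rangle)$;
\item the linear $\dd Y$ terms, which give $2p_j(\ell_j-\bar\ell)\cdot\dd Y$;
\item the cross-variation of numerator and denominator, $-4p_j\,\ell_j^T\mathsf D\bar\ell\,\dd\tau$;
\item the squared-denominator term, $+4p_j\,\bar\ell^T\mathsf D\bar\ell\,\dd\tau$.
\end{itemize}
The last two together are $-4p_j(\ell_j-\bar\ell)^T\mathsf D\bar\ell\,\dd\tau$. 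Taking the conditional expectation replaces $\dd Y$ by $\mathsf b_\tau\dd\tau$, which gives \eqref{eq:history_drift}; the symmetry of $\mathsf D$ is used here. This proves (i).

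For (ii), set $c_\tau=\mathsf b_\tau-2\mathsf D\bar\ell$. The drift in \eqref{eq:history_drift} is then $p_j[f(\ell_j)-\sum_kp_kf(\ell_k)]$ with $f(\ell)=2\ell\cdot c_\tau+(1-\varkappa)\ell^T\mathsf D\ell$. Requiring it to vanish for all states with full support forces $f$ to be constant on the records; sufficiency is immediate. Components of $c$ orthogonal to every $\ell_j-\ell_k$ shift $f$ by a constant and so are immaterial.

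For the $\varkappa$ dichotomy, the argument is a short linear-algebra one. If $\ell^T\mathsf D\ell$ restricted to the records is affine, say $\ell^T\mathsf D\ell=\beta+\gamma\cdot\ell$, choose $c=-(1-\varkappa)\gamma/2$; this works for any $\varkappa$. This case covers two records, constant $\ell_j^T\mathsf D\ell_j$, and affinely independent records, since any function on affinely independent points is affine. Conversely, if the function is not affine, then $f$ constant with $\varkappa\ne1$ would exhibit it as affine, so $\varkappa=1$ is forced. For three collinear records $\ell_j=\ell_0+t_jv$ with distinct $t_j$, $\ell^T\mathsf D\ell$ is a quadratic in $t$ with leading coefficient $v^T\mathsf D v>0$, hence not affine.

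The identification of the bare law as $\varkappa=-1$, $\mathsf b=0$ comes from converting the Stratonovich equation $\dd\phi=\sum L_a\phi\circ\dd W$ to It\^o form, which produces $+\tfrac12\sum L_a^2$. The same Wong--Zakai conversion applies to any smooth real drive, giving $\varkappa=-1$ whatever its drift. The CSL case is read off from \eqref{eq:Y}.

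For (iii), substitute $\varkappa=2\alpha-1$ and $\mathsf b=2\alpha\mathsf D\bar\ell$ into \eqref{eq:history_drift}. This gives
\[
p_j\bigl[4(\alpha-1)(\ell_j-\bar\ell)^T\mathsf D\bar\ell+2(1-\alpha)(\ell_j^T\mathsf D\ell_j-\langle\ell^T\mathsf D\ell\rangle)\bigr].
\]
Completing the square, using $\langle\ell^T\mathsf D\ell\rangle-\bar\ell^T\mathsf D\bar\ell=\langle(\ell-\bar\ell)^T\mathsf D(\ell-\bar\ell)\rangle$, yields \eqref{eq:fractional_drift}. That expression depends only on $\ell-\bar\ell$, which gives the covariance under shifts $L_a\mapsto L_a+s_a\id$. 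For uniqueness of this family within the two-parameter ansatz, I would observe that shift invariance forces the ratio between the gain and the counter-term to be exactly this one.

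For two records, the bracket in \eqref{eq:fractional_drift} evaluates to $\delta^T\mathsf D\delta\,(1-p)(1-2p)$, which gives the stated drift. The exit law and completion statements are then quoted from Ref.~\cite{projectvii} via the scale function. The $\varkappa=1$ variant is read directly from \eqref{eq:history_drift}; it depends on the zero of the $L_a$ through $\bar\ell$, and it reduces to \eqref{eq:fractional_drift} exactly when $\ell_j^T\mathsf D\ell_j$ is constant on the records.

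Item (iv) is a two-preparation comparison. Eigenstates have $\ell_j-\bar\ell=0$ on their support, so the drift vanishes. A superposition with the same weights has nonzero drift by \eqref{eq:fractional_drift} whenever $\alpha\neq1$. The two ensembles have the same diagonal density matrix, so $\E p_j$ is not a function of $\rho$ alone, and the evolution is not affine in $\rho$. The signalling claim then follows from the steering construction of Section~\ref{sec:failures}, which I take as given.

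The main obstacle is the uniqueness claim in (iii) together with the precise "if and only if" in (ii). The It\^o algebra itself is routine; the delicate points are the meaning of "covariant" and handling the state dependence of $c_\tau$.
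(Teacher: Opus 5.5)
Your proposal is correct and is essentially the paper's proof. It uses the same record-basis decoupling with constant phases and an It\^o computation on the normalised weights; you apply the quotient rule to $|c_j|^2$ where the paper differentiates the softmax of $a_j=\log|\phi_j|^2$, which is equivalent. The affine-function criterion for $\varkappa$, the Wong--Zakai argument for $\varkappa=-1$, the completing of the square in (iii) and the two-preparation comparison in (iv) also match the paper.

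The step you call the main obstacle is one line. Under $\ell_j\mapsto\ell_j+s$, with $\mathsf b=2\alpha\mathsf D\bar\ell$ shifting too, the bracket in \eqref{eq:history_drift} changes by $2(\ell_j-\bar\ell)^T\mathsf Ds\,(2\alpha-1-\varkappa)$, which vanishes identically only for $\varkappa=2\alpha-1$. Two small repairs are needed elsewhere:
\begin{itemize}
\item In (iv), the fractional drift vanishes at $p=\tfrac12$ for two records, so you must choose $p\neq\tfrac12$ rather than assert a nonzero drift for every superposition.
\item The completion, exit-law and recurrence claims of (iii) should be derived rather than quoted. The log-odds obey $\dd y=2\delta\cdot\dd\widetilde W+2\alpha\,\delta^T\mathsf D\delta\tanh(y/2)\,\dd\tau$, with $\widetilde W=Y-\int\mathsf b\,\dd\tau$. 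The scale density $\operatorname{sech}^{2\alpha}(y/2)$ is integrable at both ends, giving $I_{p_0}(\alpha,\alpha)$, exactly when $\alpha>0$, and it is not integrable for $\alpha\leq0$, which gives recurrence.
\end{itemize}
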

\begin{proof}
In the record basis \eqref{eq:linear_history} gives $\dd\phi_j=\phi_j[\ell_j\cdot\dd Y-\frac\varkappa2\ell_j^T\mathsf D\ell_j\dd\tau]$ with real coefficients, so the phase of $\phi_j$ is constant and $a_j:=\log|\phi_j|^2$ obeys $\dd a_j=2\ell_j\cdot\dd Y-(1+\varkappa)\ell_j^T\mathsf D\ell_j\dd\tau$, $\dd\langle a_j,a_k\rangle=4\ell_j^T\mathsf D\ell_k\dd\tau$. It\^o's formula for $p_j=e^{a_j}/\sum_ke^{a_k}$, with $\partial p_j/\partial a_k=p_j(\delta_{jk}-p_k)$ and $\partial^2p_j/\partial a_k\partial a_l=p_j(\delta_{jk}-p_k)(\delta_{jl}-p_l)-p_jp_k(\delta_{kl}-p_l)$, gives
\[
\frac{\E[\dd p_j\mid\mathcal F_\tau]}{p_j\dd\tau}=2(\ell_j-\bar\ell)\cdot\mathsf b_\tau-(1+\varkappa)\bigl(\ell_j^T\mathsf D\ell_j-\langle\ell^T\mathsf D\ell\rangle\bigr)+2\ell_j^T\mathsf D\ell_j-4\ell_j^T\mathsf D\bar\ell+4\bar\ell^T\mathsf D\bar\ell-2\langle\ell^T\mathsf D\ell\rangle,
\]
which rearranges to \eqref{eq:history_drift}. (ii) Write $\mathsf b_\tau=2\mathsf D\bar\ell+c_\tau$. The bracket in \eqref{eq:history_drift} is $\varpi(\ell_j)-\sum_kp_k\varpi(\ell_k)$ with $\varpi(\ell)=2\ell\cdot c_\tau+(1-\varkappa)\ell^T\mathsf D\ell$, which vanishes for all $j$ if and only if $\varpi$ takes one value on all records. The linear part $2\ell\cdot c_\tau$ ranges over all linear functions of $\ell$, so $\varpi$ can be made constant on the records exactly when $(1-\varkappa)\ell^T\mathsf D\ell$ coincides on the records with some affine function, which is the criterion stated. If the records are affinely independent, every function on them is the restriction of an affine function, and a constant $c$ absorbs any $\varkappa$. If three records are collinear, $\ell_0+t_jd$, the restriction of $\varpi$ is a quadratic polynomial in $t$ with leading coefficient $(1-\varkappa)d^T\mathsf Dd$ taking one value at three points, so $\varkappa=1$ and then $c_\tau\cdot d=0$. The Stratonovich law $\dd\phi=\sum_aL_a\phi\circ\dd Y_a$, the Wong--Zakai limit of a smooth drive, has It\^o form \eqref{eq:linear_history} with $\varkappa=-1$ whatever the drift of $Y$, since the It\^o correction $\frac12\sum_{ab}\mathsf D_{ab}L_aL_b$ depends only on the quadratic covariation. (iii) Under $\ell_j\mapsto\ell_j+s$ the bracket in \eqref{eq:history_drift} changes by $2(\ell_j-\bar\ell)^T\mathsf Ds\,(2\alpha-1-\varkappa)$, which vanishes identically only for $\varkappa=2\alpha-1$; substituting $\varkappa=2\alpha-1$ and $\mathsf b=2\alpha\mathsf D\bar\ell$ and completing the square gives \eqref{eq:fractional_drift}. For two records $\ell_1-\bar\ell=\delta(1-p)$, $\ell_0-\bar\ell=-\delta p$ and $\langle(\ell-\bar\ell)^T\mathsf D(\ell-\bar\ell)\rangle=\delta^T\mathsf D\delta\,p(1-p)$. The log-odds $y=\log[p/(1-p)]$ then obey $\dd y=2\delta\cdot\dd\widetilde W+2\alpha\,\delta^T\mathsf D\delta\tanh(y/2)\dd\tau$ for any zero of $L$, with $\widetilde W$ of covariance $\mathsf D\dd\tau$ so that $\operatorname{Var}\dd y=4\delta^T\mathsf D\delta\,\dd\tau$; the scale density is $\operatorname{sech}^{2\alpha}(y/2)$, which is integrable at both ends for $\alpha>0$, giving transience with exit law $I_{p_0}(\alpha,\alpha)$ as in Theorem~2.4 of Ref.~\cite{projectvii} (their $\Gamma$ is $4\delta^T\mathsf D\delta$), and non-integrable for $\alpha\leq0$, giving recurrence. (iv) Record eigenstates are fixed points of \eqref{eq:linear_history} up to normalisation, so for them $\E p_j$ is constant; for a superposition with $0<p<1$ the drift in (iii) is nonzero unless $p=\frac12$; the rest is the argument around \eqref{eq:ensemble}.
\end{proof}

Three remarks. First, the proposition says what Proposition~\ref{prop:bare} said, now from the side a derivation sees: a norm-preserving law is one whose history has \emph{some} conditional law, and the Born rule is a statement about that law---the history must lean towards the leading record with gain $2\mathsf D$, and, when the records can see it, the linear law must carry its counter-term with unit coefficient. Second, the two ingredients are separately necessary and are tested by different experiments. The counter-term is invisible whenever $\ell_j^T\mathsf D\ell_j$ is the same for all records, as it is for two records with symmetric eigenvalues; a derivation checked only on such a case can have $\varkappa=-1$ and pass, and a history produced by a smooth coefficient has $\varkappa=-1$ whatever its gain. The ancillary script shows this on the Einstein locus of Section~\ref{sec:einstein}: with the gain right and $\varkappa=-1$, a projector $L$ with eigenvalues $1,0$ selects its eigenvalue-one record with probability one from any initial weights (the log-odds drift $4\mathsf Dp$ vanishes as the other record takes over, which makes that side recurrent), and for three records at $\ell=1,0,-1$ the combination $a_1+a_{-1}-2a_0$ of log-weights is conserved, so from weights $(0.3,0.4,0.3)$ the middle weight never exceeds $0.4$ and that record is never selected; restoring $\varkappa=1$ restores Born to within $0.03$ in the simulation. Third, the fractional law is not a harmless intermediate. By (iv) it is an ensemble nonlinearity, and Section~\ref{sec:failures} turns such nonlinearities into signals; so within this class no-signalling, linearity on ensembles and the Born gain $\alpha=1$ are one condition, which is the triangle stated in Ref.~\cite{gtdborn} with the gain now identified. The script verifies \eqref{eq:history_drift} by exact one-step quadrature for three records and the exit law $I_{p_0}(\alpha,\alpha)$ for $\alpha=\frac12,1,2$ to within $0.005$. Section~\ref{sec:einstein} asks what fixes $\alpha$ when the history comes from a sea.

\subsection{Which length? The geodesic proposal and the indefinite form}\label{sec:geodesicnorm}
There is a proposed geometric reason for norm conservation in GTD: the emergent state vector is compared with the velocity of a geodesic in matrix configuration space \cite[Sec.~4]{gtdborn}. The geometric fact behind this is exact. If $U=\dd Q/\dd\tau$ is an affinely parametrised geodesic velocity for a metric-compatible connection, then
\begin{equation}
\nabla_UU=0,\quad\nabla g=0\quad\Longrightarrow\quad
\frac{\dd}{\dd\tau}g(U,U)=2g(\nabla_UU,U)=0.
\label{eq:geodesic_norm}
\end{equation}
If a physical projection identified a positive version of $g(U,U)$ with $\|\psi\|^2$, normalisation would have a geometric origin. Section~\ref{sec:normdrift} has already shown that this would not settle how the normalised state moves: the Born-restoring response is tangent and invisible to the conserved length. A second difficulty concerns the length itself.

Suppose the configuration-space metric of the geodesic proposal is built from the split form \eqref{eq:splitnorm}, the one that carries the spacetime interpretation of Section~\ref{sec:splitform}; if it is built from the positive form instead, the connection with that interpretation is lost. Then $g(U,U)$ is indefinite: negative on the open cone $|u|>|x|$, zero on the null cone, positive on the open cone $|x|>|u|$. Geodesic flow preserves the value and hence the sign, so the positive cone is flow-invariant and carries a conserved positive length without further choices. That is not enough for a probability norm. A quantum norm is a positive-definite quadratic form on a \emph{linear} space of states: superpositions must have positive norm, and the weight of each component must be nonnegative. The positive cone is not closed under superposition, and a vector in it can have components of negative length---$g(U_1,U_1)=-1$, $g(U_2,U_2)=4$, $g(U_1,U_2)=0$ gives $g(U_1+U_2,U_1+U_2)=3$ with a component of weight $-1$. A positive-definite restriction of $g$ exists only on a subspace of at most half the dimension (three of the six directions of $V$, and correspondingly for matrix-valued velocities), such a subspace is not invariant under $SO(3,3)$ (its stabiliser is $S(O(3)\times O(3))$), and it is not preserved by the geodesic flow of a general metric unless the restriction is imposed as a constraint. Selecting it is a selection of the same kind as picking the Lorentzian leaf $-u_1^2+|x|^2$ in Section~\ref{sec:splitform}: the spatial directions of a leaf span a maximal positive subspace. It is natural to conjecture that the two selections are one and the same emergence step; if so, norm conservation cannot be used as an input to the emergence argument, because it is one of its outputs. The alternative is to build the metric from a positive trace form such as $\Tr(U^\dagger U)$; that length is positive and conserved for free motion, but it is invariant under the unitary group rather than $SO(3,3)$, it has no relation to the split geometry, and its conservation in the interacting theory is what is in question \cite[Secs.~7.3.6, 9.2.1]{book}. Either way the geodesic length is a candidate explanation to be completed by the projection, not a replacement for it. A conserved quantity for the full system also assigns no probabilities to conditioned subsystem histories; the correlator calculation of Ref.~\cite{correlator} supplies a coloured two-point function, not the joint law.

\section{Causal consistency of emergent records}\label{sec:causal}
We now give the abstract labels $j$ an experimental meaning. This assumes that approximately classical detectors and a laboratory spacetime are available as an effective description, as discussed in Section~\ref{sec:connes}; the microscopic origin of the detectors is not at issue here.

\subsection{One record from many detectors}\label{sec:solvay}
Model each detector's record register by two states $|0\rangle_i$, $|1\rangle_i$ (no absorption, absorption) and define on the full tensor product of registers the local projector
\begin{equation}
\Pi_i=\id_1\otimes\cdots\otimes|1\rangle_i\langle1|\otimes\cdots\otimes\id_n.
\label{eq:localN}
\end{equation}
After ideal premeasurement the state lies in the one-record sector,
\begin{equation}
|\Psi\rangle=\sum_{j=1}^nc_j|D_j\rangle,\qquad |D_j\rangle=|0\cdots1_j\cdots0\rangle,\qquad \Pi_i|D_j\rangle=\delta_{ij}|D_j\rangle,
\label{eq:solvaystate}
\end{equation}
with other degrees of freedom absorbed into the orthogonal $D_j$. The source and the premeasurement interaction supply this sector; the collapse law does not derive it, and detector efficiencies, dark-count branches and angular amplitudes belong to the preparation. Defining $\Pi_i$ on the full apparatus space matters: the branch projector $|D_i\rangle\langle D_i|$ is a global operator, and its agreement with $\Pi_i$ inside the one-record sector does not make it local on arbitrary apparatus states.

The division into premeasurement followed by reduction is a calculational approximation, and it is where a GTD calculation would differ from CSL: in a universal collapse dynamics the same law acts while the signal is being amplified, when the measurement interaction does not commute with the eventual pointer projectors, so the incident field, the detector constituents and the collapse channels must be evolved together, as mass-proportional CSL does through the detector matter \cite{csl,bassi}. The one-record sector describes the later stage in which the alternatives have become stable.

Let $s_i>0$ be the accumulated exposure of channel $i$---the dimensionless integrated collapse strength---and, with respect to Lebesgue measure $\dd y_i$,
\begin{equation}
K_i(y_i,s_i)=(2\pi s_i)^{-1/4}\exp\Bigl[-\frac{(y_i\id-2s_i\Pi_i)^2}{4s_i}\Bigr],\qquad \int K_i^\dagger K_i\,\dd y_i=\id .
\label{eq:localK}
\end{equation}
If register $i$ is in $|0\rangle_i$ the history $y_i$ is Gaussian with mean $0$ and variance $s_i$; if it is in $|1\rangle_i$ the mean is $2s_i$. For independent channels the physical joint density and posterior weights are
\begin{align}
p(\boldsymbol y\mid\boldsymbol s)&=\sum_j|c_j|^2\prod_i\frac{e^{-(y_i-2s_i\delta_{ij})^2/(2s_i)}}{\sqrt{2\pi s_i}},\label{eq:solvaydensity}\\
p_j(\boldsymbol y)&=\frac{|c_j|^2e^{2y_j-2s_j}}{\sum_k|c_k|^2e^{2y_k-2s_k}}.\label{eq:solvayposterior}
\end{align}
As the exposures grow, a history consistent with one detector's absorption makes that alternative dominate, and the other weights decrease through the same normalisation; there is no term describing a message from the dominant detector to the rest. For each occupied pair $j,k$ the distinguishing exposure is $s_j+s_k$, and if all pairwise exposures grow without bound, \eqref{eq:selection} gives asymptotic selection. At finite exposure,
\begin{equation}
\E_{\mathbb P}\bigl(1-\max_jp_j\bigr)\leq\sum_{j<k}\sqrt{p_j^0p_k^0}\,e^{-(s_j+s_k)/2},\qquad p_j^0=|c_j|^2,
\label{eq:error}
\end{equation}
proved in Appendix~\ref{app:bound}. For equal weights and exposures the bound is $(n-1)e^{-s}/2$, so $s\geq\log[(n-1)/(2\varepsilon)]$ makes it at most $\varepsilon$. The bound controls the mean residual ambiguity; it does not make the small surviving amplitudes vanish. That is the tails problem common to all Gaussian collapse models \cite{albertloewer,lewistails,tails}, and the model resolves Einstein's problem in the sense in which CSL does---definite enough, by a tolerance appropriate to the apparatus---not by exact prompt projection.

A channel coupling identically to every alternative multiplies them all by the same random factor and selects nothing, while a shared channel with different eigenvalues can select, so independent noise per detector is a sufficient construction, not a necessary one. Distinguishing $f_j$ in $\sum_jc_j|f_j\rangle|g_j\rangle$ also selects the correlated geometric alternative $g_j$ without removing the uncertainty within it; the record model tests a proposed effective law once an approximately stable background is available, and does not claim that each click creates a spacetime.

\subsection{The no-signalling theorem}\label{sec:nosig}
Selecting a record and obeying a light-cone constraint are different tests. Assume that a time-oriented Lorentzian spacetime and two spacelike laboratory regions have been identified. Alice and Bob choose controls $a,b$ and record outcomes $\alpha,\beta$. No-signalling from Alice to Bob means
\begin{equation}
p(\beta\mid a,b)=p(\beta\mid b)\qquad\text{for every allowed }a,\label{eq:NSdefinition}
\end{equation}
where Bob uses all his runs; a probability conditioned on Alice's outcome need not satisfy it, since learning which runs have that outcome requires the outcome to be communicated. By causal consistency we mean this condition together with independence of the ordering of spacelike operations. We do not mean Bell's local-causality factorisation, a derived propagation mechanism in pregeometry, or a relativistic field theory of collapse.

The theorem that follows is the standard one \cite{grw80,beckman}; we state it because the rest of the paper is about its hypotheses. Let the laboratory Hilbert factors include the local pointers and any local memories, and let $A_{\alpha|a}$, $B_{\beta|b}$ be the operators of the local operations, with any additional Kraus index absorbed into the integration. Each apparatus must account for all its outcomes,
\begin{equation}
\int A_{\alpha|a}^\dagger A_{\alpha|a}\dd\alpha=\id_A,\qquad
\int B_{\beta|b}^\dagger B_{\beta|b}\dd\beta=\id_B .\label{eq:localcomplete}
\end{equation}
For a joint preparation $\rho_{AB}$ independent of the later choices, the instrument rule gives $p(\alpha,\beta\mid a,b)=\Tr[\rho_{AB}(A_{\alpha|a}^\dagger A_{\alpha|a}\otimes B_{\beta|b}^\dagger B_{\beta|b})]$, and integrating over Alice's outcome replaces her factor by the identity:
\begin{equation}
p(\beta\mid a,b)=\Tr[B_{\beta|b}\rho_BB_{\beta|b}^\dagger]=p(\beta\mid b),\qquad\rho_B=\Tr_A\rho_{AB}.
\label{eq:nosignal}
\end{equation}
This holds for entangled as well as product preparations, and uses the Born instrument rule, completeness and locality; a null cone alone would not imply it. For a finite sequence of local operations one integrates the last local outcome first and repeats the argument backwards; a laboratory's controls may depend on its own earlier records provided each conditional operation is complete. The identity therefore holds for the whole remote record distribution, and it survives summing over inaccessible local variables, so a detector that exposes only a coarse function of $y_i$, or no noise readout at all, is covered. The Solvay channels \eqref{eq:localK} are complete local operations on their registers, so Solvay records are non-signalling. Nothing in this is specific to GTD: the conclusion is inherited by any law of the form \eqref{eq:SSE} with fixed local operators. The identification of an actual GTD fluctuation history with such instruments is the assumption to be checked, and Section~\ref{sec:failures} shows how it can fail and where it cannot.

\subsection{Where state dependence may enter}\label{sec:failures}
The ensemble average of \eqref{eq:SSE} for fixed bounded $L_a$ is the linear Lindblad equation
\begin{equation}
\frac{\dd\rho}{\dd\tau}=-\frac{i}{\hbar}[H,\rho]-\frac12\sum_a[L_a,[L_a,\rho]],\qquad\rho=\E_{\mathbb P}|\psi\rangle\langle\psi|,
\label{eq:ensemble}
\end{equation}
so two ensembles with the same density matrix evolve identically on average. This linearity is the protection against the signalling mechanism of Gisin and Polchinski \cite{gisin89,polchinski,simongisin}. The deterministic law \eqref{eq:detnorm} does not have it: with $H_S=0$ and $A=a_0\sigma_z$ the equal mixture of $\sigma_z$ eigenstates stays $\id/2$ while the equal mixture of $\sigma_x$ eigenstates, initially the same $\id/2$, becomes $\frac12[\id+\tanh(2a_0\tau)\sigma_z]$, so remote steering between the two preparations would be detectable. Linearity is still not the whole requirement: the completely positive generator with $L=\sqrt\gamma\,|1\rangle_A\langle1|\otimes X_B$ leaves Bob alone when Alice prepares $|0\rangle$ and subjects him to bit-flip noise when she prepares $|1\rangle$; stipulated across spacelike regions it signals, and calling its origin pregeometric does not change that.

GTD proposes that collapse is enhanced in a regime of large entanglement \cite{gtdcritical,book}. The most direct way to write this---a rate that is a function of the entanglement of the conditioned global state, multiplying the collapse operator---puts the state dependence into the ensemble generator, and the failure is quantitative.

\begin{proposition}[A rate evaluated on the conditioned global state signals]\label{prop:ent}
Let an effective two-qubit pure state obey the CSL law \eqref{eq:SSE} with $H=0$ and the single operator
\begin{equation}
L_A(\Psi)=\sqrt{\gamma(E(\Psi))}\,Z_A\otimes\id_B,
\label{eq:ent_operator}
\end{equation}
where $E(\Psi)$ is the entanglement entropy of the conditioned joint state, $\gamma\geq0$, and the drift is the CSL drift evaluated at that coefficient. Let Bob's measurement be ideal and projective with Born probabilities. Let $0<p<1$, $p\neq\frac12$, and $h(p)=-p\log p-(1-p)\log(1-p)$. If $\gamma(h(p))\neq\gamma(0)$, Bob's unread measurement changes Alice's subsequent local statistics.
\end{proposition}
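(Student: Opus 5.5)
We will exhibit a shared preparation and two choices of Bob that leave Alice's reduced density matrix $\rho_A$ identical but give her different subsequent statistics. Take
\[
|\Psi\rangle=\sqrt p\,|0\rangle_A|0\rangle_B+\sqrt{1-p}\,|1\rangle_A|1\rangle_B ,
\]
with $E(\Psi)=h(p)$ and $\rho_A=\mathrm{diag}(p,1-p)$. Bob either does nothing, or performs an ideal projective measurement in the $Z_B$ basis without reading it. In the second case the conditioned joint state is $|0\rangle|0\rangle$ with probability $p$ and $|1\rangle|1\rangle$ with probability $1-p$, by the Born rule. Each branch is a product, so $E=0$. Both choices give the same $\rho_A$ at the moment of Bob's choice. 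The effect must therefore come from the rate $\gamma(E)$ and the nonlinearity of the law, not from the preparation.

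Alice's statistic must be one that feels the rate. The $Z$-weights cannot serve: under the CSL law \eqref{eq:SSE} with $L\propto Z_A$ they are martingales for any state-dependent coefficient, by \eqref{eq:p}. So Alice instead waits a time $\tau$ and then measures $X_A$; more generally, she reads the off-diagonal element of her reduced state in the $Z$ basis. That element vanishes in both scenarios at the start.

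This forces a modified protocol. Choose the preparation so that Alice retains coherence:
\[
|\Psi\rangle=\sqrt p\,|0\rangle|b_0\rangle+\sqrt{1-p}\,|1\rangle|b_1\rangle ,
\]
where $\{|b_0\rangle,|b_1\rangle\}$ are nonorthogonal, or more cleanly, Bob holds an ancilla register in a larger space. The better choice keeps qubits and uses Bob's choice of basis. Take $|\Psi\rangle=\sqrt p|00\rangle+\sqrt{1-p}|11\rangle$ and let Bob measure either $Z_B$ or $X_B$.

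If Bob measures $X_B$, Alice's conditioned states are $\sqrt p|0\rangle\pm\sqrt{1-p}|1\rangle$. These are product states with $E=0$, yet they carry $Z_A$-coherence, which then dephases at rate $\gamma(0)$. Their average $\rho_A$ is $\mathrm{diag}(p,1-p)$, stationary under dephasing.

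Comparison with no measurement is subtler, so compare "no measurement" with "$X_B$ measured later". Alternatively, let Bob undo the entanglement only in the no-measurement case. The cleanest route is to work with the joint state directly. For the unmeasured entangled state, the CSL drift with constant coefficient $\gamma(h(p))$ acts, and the weights $p$ evolve stochastically, so $E$ changes and the coefficient along the path is $\gamma(E(\Psi_\tau))$. Then compute, to first order in $\tau$, the joint correlator $\langle X_A X_B\rangle$ or Alice's coherence conditioned jointly. The decisive step is to show that some Alice-observable after time $\tau$ has first-order derivative proportional to $\gamma(h(p))$ in one case and to $\gamma(0)$ in the other.

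Concretely, after Bob's unread $X_B$ measurement Alice's ensemble is $\{\sqrt p|0\rangle\pm\sqrt{1-p}|1\rangle\}$, with $E=0$. Without the measurement it is the entangled state, whose local evolution is governed by the Lindblad term $-\tfrac{\gamma}{2}[Z_A,[Z_A,\cdot]]$ at rate $\gamma(h(p))$. To feel the difference locally, Alice must eventually compare with Bob's further operations. This suggests the real signal is in the ordering dependence: Bob's $X_B$ outcome correlations with Alice's $X_A$ after time $\tau$ decay as $e^{-2\gamma(0)\tau}$ versus $e^{-2\gamma(h(p))\tau}$, depending on whether Bob measured before or after Alice's evolution. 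Since Bob measures at one time, his remote distribution under the two orderings differs. I would present this as the violation of independence of the ordering of spacelike operations, which the paper includes in causal consistency.

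To match the statement's "Alice's local statistics" literally, I would switch roles so that the entanglement-dependent channel acts on the side whose statistics are read. The main obstacle is this step: finding an observable of Alice's alone that is sensitive. The difficulty is that pure $Z$-dephasing leaves the $Z$-diagonal $\rho_A$ invariant regardless of the rate.

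I would try first to add a small fixed local unitary for Alice before her collapse interval, for instance a rotation $e^{-i\theta Y_A}$. This makes her reduced state carry $Z$-coherence $\propto\sin2\theta$ in the no-measurement case while keeping the two ensembles' $\rho_A$ equal. Her subsequent $X_A$ expectation then decays at rate $\gamma(h(p'))$ versus $\gamma(0)$, where $h(p')$ is the entanglement after rotation; note that local unitaries preserve $E$, so $h(p')=h(p)$. The hypotheses $p\neq\tfrac12$ and $0<p<1$ enter in two ways. The condition $0<p<1$ makes $E=h(p)>0$ distinct from the product value. The condition $p\neq\tfrac12$ ensures the rotated state's coherence is nonzero in the relevant term. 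The short-time expansion $\tfrac{d}{d\tau}\langle X_A\rangle|_0\propto-[\gamma(h(p))-\gamma(0)]\cdot(\text{nonzero})$ then completes the argument.
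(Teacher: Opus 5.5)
Your final argument is correct and is essentially the paper's proof. You rotate the Schmidt state locally on Alice's side so that $\langle X_A\rangle=(2p-1)\sin2\theta\neq0$. You then compare the initial slopes $-2\gamma(h(p))\langle X_A\rangle$ and $-2\gamma(0)\langle X_A\rangle$ without and with Bob's unread projective measurement. The paper's state $\sqrt p\,|+\rangle_A|0\rangle_B+\sqrt{1-p}\,|-\rangle_A|1\rangle_B$ is your $\theta=\pi/4$ case, up to a phase on $|1\rangle_B$.

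The earlier detours are unnecessary and should be cut: the unrotated state, the ordering-dependence correlations, and switching roles. The final protocol should also say explicitly that Bob measures, say, $Z_B$, since any rank-one projective measurement on his qubit gives product branches with $E=0$.
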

\begin{proof}
Prepare $|\Psi_p\rangle=\sqrt p\,|+\rangle_A|0\rangle_B+\sqrt{1-p}\,|-\rangle_A|1\rangle_B$, with $|\pm\rangle$ the $X_A$ eigenstates, so that $\rho_A=\frac12[\id+(2p-1)X_A]$ and $E=h(p)$. Averaging the It\^o product rule for $|\Psi\rangle\langle\Psi|$ over the next increment gives the initial drift $\dot\rho_A=\gamma(h(p))(Z_A\rho_AZ_A-\rho_A)$, hence $\dd\langle X_A\rangle/\dd\tau=-2(2p-1)\gamma(h(p))$. If Bob measures in the $|0\rangle,|1\rangle$ basis, each conditioned branch is a product state with $E=0$; averaging over his unread outcome returns the same $\rho_A$ but with rate $\gamma(0)$ on both branches, so $\dd\langle X_A\rangle/\dd\tau=-2(2p-1)\gamma(0)$. Different slopes are distinguished by repeated measurements at Alice. For $p=0.8$ the two coefficients are $-1.2\,\gamma(h(0.8))$ and $-1.2\,\gamma(0)$.
\end{proof}

\begin{remark}[Other functionals, and Bob's channel]\label{rem:ent}
Nothing depends on entropy in particular: Bob's measurement replaces one conditioned state by a convex decomposition of it with the same $\rho_A$, and the argument goes through for any rate functional whose value on the mixed state differs from its branch-averaged value weighted by the action of the generator, as it does for purity. Modelling Bob's measurement by his own local collapse channel at large exposure changes the slopes by exponentially small amounts and not the conclusion. What is tested is a physical remote operation with branch selection, not a distant observer's decision to learn or forget something.
\end{remark}

The proposition is a counterexample to one way of writing an entanglement-dependent rate. It is not a theorem that no such rate exists, and the reason is the freedom noted at the end of Section~\ref{sec:normdrift}. Bassi, D\"urr and Hinrichs \cite{uniqueness} showed that a diffusive, norm-preserving, Markovian state-vector law whose ensemble average is a completely positive linear map must have noise fields of the form
\begin{equation}
B_a(\psi)=\bigl(L_a^\psi-\ell_a^\psi\bigr)\psi,\qquad
L_a^\psi=\sum_bu_{ab}(\psi)L_b,\qquad
\ell_a^\psi=\tfrac12\langle\psi,(L_a^\psi+L_a^{\psi\dagger})\psi\rangle,
\label{eq:bdh_form}
\end{equation}
with fixed Lindblad operators $L_b$ and a unitary matrix $u(\psi)$ that may depend on the state, together with the matching drift, which for self-adjoint $L_b$ is $f=[-\frac12\sum_bL_b^2+\sum_a\ell_a^\psi L_a^\psi-\frac12\sum_a(\ell_a^\psi)^2]\psi$. Complete positivity does not follow from their other assumptions, as Di\'osi showed \cite{diosicomment,bdhreply}; we add it as a hypothesis, which no-signalling with entangled ancillas supplies \cite{simongisin}. Jump unravellings and coloured noise are outside their hypotheses. Within them the only state dependence that survives is the unravelling, and they note themselves that the admissible diffusions range from the random-phase law to the CSL law, with the collapse rate maximal for the latter; the Wiseman--Di\'osi classification \cite{wiseman} parametrises the same freedom for fixed $u$. The following proposition states this for record weights with several channels, with the equality condition, and with the proviso under which it is a no-signalling statement.

\begin{proposition}[What a state-dependent unravelling can and cannot do]\label{prop:bound}
Let the $L_b$ be commuting self-adjoint operators with eigenvalues $\ell_{bj}$ on the records $|j\rangle$ and eigenvalue vectors $\ell_j$, and let the state-vector law have the form \eqref{eq:bdh_form} with any unitary $u(\psi)$. Then every ensemble prediction is that of the Lindblad equation \eqref{eq:ensemble}, and the conditional quadratic variation of each record weight $p_j=|\langle j|\psi\rangle|^2$ satisfies
\begin{equation}
\frac{\E[(\dd p_j)^2\mid\psi]}{\dd\tau}=4p_j^2\sum_a\Bigl(\operatorname{Re}\sum_bu_{ab}(\psi)\bigl(\ell_{bj}-\bar\ell_b\bigr)\Bigr)^2\;\leq\;4p_j^2\,|\ell_j-\bar\ell|^2,
\label{eq:selection_bound}
\end{equation}
with equality for the CSL law $u=\id$, and more generally if and only if $\sum_bu_{ab}(\psi)(\ell_{bj}-\bar\ell_b)$ is real for every $a$; the quadratic variation vanishes for $u=i\,\id$, the random-phase law. Hence a state-dependent unravelling can make the realised selection rate depend on the conditioned state, including on its entanglement, without changing any ensemble prediction; but it can only reduce that rate below the value fixed by the operators. The law is then no longer an instrument in the sense of Section~\ref{sec:joint}, because its increment law is not linear in $\rho$, and the no-signalling statement holds at the level of the density matrix only: if the fluctuation history were itself a local record, a remote choice that changes $u(\psi)$ would be readable in the correlation between that record and the local weights.
\end{proposition}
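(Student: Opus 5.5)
The proof has three parts: the ensemble equation, the quadratic variation of the weights, and the no-signalling proviso.

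\emph{Ensemble equation.} Write the law as $\dd\psi=f(\psi)\dd\tau+\sum_aB_a(\psi)\dd W_a$ with $B_a$ from \eqref{eq:bdh_form} and the stated drift $f$. Apply the It\^o product rule to $|\psi\rangle\langle\psi|$:
\[
\dd\rho=\E\bigl[\,|f\rangle\langle\psi|+|\psi\rangle\langle f|+\textstyle\sum_a|B_a\rangle\langle B_a|\,\bigr]\dd\tau .
\]
Expand $\sum_a|B_a\rangle\langle B_a|$ using $L_a^\psi=\sum_bu_{ab}L_b$. Unitarity gives $\sum_au_{ab}\bar u_{ac}=\delta_{bc}$, so the $L$--$L$ part is $\sum_bL_b|\psi\rangle\langle\psi|L_b$ and carries no trace of $u$. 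I then check that the terms linear and quadratic in $\ell_a^\psi$ cancel against the matching pieces of $f+f^\dagger$. The only delicate point is the real-part convention in $\ell_a^\psi$. If the cancellation goes through, the conditional generator is the Lindblad one, $\dot\rho=\sum_bL_b\rho L_b-\frac12\{L_b^2,\rho\}$, which is linear in $|\psi\rangle\langle\psi|$. Averaging gives \eqref{eq:ensemble}, and multi-time ensemble predictions follow by the Markov property. I take this part essentially from Ref.~\cite{uniqueness}, rechecking only the cancellation.

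\emph{Quadratic variation.} Write $p_j=|\langle j|\psi\rangle|^2$. Its martingale part is $\sum_a2\operatorname{Re}\bigl(\overline{c_j}\langle j|B_a\rangle\bigr)\dd W_a$, where $c_j=\langle j|\psi\rangle$. The $L_b$ are diagonal in the record basis, so
\[
\langle j|B_a\rangle=c_j\Bigl(\sum_bu_{ab}\ell_{bj}-\ell_a^\psi\Bigr).
\]
Next, $\ell_a^\psi=\sum_b\operatorname{Re}(u_{ab})\bar\ell_b$. This gives $\operatorname{Re}\bigl(\sum_bu_{ab}\ell_{bj}\bigr)-\ell_a^\psi=\operatorname{Re}\sum_bu_{ab}(\ell_{bj}-\bar\ell_b)$, and squaring and summing over $a$ yields the equality in \eqref{eq:selection_bound}. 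For the inequality, let $v=\ell_j-\bar\ell\in\mathbb R^m$. Then $\sum_a(\operatorname{Re}(uv)_a)^2\leq\|uv\|^2=|v|^2$, since $u$ is unitary and $v$ is real. Equality holds if and only if $\operatorname{Im}(uv)=0$, that is, $uv$ is real. For $u=\id$ this is automatic, and for $u=i\id$ every real part vanishes. Combined with the first part, this gives the conclusion: the realised rate may depend on $\psi$ through $u(\psi)$ but never exceeds its CSL value, and no ensemble prediction changes.

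\emph{No-signalling proviso.} This is the step I expect to be the main obstacle, because it must be exhibited rather than computed. First I show that the increment law is not linear in $\rho$. The conditional covariance of the history $\dd Y_a=2\ell_a^\psi\dd\tau+\dd W_a$ involves $\operatorname{Re}\sum_bu_{ab}(\psi)\bar\ell_b$, with a state-dependent $u$. Then I build an example. Take the two-qubit preparation of Proposition~\ref{prop:ent} with $u(\Psi)$ switching between $\id$ and $i\id$ according to $E(\Psi)$. By the first part, Alice's $\rho_A$ evolves identically whether or not Bob measures. However, the correlation $\E[\dd Y\,\dd p_j]$ is $2p_j\operatorname{Re}\bigl(u(v)\bigr)$ paired with the corresponding history coefficient, and this takes different values in the two cases. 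Hence it is readable, if $Y$ is a local record.
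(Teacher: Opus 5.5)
Your proposal is correct and follows the paper's proof essentially step for step: unitarity of $u$ removes it from the ensemble generator, and $\ell_a^\psi=\operatorname{Re}\sum_bu_{ab}\bar\ell_b$ gives $\dd p_j=2p_j\sum_a\operatorname{Re}z_{aj}\,\dd W_a$ with $z_{aj}=\sum_bu_{ab}(\ell_{bj}-\bar\ell_b)$; the bound then follows from $|\operatorname{Re}z|\le|z|$ and unitarity, and the proviso rests on the $u$-dependence of $\E[\dd W_a\,\dd p_j\mid\psi]=2p_j\operatorname{Re}z_{aj}\,\dd\tau$, with your entanglement-switched phase being the paper's own illustration. One slip to fix: the conditional covariance of $\dd Y_a$ is $\dd\tau$ for every $u$, so what carries $u(\psi)$ and makes the increment law nonlinear in $\rho$ is the drift $2\ell_a^\psi$ (together with the cross-variation with $p_j$), not the covariance.
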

\begin{proof}
The ensemble generator of \eqref{eq:bdh_form} with its matching drift is $\sum_b(L_b\rho L_b-\frac12\{L_b^2,\rho\})$ for every $u$, by direct substitution, since $\sum_aL_a^\psi\rho L_a^{\psi\dagger}=\sum_bL_b\rho L_b$ by unitarity. In the record basis the noise term changes the amplitude $c_j$ by $(\sum_bu_{ab}\ell_{bj}-\ell_a^\psi)c_j\dd W_a$, and since $\ell_a^\psi=\operatorname{Re}\sum_bu_{ab}\bar\ell_b$ the weight changes by $\dd p_j=2p_j\sum_a\operatorname{Re}[z_{aj}]\dd W_a$ with $z_{aj}=\sum_bu_{ab}(\ell_{bj}-\bar\ell_b)$. The quadratic variation is $4p_j^2\sum_a(\operatorname{Re}z_{aj})^2\leq4p_j^2\sum_a|z_{aj}|^2=4p_j^2|\ell_j-\bar\ell|^2$ by unitarity of $u$, with equality if and only if every $z_{aj}$ is real. The last statement follows from $\E[\dd W_a\,\dd p_j\mid\psi]=2p_j\operatorname{Re}z_{aj}\dd\tau$, which depends on $u(\psi)$.
\end{proof}

The ancillary script illustrates the proposition on the state of Proposition~\ref{prop:ent} with a single channel $L=Z_A$ and the phase $u=e^{i\phi(\psi)}$, $\phi=\frac\pi2(1-E(\psi)/\log2)$, so that selection is switched on by entanglement: Bob's unread measurement leaves the slope of $\langle X_A\rangle$ unchanged at $\mp1.2\gamma$ while switching the quadratic variation of Alice's weights from $0.82\gamma$ to zero. Two features of this example deserve notice. Whether a local record becomes definite then depends on a remote choice, even though no local statistic does; and an unravelling switched on by entanglement quenches itself, since selection removes the entanglement that switches it on, which lengthens the tails.

\begin{example}[Operator-carried amplification]\label{ex:amplification}
Let $N$ two-level aikyons carry record projectors $n_i=|1\rangle_i\langle1|$ and let the single collapse operator be the collective $L=\sqrt{\gamma_0}\sum_{i=1}^Nn_i$. For the branch pair $|0\rangle^{\otimes N}$ versus $|1\rangle^{\otimes N}$ the eigenvalue difference is $\sqrt{\gamma_0}N$, so by \eqref{eq:selection} the log-ratio of the two branches drifts at rate $2\gamma_0N^2$; two product-basis records differing in one aikyon have eigenvalue difference $\sqrt{\gamma_0}$ and are resolved at rate $2\gamma_0$. The $N^2$ enhancement is a property of the macroscopic distinctness of the two branches as seen by $L$---the same pair separates at the same rate inside a product state---and it is carried entirely by the operator. It is the amplification mechanism of CSL \cite{csl,bassi} restated for GTD alternatives, and it lies inside the no-signalling class with the maximal selection rate of \eqref{eq:selection_bound}. In this admissible and observable form, what GTD calls entanglement enhancement is enhancement by macroscopic distinctness. The same bookkeeping is done for correlated constituents in Refs.~\cite{microclosure,physrecon}: for $L=\sum_iL_i$ with equicorrelated couplings the rate scales as $N$, $N^2$ or zero according to the correlation, so an $N^2$ exponent by itself witnesses neither entanglement nor collapse; the exponent is a property of the operator and the state together.
\end{example}

The design rule that follows is sharper than ``entanglement cannot enter''. Within the Markovian diffusive class, state dependence of a collapse rate may enter in two places. In the unravelling, as in Proposition~\ref{prop:bound}, it does not signal provided the fluctuation history is not a local record, but it can only switch selection off and on below the operator-fixed rate, and it has no ensemble consequences: no interference-decay or heating experiment could see it, and it would not alter a single CSL bound. In the operators, as in Example~\ref{ex:amplification}, it raises the rate and is testable, but then it is a statement about the spectrum of the collective observable that the unresolved aikyons couple to, which must spread with the number of constituents in the superposition, and it is a dependence on macroscopic distinctness rather than on entanglement as such. A scalar rate multiplying the generator, as in Proposition~\ref{prop:ent}, is excluded by the uniqueness theorem, and so, by Proposition~\ref{prop:history}(iv), is a response gain other than the Born one: a sea that leans towards the leading record with the wrong gain is a state-dependent rate in disguise. If GTD's enhancement is to be both causally admissible and observable within this class, it must be derived in the second form; a non-Markovian emergent law, which Section~\ref{sec:einstein} shows to be the generic case for a sea with finite correlation time, would have to be analysed on its own terms.

\subsection{Spacelike composition, Bell pairs, and the continuum}\label{sec:continuum}
Operations on different laboratory factors commute, so exchanging their order leaves the joint unnormalised state and its probability unchanged, provided the entire supports of the operations are spacelike. The detector model admits covariant apparatus parameters, for instance $s_i=\int\kappa_i\,\dd\tau_i^{\rm proper}$ with prescribed local scalar rates, so that a change of coordinates describes the same exposures; this is a compatible assignment within the model, not a derivation of proper times from $\tau$. There is also a temporal limitation: the selection theorem takes exposure to infinity, and monitoring intervals of infinite duration are not mutually spacelike. The selection theorem establishes the limiting mechanism, \eqref{eq:nosignal} the causal test for finite operations, and neither treats an infinite experiment as one spacelike event.

For a spin-singlet pair \cite{epr,bell} with local measurement directions $\mathbf a,\mathbf b$ the pointer weights are $p^0_{rt}=(1-rt\,\mathbf a\cdot\mathbf b)/4$, $r,t=\pm1$. Applying the local channels \eqref{eq:localK} to the two pointer registers selects the four alternatives under the same law, with the quantum correlations as inputs, and the no-signalling theorem keeps both marginals at $1/2$. If each wing reports the sign of $y_i-s_i$---a readout of the fluctuation history that is covered by the theorem for these fixed-operator channels, though not for the state-dependent unravellings of Proposition~\ref{prop:bound}---the readout reliability is $w(s)=\operatorname{erf}\sqrt{s/2}$, the reported correlation is $-w(s_A)w(s_B)\,\mathbf a\cdot\mathbf b$, and the CHSH combination is $-2\sqrt2\,w(s_A)w(s_B)$, approaching $-2\sqrt2$ as both exposures grow; every run is counted. This preserves the given Bell correlations and is not a Bell-local resolution of the EPR argument: the latent $j^\star$ of \eqref{eq:selection} represents each fixed-setting history law without assigning predetermined outcomes across settings.

Two structural questions remain at the continuum level. Local quantum-field algebras need not factor into finite detector Hilbert spaces, and continuum reduction brings its own requirements on domains, energy production and integrability \cite{fewsterverch,ggp,tumulka2020}. More specifically, Myrvold has shown that a relativistically invariant Markovian collapse theory on standard fields, with a stable vacuum and the forward-light-cone spectrum condition, cannot be built on the standard state space \cite{myrvold}. A GTD completion therefore faces a choice. Either the emergent description is Lorentz invariant and Markovian, in which case the theorem applies and the nonstandard degrees of freedom it demands must be exhibited---the aikyon geometric variables and the auxiliary sector of Appendix~\ref{app:micro} are the candidates, and it must be shown that they are not standard fields obeying the spectrum condition; or the preferred $\tau$-ordering of Section~\ref{sec:connes} survives as a preferred foliation, which evades the theorem at the price of a Lorentz violation whose size the completion must predict; or the emergent law is not Markovian and lies outside the theorem's hypotheses. The third option is not merely available. Proposition~\ref{prop:einstein} shows that a sea with a finite correlation time $\tau_c$ produces a history drift with memory, so the Markov law of Section~\ref{sec:joint} is the limit $\tau_c\ll$ selection time of a law that has memory at shorter scales; whether that memory survives at the scale of laboratory operations is a quantitative question about $\tau_c$, not a choice to be made, and memory confined to scales shorter than those operations does not by itself remove the law from the theorem's hypotheses at the scale where they are applied. Our finite registers have no field-theoretic vacuum, so their consistency test does not decide among these.

\section{From GTD coefficients to collapse noise}\label{sec:diagnostics}
The effective law exists. The question is whether GTD produces it, and the way to answer is to follow the quantity that is supposed to supply the fluctuations as far as the construction allows. Two checks must be kept apart: does a candidate coefficient fluctuate in $\tau$, rather than merely differ between preparations, and does the proposed map to physical states turn it into a nonzero anti-self-adjoint operator? This section answers the first question for a model whose relation to the Grassmann block is stated, and shows where the second stands.

\subsection{The free coefficient is frozen}\label{sec:microcalc}
Appendix~\ref{app:micro} records the action. For the free fundamental fermionic coordinate it implies
\begin{equation}
D_+^2q_F=0,\qquad D_+=\partial_\tau+i\nu,\qquad q_F(\tau)=e^{-i\nu\tau}(q_{F0}+\tau V_F),
\label{eq:free_solution}
\end{equation}
so $D_+q_F=e^{-i\nu\tau}V_F$ and the bifermionic Lagrangian coefficient---the term of the action bilinear in the fermionic velocities---with the odd auxiliary factors in fixed order, is
\begin{equation}
\mathcal L_{FF}=-c_0\eta^3\,b\,\Tr(V_F^\dagger V_F),\qquad b=\beta_1\beta_2 .
\label{eq:free_aux}
\end{equation}
The phases cancel and $\Tr(V_F^\dagger V_F)$ is a constant of the free motion. To obtain a history law rather than inspect an equation, restrict to one scalar block and two dynamical Grassmann generators $\theta_1,\theta_2$, distinct from the auxiliary $\beta$'s, with $\theta_r^\dagger=\theta_r$, $q_{F0}=0$ and
\begin{equation}
V_F=\theta_1v_1+\theta_2v_2,\qquad V_F^\dagger V_F=iJ\theta_1\theta_2,\qquad J=2\operatorname{Im}(v_1^*v_2),
\label{eq:free_J}
\end{equation}
the randomness residing in the ordinary complex coefficients $v_r$, not in the symbols. Preparing $v_1,v_2$ as independent circular complex Gaussians of variance $\sigma^2$ gives a Laplace law for the coefficient,
\begin{equation}
\E e^{ikJ}=\frac1{1+\sigma^4k^2},\qquad f_J(j)=\frac{e^{-|j|/\sigma^2}}{2\sigma^2},\qquad\operatorname{Var}J=2\sigma^4 ,
\label{eq:laplace}
\end{equation}
and since $J(\tau)=J(0)$ on every trajectory the integrated coefficient $Y(\tau)=\int_0^\tau J\,\dd s=J\tau$ has $\operatorname{Cov}(Y(t),Y(u))=2\sigma^4tu$: a static random offset whose variance grows as $\tau^2$. In a free theory every function of the velocities is conserved, so this was certain; its value is to set the baseline against which an interaction is judged.

\subsection{Coefficient noise from a continuum of unresolved modes}\label{sec:bath}
What matters for the coefficient is the joint statistics of the velocities $v_1,v_2$ as functions of $\tau$. Write the four real components as a vector $\mathbf v=(\operatorname{Re}v_1,\operatorname{Im}v_1,\operatorname{Re}v_2,\operatorname{Im}v_2)$; then
\begin{equation}
J=\mathbf v^T\mathsf M\mathbf v,\qquad \mathsf M=\begin{pmatrix}0&\epsilon\\-\epsilon&0\end{pmatrix},\quad\epsilon=\begin{pmatrix}0&1\\-1&0\end{pmatrix},
\end{equation}
a quadratic form of signature $(2,2)$. Suppose that, under some interacting dynamics with some preparation, $\mathbf v(\tau)$ is a stationary Gaussian process with matrix correlation $\mathsf C(t)=\langle\mathbf v(t)\mathbf v(0)^T\rangle$ and spectral density matrix $\mathsf S(\omega)=\int\mathsf C(t)e^{i\omega t}\dd t$, which is Hermitian and positive semidefinite. Then Wick's theorem gives the statistics of $J$ in closed form.

\begin{proposition}[Coefficient noise from stationary Gaussian velocities]\label{prop:bath}
If $\mathbf v$ is a stationary Gaussian process with mean zero, then
\begin{align}
\langle J\rangle&=\Tr[\mathsf M\mathsf C(0)],\qquad
\operatorname{Cov}\bigl(J(t),J(0)\bigr)=2\Tr\bigl[\mathsf M\mathsf C(t)\mathsf M^T\mathsf C(t)^T\bigr],\notag\\
D_J&:=\lim_{\tau\to\infty}\frac{\operatorname{Var}Y(\tau)}{\tau}=\frac1\pi\int_{-\infty}^\infty\Tr\bigl[\mathsf M\mathsf S(\omega)\mathsf M\mathsf S(\omega)\bigr]\dd\omega ,
\label{eq:JJ_general}
\end{align}
where the integrand equals $\|\mathsf S^{1/2}\mathsf M\mathsf S^{1/2}\|_F^2\geq0$, so that $D_J=\infty$ whenever the integral diverges. In particular, if $v_1$ and $v_2$ are independent circular complex processes with $C_r(t)=\langle v_r(t)v_r^*(0)\rangle$ and scalar spectra $S_r\geq0$,
\begin{equation}
\langle J\rangle=0,\qquad
\langle J(t)J(0)\rangle=2\operatorname{Re}\bigl[C_1(t)C_2(t)^*\bigr],\qquad
D_J=\frac1\pi\int_{-\infty}^\infty S_1S_2\,\dd\omega ,
\label{eq:JJ}
\end{equation}
and, for absolutely continuous spectra, $0<D_J<\infty$ if and only if $S_1S_2$ is integrable and not almost everywhere zero. The one-time law of $J$ is then the Laplace law \eqref{eq:laplace} with $\sigma^2\to\sigma_1\sigma_2$, whatever the dynamics. For the free theory $S_r=2\pi\sigma_r^2\delta(\omega)$ and $D_J=\infty$ (ballistic growth); for any finite bath the spectra are sums of lines, and $D_J$ is infinite if the two spectra share a line and zero otherwise, never finite and positive; a continuum of modes with integrable $S_1S_2$ gives diffusion.
\end{proposition}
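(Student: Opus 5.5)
The plan is to compute everything from Wick's theorem for the Gaussian vector $\mathbf v$, and then pass to the frequency domain by Parseval. For the mean, $\langle J\rangle=\langle\mathbf v^T\mathsf M\mathbf v\rangle=\Tr[\mathsf M\mathsf C(0)]$ is immediate.

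For the covariance, write $J(t)=\sum_{ab}\mathsf M_{ab}v_a(t)v_b(t)$ and apply Isserlis' theorem to the four-point function. The disconnected pairing cancels against $\langle J\rangle^2$. The two remaining pairings give
\[
\sum\mathsf M_{ab}\mathsf M_{cd}\bigl[\mathsf C_{ac}(t)\mathsf C_{bd}(t)+\mathsf C_{ad}(t)\mathsf C_{bc}(t)\bigr].
\]
The first pairing is $\Tr[\mathsf M\mathsf C\mathsf M^T\mathsf C^T]$. The second becomes the same trace after using $\mathsf M^T=-\mathsf M$ twice, which gives the factor $2$.

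For $D_J$, I would use the standard identity
\[
\operatorname{Var}Y(\tau)=\int_{-\tau}^{\tau}(\tau-|t|)K(t)\,\dd t,\qquad K(t)=\operatorname{Cov}(J(t),J(0)),
\]
so that $D_J=\int K\,\dd t=\hat K(0)$ whenever $K$ is integrable. The Fourier transform at zero frequency of a product $\mathsf C(t)\otimes\mathsf C(t)$ is the convolution $\frac1{2\pi}\int\mathsf S(\omega)\otimes\mathsf S(-\omega)\,\dd\omega$. Real stationarity gives $\mathsf C(-t)=\mathsf C(t)^T$, hence $\mathsf S(-\omega)^T=\mathsf S(\omega)$ and $\mathsf S(-\omega)=\overline{\mathsf S(\omega)}$. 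Substituting, and using $\mathsf M^T=-\mathsf M$ once more, should give $\frac1\pi\int\Tr[\mathsf M\mathsf S\mathsf M\mathsf S]\,\dd\omega$ up to a sign I must track; the expected sign makes the integrand nonnegative.

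Nonnegativity then follows by writing
\[
\Tr[\mathsf M\mathsf S\mathsf M\mathsf S]=-\Tr[(\mathsf S^{1/2}\mathsf M\mathsf S^{1/2})^2].
\]
Here $\mathsf S^{1/2}\mathsf M\mathsf S^{1/2}$ is anti-Hermitian, so its square has trace $-\|\cdot\|_F^2$. The signs need checking against the convention stated. When the integral diverges I would argue with the Fejér kernel, $\operatorname{Var}Y/\tau=\int F_\tau(\omega)\,\dd\mu_K(\omega)$, and monotone convergence gives $D_J=\infty$. This justification of exchanging limits is the step I expect to need the most care: $K$ need not be integrable, and the spectral measure of $K$ is the convolution of $\mathsf S$ with itself, possibly with atoms.

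For independent circular $v_1,v_2$, I would redo Wick directly in complex form, writing $J=2\operatorname{Im}(v_1^*v_2)=-i(v_1^*v_2-v_1v_2^*)$. Circularity ($\langle v_rv_r\rangle=0$) together with independence kills the mean. It leaves only the pairings $\langle v_1^*(t)v_1(0)\rangle\langle v_2(t)v_2^*(0)\rangle$ and their conjugates, which give $2\operatorname{Re}[C_1C_2^*]$. Fourier transforming at zero gives $\frac1{2\pi}\cdot2\int S_1S_2\,\dd\omega$.

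Positivity and finiteness: since $S_r\ge0$, $0<D_J<\infty$ holds exactly when $S_1S_2$ is integrable and not a.e.\ zero. The one-time law uses \eqref{eq:laplace} with $v_r(0)$ independent circular Gaussians of variances $\sigma_r^2$; rescaling $v_r\to(\sigma_r/\sigma)v_r$ multiplies $J$ by $\sigma_1\sigma_2/\sigma^2$.

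For the special cases, take line spectra $S_r=\sum 2\pi w\,\delta(\omega-\omega_k)$. The convolution at zero frequency then picks up either $\delta(0)$, when the two spectra share a line, giving $K$ with a nonzero time-average and $\operatorname{Var}Y\sim\tau^2$, or nothing, when $K$ is almost periodic with zero mean and $\operatorname{Var}Y$ stays bounded. I would check this last case directly from $Y=\int K$ for a finite sum of cosines. The free case is the shared line at $\omega=0$.
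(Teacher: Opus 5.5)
Your route is the paper's: Wick's theorem for the covariance, then Parseval together with $\operatorname{Var}Y(\tau)=\int_{-\tau}^{\tau}(\tau-|t|)K(t)\,\dd t$. But it rests on a false property of $\mathsf M$. Transposing blockwise and using $\epsilon^T=-\epsilon$ gives $\mathsf M^T=\mathsf M$, not $-\mathsf M$. Indeed $\mathsf M$ is a traceless symmetric involution, $\mathsf M^2=\id$, hence of signature $(2,2)$; an antisymmetric $\mathsf M$ would make $J=\mathbf v^T\mathsf M\mathbf v$ vanish identically. The error enters exactly where you invoke it:
\begin{itemize}
\item The two connected pairings are $\Tr[\mathsf M\mathsf C\mathsf M^T\mathsf C^T]$ and $\Tr[\mathsf M\mathsf C\mathsf M\mathsf C^T]$. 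They coincide, giving the factor $2$, because $\mathsf M$ is symmetric; this is the paper's one-line argument. With $\mathsf M^T=-\mathsf M$ they would cancel.
\item In the frequency domain there is no sign to track. The zero-frequency convolution together with $\mathsf S(-\omega)^T=\mathsf S(\omega)$ gives $\int K\,\dd t=\frac1\pi\int\Tr[\mathsf M\mathsf S\mathsf M^T\mathsf S]\,\dd\omega$, which is the stated integrand.
\item Cyclicity of the trace gives $\Tr[\mathsf M\mathsf S\mathsf M\mathsf S]=+\Tr[(\mathsf S^{1/2}\mathsf M\mathsf S^{1/2})^2]$, with no minus sign, and $\mathsf S^{1/2}\mathsf M\mathsf S^{1/2}$ is Hermitian. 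So the integrand is $\|\mathsf S^{1/2}\mathsf M\mathsf S^{1/2}\|_F^2\geq0$.
\end{itemize}
Your extra minus sign and your anti-Hermiticity are both wrong and happen to cancel. Under your assumption the integrand would actually be $\leq0$.

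The exchange of limits you flag is a real issue, not just a technicality. In the convention $K(t)=\frac1{2\pi}\int\kappa(\nu)e^{-i\nu t}\,\dd\nu$, the spectral density of $K$ is $\kappa(\nu)=\frac1\pi\int\Tr[\mathsf M\mathsf S(\omega)\mathsf M\mathsf S(\omega-\nu)]\,\dd\omega$. Its integrand is pointwise nonnegative, since it equals $\|\mathsf S(\omega)^{1/2}\mathsf M\mathsf S(\omega-\nu)^{1/2}\|_F^2$. Moreover $\operatorname{Var}Y(\tau)/\tau$ is the Fej\'er mean of $\kappa$ at $\nu=0$. The right tool is Fatou's lemma, not monotone convergence, since the Fej\'er kernel is not monotone in $\tau$. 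Combined with convergence of the Fej\'er means of $\mathsf S$ at its Lebesgue points, it gives $\liminf_{\tau\to\infty}\operatorname{Var}Y(\tau)/\tau\geq\frac1\pi\int\Tr[\mathsf M\mathsf S\mathsf M\mathsf S]\,\dd\omega$, which settles the divergent case.

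Equality, however, needs $\kappa$ to be continuous at $0$, for instance $\mathsf S\in L^2$ (equivalently $\mathsf C\in L^2$). This is what the paper's appeal to Parseval tacitly assumes, and integrability of $K$ is not enough. Take independent circular processes with $S_1(\omega)=\omega^{-1/2}$ on $(0,1)$, zero elsewhere, and $S_2(\omega)=S_1(-\omega)$. Then $S_1S_2\equiv0$, yet:
\begin{itemize}
\item $\kappa(\nu)=\frac1{2\pi}\int_0^{|\nu|}\omega^{-1/2}(|\nu|-\omega)^{-1/2}\,\dd\omega=\frac12$ for $0<|\nu|<1$;
\item $K=O(|t|^{-3/2})$ is integrable;
\item $D_J=\int K\,\dd t=\frac12$.
\end{itemize}
So $D_J=\frac1\pi\int S_1S_2\,\dd\omega$, and with it the ``if and only if'', can only be proved under an added hypothesis such as $S_r\in L^2$, which you should state.

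The rest of your proposal is correct as you describe it: the circular Wick computation, the Laplace rescaling, and the line-spectrum analysis.
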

\begin{proof}
For a centred Gaussian vector process, Wick's theorem gives $\operatorname{Cov}(\mathbf v^T\mathsf M\mathbf v(t),\mathbf v^T\mathsf M\mathbf v(0))=\Tr[\mathsf M\mathsf C(t)\mathsf M^T\mathsf C(t)^T]+\Tr[\mathsf M\mathsf C(t)\mathsf M\mathsf C(t)^T]$, and $\mathsf M$ is symmetric, so the two terms are equal. Parseval's identity, with $\mathsf C(-t)=\mathsf C(t)^T$ so that $\mathsf S(-\omega)^T=\mathsf S(\omega)^\dagger=\mathsf S(\omega)$, converts the time integral of the covariance into the frequency integral, and $\operatorname{Var}Y(\tau)=2\int_0^\tau(\tau-s)\operatorname{Cov}(J(s),J(0))\dd s$ gives the limit. For independent circular components the pairings $\langle v_1^*v_1^*\rangle\langle v_2v_2\rangle$ vanish and the general formula reduces to \eqref{eq:JJ}; the one-time law follows because at fixed $t$ the pair $(v_1,v_2)$ is a pair of independent circular Gaussians, as in the free case. The statements about line spectra follow from the form of $S_1S_2$.
\end{proof}

Stationary Gaussian velocities are what a linear coupling to a continuum of unresolved modes produces, given a Gaussian preparation that is stationary under the coupled flow; whether such a preparation exists depends on the coupled dynamics being stable. Appendix~\ref{app:bathmodel} gives the simplest model in which it does: each coefficient velocity is treated as the velocity of a free coordinate with a positive kinetic term, coupled bilinearly to its own bath of oscillators with a counter-term that keeps it free, in the form of Caldeira and Leggett \cite{caldeira}, and prepared in the stationary Gaussian ensemble of the coupled system---the coefficient-level analogue of the canonical ensemble in trace dynamics \cite{adler} and of the tracial state in Appendix~\ref{app:bath}.

\begin{example}[Ohmic bath, Markov limit]\label{ex:ohmic}
In that model with an Ohmic bath and large cutoff each velocity is an Ornstein--Uhlenbeck process, $C_r(t)=\sigma_r^2e^{-\gamma_r|t|}$ with friction rate $\gamma_r$ and $S_r(\omega)=2\sigma_r^2\gamma_r/(\gamma_r^2+\omega^2)$. Then, with $\Gamma=\gamma_1+\gamma_2$,
\begin{equation}
\langle J(t)J(0)\rangle=2\sigma_1^2\sigma_2^2e^{-\Gamma|t|},\qquad
D_J=\frac{4\sigma_1^2\sigma_2^2}{\Gamma},\qquad
\operatorname{Var}Y(\tau)=D_J\Bigl[\tau-\frac{1-e^{-\Gamma\tau}}{\Gamma}\Bigr].
\label{eq:ohmic}
\end{equation}
For $\Gamma\tau\ll1$ this is $2\sigma_1^2\sigma_2^2\tau^2$, the free result; for $\Gamma\tau\gg1$ it is $D_J\tau$. The bath does not change what the coefficient is at any one time; it changes how it moves, turning a frozen random variable into a mixing process whose time integral is asymptotically Gaussian with variance $D_J\tau$, by the central limit theorem for quadratic functionals of Gaussian processes \cite{breuermajor}. The friction $\gamma_r$ and the noise strength are tied by the stationary preparation through the classical fluctuation--dissipation relation (Appendix~\ref{app:bathmodel}); this is a relation between the velocity noise and the velocity friction, and it is not the state-space noise--drift relation of Section~\ref{sec:normdrift}, which the projection has still to produce.
\end{example}

\begin{figure}[!htbp]
\centering
\includegraphics[width=\textwidth]{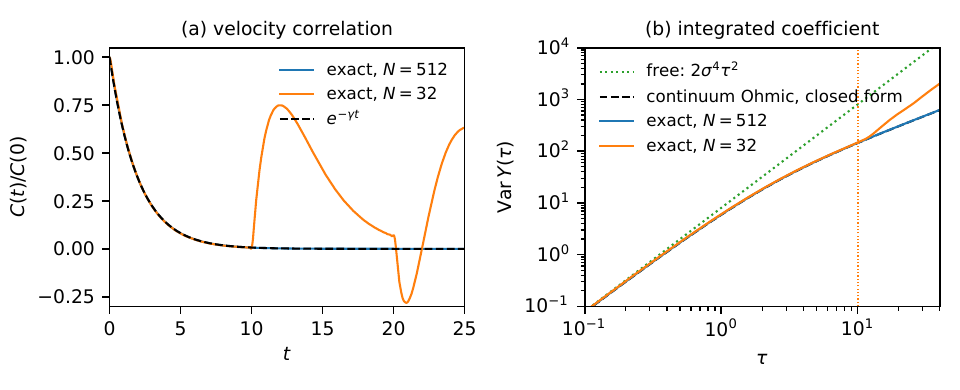}
\caption{The linearised-bath model of Appendix~\ref{app:bathmodel} with $\gamma=1/2$, unit mass and unit preparation temperature. (a)~Exact velocity correlation of the discretised Ohmic bath for $N=512$ modes (cutoff $\omega_c=60$) and $N=32$ ($\omega_c=20$), against the Markov-limit $e^{-\gamma t}$; the small bath recurs at $2\pi N/\omega_c\approx10$. (b)~Variance of the integrated coefficient $Y(\tau)$: the free ballistic law, the closed form \eqref{eq:ohmic}, and the exact finite-bath results, which follow the closed form until the recurrence time (vertical dotted line) and then return to ballistic growth.}\label{fig:bath}
\end{figure}

Figure~\ref{fig:bath} compares the closed form with the exact finite-bath dynamics: for $N=512$ modes the integrated variance agrees with \eqref{eq:ohmic} to within $1\%$ out to $\tau=8$, and a bath of $32$ modes follows it until its recurrence time and then returns to ballistic growth, as the fermionic reservoir of Appendix~\ref{app:bath} recurs on the scale $2\pi\hbar N/E_W$. In this model the finite-bath growth is always ballistic in the end, because the counter-term makes the total momentum of block and bath a conserved quantity, which leaves a zero-frequency line of weight $T/M_{\rm eff}$ in the velocity spectrum at every finite $N$ (Appendix~\ref{app:bathmodel}); the continuum limit removes it. Diffusion requires the continuum, in both calculations.

\begin{remark}[A bath that drives both coefficient directions identically produces no noise]\label{rem:common}
If the velocities are jointly circular but correlated, with $C_{rs}(t)=\langle v_r(t)v_s^*(0)\rangle$, the general formula \eqref{eq:JJ_general} gives
$\langle J\rangle=-2\operatorname{Im}C_{12}(0)$ and
$\operatorname{Cov}(J(t),J(0))=2\operatorname{Re}[C_{11}C_{22}^*-C_{12}C_{21}^*](t)$.
For $v_2=v_1$ the covariance vanishes identically, since $J=2\operatorname{Im}|v_1|^2=0$: a bath that drives the two coefficient directions identically produces no coefficient noise, the microscopic counterpart of the statement in Section~\ref{sec:solvay} that a channel coupling identically to every alternative cannot select. A phase-rotated drive $v_2=e^{i\zeta}v_1$ instead gives $J=2\sin\zeta\,|v_1|^2$, with mean $2\sigma_1^2\sin\zeta$ and covariance $4\sin^2\zeta\,|C_{11}(t)|^2$: a coherent response together with noise of white-noise weight $4\sin^2\zeta\int|C_{11}|^2\dd t$.
\end{remark}

\paragraph{A sea from the programme.}
The Ohmic bath of Example~\ref{ex:ohmic} is a stand-in. The programme has, however, one explicit interacting sector with a computed collective spectrum: the relational master-matrix extension of GTD \cite{master}, whose Gram sector reduces to a spin--Sutherland Hamiltonian \cite{sutherland} and whose stable rank-$r$ textures place $M=N/r$ equally spaced sites in each of $r$ orthogonal colours, with a position Hessian of Fourier eigenvalues $4\ell^2(M-\ell)^2$ in natural units \cite[Sec.~6]{projectvii}. The audit of Ref.~\cite{projectvii} notes that this interaction stabilises textures but that ``Hamiltonian stability is not damping'', and its Appendix~C, written for the free master matrix, lists a size-dependent frequency distribution among the three premises a broadening mechanism must violate; what it does not do is compute the spectrum the texture offers to a block attached to it. That spectrum is dispersive, and the computation is short.

\begin{example}[The Sutherland texture as a sea]\label{ex:sutherland}
One colour chain of the texture is $M$ particles of mass $m$ on a circle of length $\mathcal L$ with potential $V=\mathsf g(\pi/\mathcal L)^2\sum_{i<j}\csc^2[\pi(x_i-x_j)/\mathcal L]$, $\mathsf g=s^2/m$ in the notation of Ref.~\cite{projectvii}, linearised about $x_n=n\mathcal L/M$. The Hessian eigenvalues $4\mathsf g(\pi/\mathcal L)^4\ell^2(M-\ell)^2$ give the normal-mode frequencies
\begin{equation}
\omega_\ell=\frac{2\pi^2\sqrt{\mathsf g/m}}{\mathcal L^2}\,\ell(M-\ell),\qquad \ell=1,\ldots,M-1,
\label{eq:sutherland_modes}
\end{equation}
with three properties. All $\omega_\ell$ are integer multiples of $\omega_1'=2\pi^2\sqrt{\mathsf g/m}/\mathcal L^2$, so the linearised chain is exactly periodic with period $T_{\rm rec}=\mathcal L^2/(\pi\sqrt{\mathsf g/m})$, in the spirit of the all-size recurrence that Ref.~\cite{projectvii} proves for the free master matrix, but with a dispersive spectrum. The band is gapless, with $\omega\simeq vk$ at long wavelength, $v=\pi\sqrt{\mathsf g/m}\,M/\mathcal L$, and edge $\Omega_{\max}=\pi^2\sqrt{\mathsf g/m}\,M^2/(2\mathcal L^2)$ at $\ell=M/2$. The low-frequency mode spacing is $4\Omega_{\max}/M$ and the period is $T_{\rm rec}=\pi M^2/(2\Omega_{\max})$. How these scale with $M$ depends on what is held fixed, and that is a choice the programme has not made: the texture of Ref.~\cite{projectvii} is written on a circle of fixed length, for which $\Omega_{\max}\propto M^2$, the period is independent of $M$ and no continuum forms; we take instead the thermodynamic limit at fixed spacing $a=\mathcal L/M$ and fixed $\mathsf g$, $m$ and coupling, for which $\Omega_{\max}=\pi^2\sqrt{\mathsf g/m}/(2a^2)$ and $v=\pi\sqrt{\mathsf g/m}/a$ are fixed, the spacing $4\Omega_{\max}/M$ closes, and the period grows as $M^2$. Ref.~\cite[Sec.~9.2]{microclosure} asks that exactly this scaling be stated. Attach one real block coordinate $q$ of mass $m_b$, free as in Appendix~\ref{app:bathmodel}, to the local strain $e=x_1-x_0$ through $-cqe$ with the counter-term $\frac12c^2\chi_eq^2$ that keeps it free, $\chi_e$ the static strain susceptibility; the four real components of $(v_1,v_2)$ are attached to four independent chains (a common bond would trigger the cancellation of Remark~\ref{rem:common}, and distant bonds of one chain are not independent, since one-dimensional phonon correlations arrive delayed rather than attenuated). In normal modes this is the Caldeira--Leggett model with $\Omega_k\to\omega_\ell$ and $c_k\to ce_\ell/\sqrt m$, $e_\ell$ the strain amplitude of real mode $\ell$, for which the two real modes at frequency $\omega_\ell$, $\ell<M/2$, together carry strain weight $(8/M)\sin^2(\pi\ell/M)$, and the single mode at $\ell=M/2$ carries $4/M$. The spectral density is Ohmic at low frequency,
\begin{equation}
\varrho(\omega)=\gamma m_b\,\omega\,[1+O(\omega/\Omega_{\max})],\qquad
\gamma=\frac{\pi^3c^2}{16\,m\,m_b\,\Omega_{\max}^3},
\label{eq:sutherland_ohmic}
\end{equation}
rising to a van Hove singularity at $\Omega_{\max}$ (Figure~\ref{fig:sutherland}a); the block velocity is Ornstein--Uhlenbeck with rate $\gamma$ in the limit $4\Omega_{\max}/M\ll\gamma\ll\Omega_{\max}$, and Example~\ref{ex:ohmic} applies with $\Gamma=2\gamma$. The conserved total momentum of the Caldeira--Leggett form leaves the zero-frequency line $T/M_{\rm eff}$ of Appendix~\ref{app:bathmodel} with
\begin{equation}
M_{\rm eff}=m_b+\sum_{\text{real modes}}\frac{c^2e_\ell^2}{m\,\omega_\ell^4}\simeq m_b\Bigl[1+\frac{\pi\gamma M}{12\,\Omega_{\max}}\Bigr],
\label{eq:sutherland_meff}
\end{equation}
the analogue of $1+\pi\gamma N/\omega_c$ for the uniform discretisation, so the line vanishes as $12\Omega_{\max}/(\pi\gamma M)$. With $\Omega_{\max}=40$, $\gamma=\frac12$ and $T=m=m_b=1$: at $M=4096$ the mode spacing is $0.078\gamma$, the exact velocity correlation lies within $0.034$ of $e^{-\gamma t}$ for $t\leq12$, $\operatorname{Var}Y(\tau)$ is within $5\%$ of the closed form \eqref{eq:ohmic} out to $\tau=8$ (the residual is the finite-cutoff correction: at $\Omega_{\max}/\gamma=160$ with the same mode spacing it is $3\%$ at $\tau=1$ and $1\%$ at $\tau=8$), the line is $T/M_{\rm eff}=0.069$, and the period is $6.6\times10^5$; at $M=1024$ the spacing is $0.31\gamma$ and the line $0.23$; at $M=256$ the spacing is $1.25\gamma$ and the correlation oscillates about $T/M_{\rm eff}=0.53$; at $M=24$ the spacing is $13\gamma$, the correlation never falls below $0.82$ and averages $T/M_{\rm eff}=0.90$, and the variance grows ballistically throughout (Figure~\ref{fig:sutherland}c; ancillary script). The sea and its spectrum are the programme's; the scaling with $M$, the coupling $c$, its strength and its attachment to the strain are inserted.
\end{example}

\begin{figure}[!htbp]
\centering
\includegraphics[width=\textwidth]{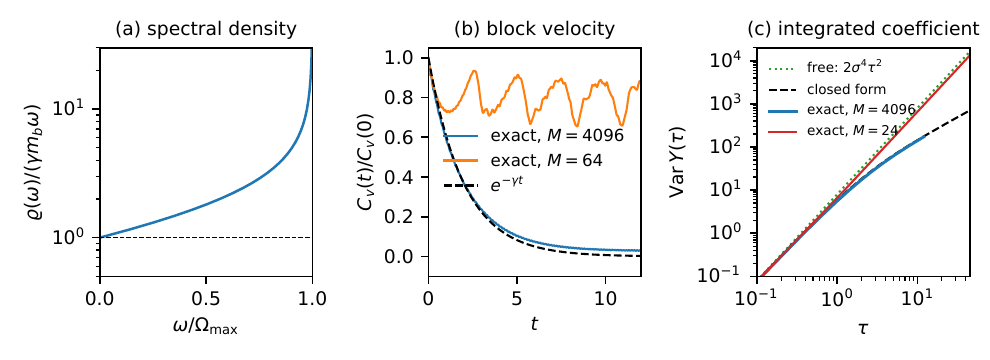}
\caption{A block attached to the Sutherland texture (Example~\ref{ex:sutherland}; $\Omega_{\max}=40$, $\gamma=\frac12$, unit masses and temperature). (a)~Spectral density of the strain coupling in units of the Ohmic form $\gamma m_b\omega$, from the discrete modes at $M=2048$: Ohmic at low frequency, a van Hove singularity at the band edge. (b)~Exact velocity correlation of the block for $M=4096$ and $M=64$ against $e^{-\gamma t}$; the small chain has its lowest mode at $4.9\gamma$ and never decorrelates. (c)~Variance of the integrated coefficient: the free ballistic law, the closed form \eqref{eq:ohmic}, the exact $M=4096$ result, and the exact $M=24$ result, which stays ballistic.}\label{fig:sutherland}
\end{figure}

Two features of the sea deserve separate statement. The coupling has to be local and invariant under a uniform translation of the chain for the Ohmic form: a coupling to a single displacement, whose mode weight is $2/M$ without the factor $\sin^2(\pi\ell/M)$, gives $\varrho(\omega)\simeq\pi c^2/(4m\Omega_{\max}\omega)$, the infrared divergence of one-dimensional displacement fluctuations, and no friction kernel. And the coupling has to act on the block's coordinate. A purely kinetic cross-coupling $k_v\,\dot q\,\dot e$---the velocity--velocity structure of the cross terms $\eta\beta_2\Tr[(D_-q_B^\dagger)(D_+q_F)]$ of the action \eqref{eq:action}, now between different atoms---leaves the Lagrangian independent of $q$, so $p_q=m_b\dot q+k_v\dot e$ is conserved exactly; the block velocity $\dot q=(p_q-k_v\dot e)/m_b$ keeps a non-decaying component, its correlation has zero-frequency weight exactly $T/m_b$, and $\operatorname{Var}Y(\tau)\to8(T/m_b)^2\tau^2$: the free coefficient stays frozen under any amount of kinetic mixing (ancillary script, $M=256$, $k_v=0.3$). This is the coefficient-level form of the audits' findings that the additive GTD action carries no exchange current and that the free master matrix recurs \cite{projectvii,audit}: an interaction that is to unfreeze the coefficient must be a coordinate-dependent vertex, of the type of \eqref{eq:vertex}, not a kinetic bilinear.

\paragraph{Which block the model describes.}
Proposition~\ref{prop:bath} is a statement about the coefficient given stationary Gaussian velocities, and it holds for whatever dynamics produces them. The model of Appendix~\ref{app:bathmodel} and Example~\ref{ex:sutherland} produce them by giving the block the positive kinetic term $\frac12(|\dot Q_1|^2+|\dot Q_2|^2)$. The Grassmann block of \eqref{eq:free_J} does not have that term: its kinetic form is the coefficient $J$ itself, of signature $(2,2)$, and one may ask whether a sea could be coupled to such a block without a growing mode. The question is not the right one, for the following reason.

\begin{lemma}[Positive states do not see the Grassmann coefficient]\label{lem:positive}
Let $\mathcal A$ be the exterior algebra generated by the dynamical generators $\theta_1,\theta_2$ of \eqref{eq:free_J}, with $\theta_r^\dagger=\theta_r$, $\theta_r^2=0$, $\theta_1\theta_2=-\theta_2\theta_1$. Every positive normalised linear functional $\omega$ on $\mathcal A$ satisfies $\omega(\theta_1\theta_2)=0$, hence $\omega(V_F^\dagger V_F)=iJ\,\omega(\theta_1\theta_2)=0$ for every value of the coefficient $J$. On the Clifford reconstruction $\theta_r^2=\frac12$, every positive normalised functional has $\omega(i\theta_1\theta_2)=-\varsigma/2$ with $\varsigma\in[-1,1]$, and
\begin{equation}
\omega(V_F^\dagger V_F)=\tfrac12(|v_1|^2+|v_2|^2)-\varsigma\operatorname{Im}(v_1^*v_2)\;\geq\;\tfrac12(1-|\varsigma|)(|v_1|^2+|v_2|^2),
\label{eq:clifford_kinetic}
\end{equation}
positive semidefinite for every state and definite unless $|\varsigma|=1$.
\end{lemma}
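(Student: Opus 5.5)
The plan has two independent halves, one for the Grassmann algebra and one for the Clifford algebra. Both rest on the Cauchy--Schwarz inequality for positive functionals, $|\omega(x^\dagger y)|^2\le\omega(x^\dagger x)\,\omega(y^\dagger y)$, which holds for any positive linear functional on a $*$-algebra.

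\emph{Exterior algebra.} First fix the involution. With $\theta_r^\dagger=\theta_r$ and the antilinear anti-automorphism convention, one gets $(\theta_1\theta_2)^\dagger=\theta_2\theta_1=-\theta_1\theta_2$. So $b:=\theta_1\theta_2$ is anti-Hermitian and $i\theta_1\theta_2$ is Hermitian, consistent with $V_F^\dagger V_F=iJ\theta_1\theta_2$ in \eqref{eq:free_J}. The key step is to write $b=\theta_1^\dagger\theta_2$ and apply Cauchy--Schwarz with $x=\theta_1$, $y=\theta_2$:
\[
|\omega(\theta_1\theta_2)|^2\le\omega(\theta_1^2)\,\omega(\theta_2^2)=0,
\]
since $\theta_r^2=0$. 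This is the $\beta_1\beta_2\mapsto0$ argument already quoted for the auxiliary factors. Linearity then gives $\omega(V_F^\dagger V_F)=iJ\,\omega(b)=0$ for every $J$. I would also note the structural reason: the nilpotent ideal generated by the $\theta_r$ lies in the kernel of every positive state, which is the theorem cited from Ref.~\cite{physrecon}. A direct cross-check is also available. If $\theta_r$ is any Hermitian element with $\theta_r^2=0$, then $\omega(\theta_r)^2\le\omega(\theta_r^2)=0$, so $\omega$ kills the generators and everything they generate.

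\emph{Clifford reconstruction.} Now take $\theta_r^2=\tfrac12$ and $\theta_1\theta_2=-\theta_2\theta_1$. The operator $\sigma:=2i\theta_1\theta_2$ is Hermitian, and
\[
\sigma^2=-4\theta_1\theta_2\theta_1\theta_2=4\theta_1^2\theta_2^2=1 .
\]
Its spectrum is therefore contained in $\{\pm1\}$. For a positive normalised $\omega$, this gives $|\omega(\sigma)|\le1$, either by Cauchy--Schwarz, $\omega(\sigma)^2\le\omega(\sigma^2)=1$, or by representing the algebra as $2\times2$ matrices with $\theta_r=\sigma_r/\sqrt2$. Define $\varsigma$ by $\omega(i\theta_1\theta_2)=-\varsigma/2$; then $\varsigma\in[-1,1]$.

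Next expand $V_F^\dagger V_F$ without dropping the diagonal terms:
\[
V_F^\dagger V_F=\sum_{rs}v_r^*v_s\theta_r\theta_s=\tfrac12(|v_1|^2+|v_2|^2)+(v_1^*v_2-v_2^*v_1)\theta_1\theta_2 .
\]
The cross coefficient is $2i\operatorname{Im}(v_1^*v_2)$, so the second term equals $2\operatorname{Im}(v_1^*v_2)\,(i\theta_1\theta_2)$. Taking $\omega$ gives $\tfrac12(|v_1|^2+|v_2|^2)-\varsigma\operatorname{Im}(v_1^*v_2)$, which is the expression in \eqref{eq:clifford_kinetic}. The bound follows from $|\operatorname{Im}(v_1^*v_2)|\le|v_1||v_2|\le\tfrac12(|v_1|^2+|v_2|^2)$. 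For $|\varsigma|<1$ this makes the form definite. For $|\varsigma|=1$ the form vanishes on $v_2=\mp i v_1$ (choosing the sign according to $\varsigma$), and a state with $|\varsigma|=1$ is a pure eigenstate of $\sigma$, i.e.\ a pure sector.

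The expected obstacle is only bookkeeping of conventions: the sign of $(\theta_1\theta_2)^\dagger$, and the sign relating $\varsigma$ to $\operatorname{Im}(v_1^*v_2)$. I would fix these against \eqref{eq:free_J} at the outset, checking that $iJ\theta_1\theta_2$ is Hermitian, and verify the final sign in the explicit $2\times2$ representation.
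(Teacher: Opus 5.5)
Your proof is correct and follows the paper's route. Cauchy--Schwarz for the positive functional kills the nilpotent bilinear (you pair $\theta_1$ with $\theta_2$, the paper pairs $1$ with $\theta_1\theta_2$ using $(\theta_1\theta_2)^\dagger\theta_1\theta_2=0$), your representation-free identity $(2i\theta_1\theta_2)^2=1$ does the work of the paper's Pauli representation $\theta_r=\sigma_r/\sqrt2$, and the final bound via $|\operatorname{Im}(v_1^*v_2)|\le\frac12(|v_1|^2+|v_2|^2)$ is the paper's.

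Two small points. Your side remark that vanishing on the generators gives vanishing on ``everything they generate'' does not follow by itself, since $\omega$ is not multiplicative; your main pairing, or the nilpotent-ideal theorem, is what actually covers $\theta_1\theta_2$. Also, the null direction at $|\varsigma|=1$ is $v_2=i\varsigma v_1$, so fix the sign in your $v_2=\mp iv_1$.
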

\begin{proof}
With $x=\theta_1\theta_2$ one has $x^\dagger=\theta_2\theta_1=-x$ and $x^\dagger x=-x^2=0$, so Cauchy--Schwarz gives $|\omega(x)|^2\leq\omega(1)\omega(x^\dagger x)=0$; this is the argument of Appendix~\ref{app:micro} for $b=\beta_1\beta_2$, applied to the dynamical generators, and it is a case of the theorem of Ref.~\cite[Thm.~4.1]{physrecon} that a positive normalised functional on a finite-dimensional star algebra factors through the quotient by any star-stable nilpotent ideal. On the Clifford reconstruction $\theta_1=\sigma_1/\sqrt2$, $\theta_2=\sigma_2/\sqrt2$ and $i\theta_1\theta_2=-\sigma_3/2$, whose expectation in a state of the $2\times2$ matrix algebra is $-\varsigma/2$ with $\varsigma=\omega(\sigma_3)$; the bound follows from $|\operatorname{Im}(v_1^*v_2)|\leq\frac12(|v_1|^2+|v_2|^2)$.
\end{proof}

So the block whose kinetic form is $J$ admits no state in which $J$ is visible, and a stationary preparation for it is not a question that arises: a positive state space for the dynamical generators is a reconstruction with a changed product, exactly as it is for the auxiliary ones in Section~\ref{sec:projection_result}, and Ref.~\cite{physrecon} reaches the same conclusion for the marked Grassmann realisation of the master matrix (``a different physical product is therefore indispensable''). The mixed-sign instability of a $(2,2)$ kinetic form coupled to a positive bath---a negative-kinetic direction acquires negative friction and a growing mode at a rate close to $\gamma$ (ancillary script)---remains a true statement about such forms; by the lemma, no positive state realises one, so it does not arise. On the Clifford reconstruction the kinetic form \eqref{eq:clifford_kinetic} is positive semidefinite in every state and definite for $|\varsigma|<1$; in the tracial state $\varsigma=0$ it is $\frac12(|v_1|^2+|v_2|^2)$, the kinetic term of the model with unit mass and $c_0\eta^3=1$, with $J$ the coefficient of the traceless part. A bilinear coupling of that block to a positive sea has the Caldeira--Leggett structure, and the model of this section describes the reconstructed block in its tracial state---subject to two provisos. In a sector state $\varsigma\neq0$ the kinetic form is $\frac12\mathbf v^T(\id-\varsigma\mathsf M)\mathbf v$, the stationary velocity covariance is $T(\id+\varsigma\mathsf M)/(1-\varsigma^2)$, and the coefficient acquires the mean $\langle J\rangle=4\varsigma T/(1-\varsigma^2)$: a coherent gain of the kind Section~\ref{sec:failures} warns against, so causal consistency excludes a non-tracial sector state unless that gain is attached to the identity. And the Lagrangian term \eqref{eq:free_aux} is $-c_0\eta^3b\Tr(V_F^\dagger V_F)$ with $b^\dagger=-b$, so in any positive state $\omega(b)$ is imaginary: the real kinetic energy that the block dynamics needs is the equilibrium reading $b\mapsto-1$ of Ref.~\cite{book}, which is not the value of $b$ in any positive state, while the coefficient whose history drives the ASA term uses the sector value $b\mapsto s_\beta i\rho_0$ of Section~\ref{sec:projection_result}. The two readings are mutually exclusive as values of one state; the monograph obtains the first by its scalar projection, a functional that is not positive, and the model of this section assumes a real kinetic term for the block from that or another source. The lemma removes the sign obstruction on the side of the dynamical generators; it does not remove the need for that input. What the reconstruction does not supply is which positive state, and in particular the polarisation $\varsigma$ and the sign $s_\beta$; the conditional repair of Ref.~\cite[Sec.~4.1]{physrecon}, a one-parameter family of positive product states, is a freedom of the same kind. The result of this subsection is therefore the following. The ingredient whose absence the free calculation exposed---an integrable correlation of the coefficient with a finite white-noise weight---is produced, with a computable strength and an accompanying friction, as soon as the coefficient velocities are mixed by a continuum of unresolved modes through a coordinate-dependent coupling, and the programme's own texture, taken long at fixed spacing, supplies such a continuum with the dispersion, criterion and recurrence of Example~\ref{ex:sutherland}. A white-noise weight is necessary for a law of the form \eqref{eq:SSE} and not sufficient: the map from the coefficient to a projected anti-self-adjoint operator, the counter-term and the response of Section~\ref{sec:history}, with the preparation probabilities of the effective state, is what Section~\ref{sec:closure} asks for, and the next subsection states what the sea's response must satisfy.

\subsection{What the sea's response must satisfy: an Einstein relation}\label{sec:einstein}
Proposition~\ref{prop:history} asks two things of a derived history: that the linear law carry the counter-term with $\varkappa=1$ where the records can see it, and that the history lean towards the leading record with gain $2\mathsf D$. If the history is the integrated coefficient of a sea, the lean is the sea's response to the system, and a sea in equilibrium does not respond arbitrarily: its response is tied to its fluctuations. The following proposition makes the consequence for the gain explicit. It assumes the mean-field form of the back-reaction, in which the sea sees the system through a record expectation; that assumption is the nonlinear ingredient, and the proposition says what its gain must be, not that the mechanism is derived. It says nothing about the counter-term, and we return to that afterwards.

\begin{proposition}[Einstein relation for the response gain]\label{prop:einstein}
Let the history be $Y=g\int J\,\dd\tau$, with $J$ a phase-space function of a classical sea in canonical equilibrium at temperature $T$, of stationary autocovariance $C_J(t)$ with $C_J(\infty)=0$, white-noise weight $D_J=\int_{-\infty}^\infty C_J\,\dd t$ and correlation time $\tau_c:=D_J/[2C_J(0)]$, so that in the white-noise limit $\mathsf D=g^2D_J$ for one channel. Let the system act on the sea through the perturbation $-\lambda_{\rm b}\,x(\tau)\,J$ of the sea's Hamiltonian, with $x=\langle L\rangle_\psi$ and the same phase-space function $J$. Then, to linear order in $\lambda_{\rm b}$, the mean of the coefficient along a given system history is
\begin{equation}
\langle J\rangle(\tau)=\lambda_{\rm b}\int_0^\tau\chi_J(\tau-\tau')\,x(\tau')\,\dd\tau',\qquad
\chi_J(t)=-\frac{\theta(t)}{T}\,\dot C_J(t),
\label{eq:kubo}
\end{equation}
and when $x$ varies slowly on the scale $\tau_c$ the history acquires, in its white-noise limit, the drift $\mathsf b=g\lambda_{\rm b}C_J(0)\,x/T$. Comparing with the gain $2\alpha\mathsf D\bar\ell$ of Proposition~\ref{prop:history},
\begin{equation}
\alpha=\frac{g\lambda_{\rm b}C_J(0)/T}{2g^2D_J}=\frac{\lambda_{\rm b}}{4gT\tau_c},
\qquad\text{so that the Born gain requires}\qquad \lambda_{\rm b}=4gT\tau_c .
\label{eq:einstein}
\end{equation}
For the Ohmic sea of Example~\ref{ex:ohmic}, $C_J(t)=2\sigma_1^2\sigma_2^2e^{-\Gamma|t|}$, $\tau_c=1/\Gamma$ and the condition is $\lambda_{\rm b}\Gamma=4gT$. Outside the slow-variation regime the drift carries the memory kernel $\chi_J$ and the normalised law is non-Markovian; the Markov law of Section~\ref{sec:joint} is the limit $\tau_c\ll1/(2\delta^T\mathsf D\delta)$, $\delta$ an eigenvalue difference.
\end{proposition}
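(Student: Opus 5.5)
The proof is classical linear response followed by a white-noise matching. First I derive the Kubo formula \eqref{eq:kubo} for a canonical ensemble. Write the perturbed Liouville equation for the phase-space density $f=f_0+\delta f$, where $f_0\propto e^{-H/T}$. The perturbation $-\lambda_{\rm b}xJ$ enters the Liouville operator through the Poisson bracket $\{ -\lambda_{\rm b}xJ,f_0\}=\lambda_{\rm b}x\{J,H\}f_0/T=-\lambda_{\rm b}x\dot J f_0/T$. Solving to first order with the unperturbed flow, and using stationarity to move the time evolution onto $J$, gives
\[
\langle J\rangle(\tau)=\frac{\lambda_{\rm b}}{T}\int_0^\tau\langle J(\tau-\tau')\dot J(0)\rangle_0\,x(\tau')\,\dd\tau' .
\]
Stationarity then gives $\langle J(t)\dot J(0)\rangle=-\dot C_J(t)$, which is the stated $\chi_J$. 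The sign needs care; I will fix it by checking the static limit. For constant $x$ one should get $\langle J\rangle=\lambda_{\rm b}xC_J(0)/T$, the classical static susceptibility of an observable coupled linearly to its own Hamiltonian term. Indeed $\int_0^\infty\chi_J=-[C_J(\infty)-C_J(0)]/T=C_J(0)/T$, which uses $C_J(\infty)=0$. I take $\langle J\rangle_0=0$, or absorb it into the identity part, which Proposition~\ref{prop:history}(ii) says is immaterial.

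Next comes the slow-variation step. If $x$ changes little over $\tau_c$, replace $x(\tau')$ by $x(\tau)$ in the convolution. This gives $\langle J\rangle\simeq\lambda_{\rm b}C_J(0)x/T$, up to corrections of order $\tau_c\dot x$ and a transient $e^{-\tau/\tau_c}$-type term. The history increment $\dd Y=gJ\,\dd\tau$ then splits into a conditional mean $g\langle J\rangle\dd\tau$ and a centred fluctuation. In the white-noise limit the centred part is a martingale of quadratic variation $g^2D_J\,\dd\tau$, by the central limit theorem invoked in Example~\ref{ex:ohmic}. So $\mathsf b=g\lambda_{\rm b}C_J(0)x/T$ and $\mathsf D=g^2D_J$. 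One point should be stated: to linear order the fluctuation statistics are unperturbed, so $\mathsf D$ is unchanged by the coupling.

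The matching to Proposition~\ref{prop:history} with $x=\bar\ell$ is then algebra. Setting $g\lambda_{\rm b}C_J(0)/T=2\alpha g^2D_J$ and using $D_J=2C_J(0)\tau_c$ gives $\alpha=\lambda_{\rm b}/(4gT\tau_c)$, so $\alpha=1$ means $\lambda_{\rm b}=4gT\tau_c$. For the Ohmic case I read off $C_J$ from \eqref{eq:ohmic}. Then $D_J=4\sigma_1^2\sigma_2^2/\Gamma$ and $C_J(0)=2\sigma_1^2\sigma_2^2$, so $\tau_c=1/\Gamma$.

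The main obstacle is making the slow-variation and white-noise limits jointly consistent, since $x=\bar\ell$ is itself driven by $Y$ and is not smooth. I will handle this by comparing time scales. By \eqref{eq:p}, $\bar\ell$ changes appreciably on the selection time $1/(2\delta^T\mathsf D\delta)$. Replacing the convolution by its local value therefore costs a relative error of order $\tau_c\cdot2\delta^T\mathsf D\delta$, and the Markov law holds exactly in the stated regime $\tau_c\ll1/(2\delta^T\mathsf D\delta)$. Outside that regime I keep the full kernel $\chi_J$. The drift then depends on the past of $x$, which is the non-Markovian statement. I will present the statement as holding to leading order in this ratio and to linear order in $\lambda_{\rm b}$, not as a rigorous limit theorem.
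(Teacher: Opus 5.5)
Your proposal is correct and follows the paper's own argument. The paper's proof is the classical Kubo formula for the perturbation $-\lambda_{\rm b}xJ$ in canonical equilibrium, its zero-frequency value $C_J(0)/T$ using $C_J(\infty)=0$, and a comparison with Proposition~\ref{prop:history} at $\mathsf D=g^2D_J$, $D_J=2\tau_cC_J(0)$; you spell out the Liouville derivation and the time-scale estimate, which the paper only cites or states. Drop the aside about absorbing a nonzero $\langle J\rangle_0$ via Proposition~\ref{prop:history}(ii): for one channel with distinct eigenvalues, (ii) declares no nonzero drift component immaterial, so such a mean multiplying $L$ is a coherent gain, and you need $\langle J\rangle_0=0$, which the statement's formula for $\langle J\rangle(\tau)$ already presupposes.
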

\begin{proof}
Equation~\eqref{eq:kubo} is the classical Kubo formula for the response of a phase-space observable to a perturbation $-\lambda_{\rm b}xJ$ of the Hamiltonian in canonical equilibrium \cite{kubo}; its zero-frequency value is $\int_0^\infty\chi_J=[C_J(0)-C_J(\infty)]/T=C_J(0)/T$. The rest is comparison of the gain with Proposition~\ref{prop:history} at $\mathsf D=g^2D_J$ and $D_J=2\tau_cC_J(0)$; the counter-term coefficient is not touched.
\end{proof}

Four consequences follow, and one caution. First, the response gain is not a free parameter of a sea in equilibrium: $\alpha$ is fixed by the coupling, the temperature and the correlation time, and the Born gain holds on the codimension-one locus \eqref{eq:einstein}. Second, by Proposition~\ref{prop:history}(iv) and Section~\ref{sec:failures}, a sea off that locus produces records that signal; the causal consistency requirement of this paper, applied to a sea that responds to the system, therefore includes the requirement that its response gain and its noise strength stand in the Einstein ratio \eqref{eq:einstein}. Third, the sign matters: a response of the wrong sign relative to the noise channel gives $\alpha<0$, for which the log-odds of Proposition~\ref{prop:history}(iii) are recurrent. The sign is not innocent for the GTD coefficient, which is built from velocities. The Kubo formula refers to the phase-space function that appears in the perturbation; if the perturbation is $-\lambda_{\rm b}xJ(p)$ in the Hamiltonian while the history is driven by the velocity bilinear $J(v)$, the perturbation changes the relation between momenta and velocities and the static response of $J(v)$ acquires a contact term that reverses its sign, $\langle J(v)\rangle=-\lambda_{\rm b}xC_J(0)/T$ against $+\lambda_{\rm b}xC_J(0)/T$ for $J(p)$ (ancillary script, exact Gaussian evaluation); the magnitude of \eqref{eq:einstein} is unchanged, and which sign of $\lambda_{\rm b}$ gives $\alpha>0$ is part of the specification of the coupling. Fourth, the relation is between the gain of the sea's response to a c-number and its fluctuations; a sea that responds to the operator $L$, and is then averaged over, gives the Lindblad equation \eqref{eq:ensemble} and no selection (Appendix~\ref{app:bath}), so the mean-field form of the back-reaction is where the nonlinearity of a collapse law enters, and \eqref{eq:einstein} is the one value of its gain compatible with no-signalling once the counter-term is supplied.

The caution concerns the counter-term, which the Einstein relation does not touch. A history produced by a smooth coefficient drives the linear law $\dot\phi=gJL\phi$, whose white-noise limit has $\varkappa=-1$ by Proposition~\ref{prop:history}(ii), whatever the gain. On the locus \eqref{eq:einstein} such a law is Born only where the counter-term is invisible by Proposition~\ref{prop:history}(ii)---two records with eigenvalues symmetric about zero, or records on which $\ell_j^T\mathsf D\ell_j$ is constant, which includes the local detector channels of Section~\ref{sec:solvay} with equal diffusion; for a projector $L$ with eigenvalues $1,0$ it selects the eigenvalue-one record with probability one from any initial weights, and for three records at $\ell=1,0,-1$ it never selects the middle one, whereas adding the term $-g^2D_JL^2\dd\tau$ (Stratonovich) to the linear law restores Born to within $0.03$ in both cases, the residual being the finite-$\Gamma_s$ correction (ancillary script, Ornstein--Uhlenbeck sea with relaxation rate $\Gamma_s=20$ at selection rate $2\mathsf D\delta^2=1$; for the symmetric two-record case the exit frequencies on the locus are Born to within $0.003$, and for $\lambda_{\rm b}\Gamma_s/(4gT)=\frac12$ and $2$ they follow $I_{p_0}(\alpha,\alpha)$). The counter-term is therefore a second, independent demand on the projected generator whenever the record operator has three or more collinear eigenvalues, as the collective operator of Example~\ref{ex:amplification} does: besides the fluctuating term $gJL$ it must contain the deterministic term $-g^2D_JL^2$, quadratic in the record operator and with a coefficient fixed by the noise strength. For the local detector channels of the Solvay model it is invisible, and there the gain alone decides. In the quantum derivation of the CSL law this term is the anticommutator part of the open-system generator, the sea's second-order response to the operator; a mean-field sea does not produce it, and a GTD projection has to. Whether the GTD coarse-graining places the sea on the locus \eqref{eq:einstein} and supplies the counter-term is the content of the response calculation that Section~\ref{sec:closure} asks for; Refs.~\cite{projectvii,physrecon} observe that an equilibrium fluctuation--dissipation relation does not by itself produce the nonlinear response fraction, and the present section states exactly what else is needed: the value of $\lambda_{\rm b}$ relative to $gT\tau_c$, and the quadratic term. If the coefficient noise retains its finite correlation time at the level of the effective law, that law is a coloured-noise collapse model \cite{diosigisinstrunz,adlerbassi,bassiferialdi,tilloywiseman}, and the Markov limit assumed in Section~\ref{sec:joint} must be justified by comparing $\tau_c$ with the induced collapse time.

\subsection{What the projection must do}\label{sec:projection_result}
The coefficient \eqref{eq:free_aux} carries the auxiliary factor $b=\beta_1\beta_2$, and the monograph uses two different proposals for what becomes of it: the equilibrium functional takes $b\mapsto-1$ and gives a real coefficient Hamiltonian, while the collapse discussion proposes $b\mapsto s_\beta i\rho_0$, with $s_\beta$ a sign and $\rho_0$ a positive constant, and a non-Hermitian effective equation \cite[Secs.~7.3.6 and 9.2.1]{book}. The most direct third possibility, representing the auxiliary product and adjoint unchanged on a positive Hilbert space, gives zero: a star representation $\pi$ preserves products and adjoints, and since $b^\dagger=-b$ and $b^2=0$ imply $b^\dagger b=0$, one has $\|\pi(b)u\|^2=\langle u,\pi(b^\dagger b)u\rangle=0$ for every $u$, so $\pi(b)=0$ (Appendix~\ref{app:micro}). This is the same reason quantisation uses Clifford rather than Grassmann generators. The image of \eqref{eq:free_aux} is then zero whatever the preparation and whatever operator coefficient is appended; for two stable records this route gives $p(\tau)=p_0$ and a vanishing ASA history, an identity operation that cannot signal and cannot select. The random coefficient history of Proposition~\ref{prop:bath} must not be confused with this zero physical history. A nonzero replacement such as $b\mapsto i\rho_0$ is not a positive expectation on the auxiliary algebra, by Cauchy--Schwarz; the monograph calls it a nonhomomorphic coarse-graining. It can, however, be read as a homomorphism of a different algebra followed by a sector choice: Clifford generators with $\beta_r^2=\frac12$ give $b^\dagger=-b$ and $b^2=-\frac14$, so $b$ has the two eigenvalues $\pm\frac i2$, which is $b\mapsto s_\beta i\rho_0$ with $\rho_0=\frac12$. That reading has to be given on the interacting observables together with the preparation probabilities and the effective evolution it induces, and Lemma~\ref{lem:positive} shows that the same change of product for the dynamical generators is forced, not merely presupposed, by the existence of any positive state in which the coefficient is visible; Ref.~\cite[Thm.~4.1]{physrecon} states the general theorem and adds the caution that it does not apply to a reconstructed product merely because Grassmann symbols were used to calculate it. The reading also carries a warning: with Clifford generators the velocity bilinear acquires the scalar part $\frac12(|v_1|^2+|v_2|^2)$, a coefficient of nonzero mean, and whether the projection attaches it to the identity, where it is harmless, or to a collective operator, where it is a coherent gain of the kind that signals in Section~\ref{sec:failures}, is part of what must be specified. What the sector sign $s_\beta$ does and does not affect can be said now. The centred noise of the collapse law is insensitive to it, since $W$ and $-W$ have the same law; it enters the sign of that coherent gain, and it enters the relative sign of the response and the noise in Proposition~\ref{prop:einstein}---where the wrong sign gives $\alpha<0$ and a recurrent law---provided the back-reaction coupling $\lambda_{\rm b}$ does not itself carry $s_\beta$, in which case the relative sign is unchanged. We have not found a derivation relating $s_\beta$ to the orientation of the Lorentzian leaf of Section~\ref{sec:splitform}; the two indices of opposite sign that Ref.~\cite{projectiii} reports on the two $\omega$-conjugate orientations are a hint in that direction and no more.

Two further diagnostics bound the possibilities. A self-adjoint kinetic completion, in which a positive boson-dependent matrix $S(q)$ normalises the fermionic kinetic term and induces the Hermitian connection $A_i=\frac i2[S^{-1},\partial_iS]$, gives a self-adjoint Hamiltonian and hence no ASA part at all, however entangling it is. And the positive fermionic reservoir of Appendix~\ref{app:bath}, coupled through a Hermitian current to a conserved record observable $Q$, supplies noise with a finite white-noise weight $D_B=\pi\hbar v^2/(2E_W)$ in the continuum limit and dephases coherences at the rate $\Gamma_{\rm d}=\lambda^2D_B/\hbar^2$, but, because $[P_j\otimes\id,H]=0$, it preserves every record-sector weight and selects nothing; in its tracial preparation the bath carries no distinguishable final state for different records either. That is decoherence \cite{zurek}, and it illustrates the random-phase comparison of Section~\ref{sec:normdrift}: both the random-phase law and the CSL law with $L=\sqrt{\Gamma_{\rm d}}\,Q$ reproduce $\dot\rho=-\frac{\Gamma_{\rm d}}2[Q,[Q,\rho]]$, and only the second changes weights.

\section{The calculation that remains}\label{sec:closure}
\begin{table}[!htbp]\small\caption{What has been established, and where.}\label{tab:status}
\begin{tabularx}{\textwidth}{@{}p{.30\textwidth}X@{}}
\toprule Question & Answer in this paper\\\midrule
Does the $(3,3)$ form give a causal order? & It gives a null cone and $SO(3,3)$ invariance, not the two-component timelike structure of a future--past order (Section~\ref{sec:splitform}).\\[3pt]
Does collapse in Connes time remove the hypersurface problem? & It defers it: the question is not posed at the level at which selection happens, and returns as the requirements that emergent records be non-signalling and that the emergent description be covariant (Sections~\ref{sec:connes}, \ref{sec:continuum}).\\[3pt]
Does the effective law select one record with Born frequencies? & Yes, asymptotically, for distinct stable signatures and sustained exposure, with the instrument trace rule as premise (Section~\ref{sec:continuous}).\\[3pt]
Does norm conservation fix the collapse drift? & No. The bare norm-preserving law is solved exactly and never selects; the Born-restoring response is tangent to the sphere (Proposition~\ref{prop:bare}).\\[3pt]
What must a derived history satisfy? & Its conditional drift must be $2\mathsf D\bar\ell$ and, for general record operators, its linear law must carry the counter-term with unit coefficient; a gain $\alpha\in(0,1)\cup(1,\infty)$ completes with the wrong frequencies and signals. For an equilibrium sea the gain is fixed by fluctuation--dissipation, and the Born gain is the Einstein relation $\lambda_{\rm b}=4gT\tau_c$; the counter-term is a separate demand (Propositions~\ref{prop:history}, \ref{prop:einstein}).\\[3pt]
Can a split-signature geodesic length be the norm? & Not on a linear space of states without selecting a maximal positive subspace, which is the same kind of selection that picks a Lorentzian sector (Section~\ref{sec:geodesicnorm}).\\[3pt]
Are Solvay and Bell records non-signalling? & Yes, by the standard theorem, for local complete instruments on an assumed emergent laboratory structure, with Bell correlations as inputs (Section~\ref{sec:nosig}).\\[3pt]
Where may entanglement enhancement enter? & In the unravelling, without signalling but without ensemble consequences and only as a suppression below the operator-fixed rate; or in the spectrum of fixed collective operators, observably. Not as a scalar rate on the generator (Propositions~\ref{prop:ent}, \ref{prop:bound}, Example~\ref{ex:amplification}).\\[3pt]
Does the GTD coefficient fluctuate? & Not in the free theory, and not under any kinetic cross-coupling, which conserves the block momentum. Once its velocities are mixed by a continuum through a coordinate-dependent coupling it is a mixing process with white-noise weight $D_J=\pi^{-1}\!\int\Tr[\mathsf M\mathsf S\mathsf M\mathsf S]$; the Sutherland texture of the master matrix, taken long at fixed spacing, is such a continuum, with Ohmic friction, criterion $4\Omega_{\max}/M\ll\gamma$ and period $\pi M^2/(2\Omega_{\max})$ (Proposition~\ref{prop:bath}, Example~\ref{ex:sutherland}).\\[3pt]
Is the Grassmann block the physical block? & No. Positive states annihilate the bilinear whose coefficient is $J$; on the Clifford reconstruction the kinetic form is positive semidefinite in every state, and the bath model describes the reconstructed block in its tracial state, given a real kinetic term that no positive value of the auxiliary factor supplies (Lemma~\ref{lem:positive}).\\[3pt]
Is a nonzero ASA collapse channel derived? & No. An unchanged positive representation sends the coefficient to zero; a Clifford treatment of the generators realises the proposed replacement $b\mapsto s_\beta i\rho_0$ with $\rho_0=\frac12$ as a homomorphism plus sector choice, but the interacting projection, the counter-term, the response gain and the joint law \eqref{eq:push} remain to be constructed (Section~\ref{sec:projection_result}).\\\bottomrule
\end{tabularx}
\end{table}
Let $X$ denote regulated microscopic GTD variables and $\Phi_\tau X$ their deterministic interacting evolution. A preparation measure $\mu_\psi$ describes the unresolved microscopic conditions consistent with an effective initial state $\psi$; $\mathcal R$ maps a microscopic trajectory to its effective state history and $\mathcal Y$ to its effective fluctuation history. For sets $A,E$ of histories the joint physical probability is
\begin{equation}
\mathbb P_{{\rm GTD},\psi}(A,E)=\int\mathbf1_{\{\mathcal R[\Phi_\cdot X]\in A\}}\mathbf1_{\{\mathcal Y[\Phi_\cdot X]\in E\}}\,\mu_\psi(\dd X).
\label{eq:push}
\end{equation}
In words: count the microscopic preparations whose trajectories give both the specified state history and the specified fluctuation history. Everything in this paper is a constraint on \eqref{eq:push}. To reproduce the effective law it must give \eqref{eq:mix} and \eqref{eq:update} together, or the joint generator \eqref{eq:generator}, including the drift of $Y$ correlated with the changes of $p$, which no single-force covariance supplies; a coarse-graining that merely conserves the norm fails it (Proposition~\ref{prop:bare}); its history must lean towards the leading record with gain $2\mathsf D$, which for a sea in equilibrium is the Einstein relation \eqref{eq:einstein} (Propositions~\ref{prop:history}, \ref{prop:einstein}), and its linear law must carry the counter-term with unit coefficient, which a coefficient drive does not supply; state dependence may sit only in the unravelling or in the operators (Propositions~\ref{prop:ent}, \ref{prop:bound}); and a split-signature length cannot be its norm without the subspace selection of Section~\ref{sec:geodesicnorm}. The coefficient entering $\mathcal Y$ acquires a white-noise weight as soon as its velocities are mixed by a continuum through a coordinate-dependent coupling (Proposition~\ref{prop:bath}), and the programme's own texture supplies such a continuum in the thermodynamic limit at fixed spacing (Example~\ref{ex:sutherland}); what \eqref{eq:push} still needs is the projection $\mathcal R$---the positive effective state space, which by Lemma~\ref{lem:positive} is a reconstruction with a changed product and a sector parameter, the collective alternatives, the projected ASA generator with its treatment of $b$, the counter-term and the response gain, and the preparation probabilities---evaluated for one interacting sector. Once these are fixed, the probabilities cannot be reweighted to obtain the desired answer.

The practical sequence: choose one interacting sector and a normalisable preparation; the Sutherland texture of Example~\ref{ex:sutherland} is the natural choice, since its spectrum is computed and the coordinate-dependent coupling it needs is of the type of the vertex \eqref{eq:vertex}. Construct its physical projection and derive the projected ASA term together with the response, carrying the sea's reaction along: compute the symmetrised and retarded coefficient kernels, as the audit of Ref.~\cite{audit} also demands, and read off $\varkappa$ and $\alpha$ from them by Propositions~\ref{prop:history} and \ref{prop:einstein}---a derivation that returns $\alpha\neq1$ or $\varkappa\neq1$ has produced a law that signals or selects with the wrong frequencies, and is the most informative negative result available. If a geometric length is used for normalisation, identify the positive quantity that becomes $\|\psi\|^2$ and show that its conservation survives the coupled dynamics. Compute the joint coarse-grained law and test it against \eqref{eq:generator}, or prove mean Born preservation and completed selection directly; ensembles with the same density matrix must give the same averaged law. Then reconstruct the classical description and allowed operations, and apply the causal and experimental tests, including coherent response terms and existing collapse bounds \cite{bassi}.

Table~\ref{tab:status} collects the answers. The picture with which we began survives with its burden relocated. Selection in Connes time defers the hypersurface question, and the effective law constructed here, once its records are read in emergent spacetime, is non-signalling by the standard theorem; the protection holds only for laws that are linear on ensembles and composed of complete local operations, the obvious shortcuts lie outside that class, the state dependence inside it is confined to the unravelling and cannot raise the rate, and for a history produced by a sea the same protection fixes the sea's response gain to its noise strength. On the microscopic side the programme's own texture, taken long at fixed spacing, gives the frozen coefficient an integrable correlation, a computable white-noise strength and an accompanying friction, once a coordinate-dependent coupling to it is granted, and the block that carries the coefficient is a reconstructed one on which the kinetic form is positive. What remains is the projection: the reconstruction and its sector, the coupling, the counter-term and the gain. The question is no longer whether a suitable collapse equation can be written; it is whether the interacting GTD dynamics, its preparation and its physical projection produce that equation---and the two numbers that decide it, $\varkappa$ and $\alpha$, are now things a derivation can be asked to return.

\FloatBarrier
\appendix
\section{The microscopic action and two projection tests}\label{app:micro}
\subsection{The action}
With $\eta=L_{\rm Pl}^2/L^2$ the ratio of the Planck length to the regulator length, $\nu=\alpha/L$ a frequency fixed by a dimensionless constant $\alpha$, and $D_\pm=\partial_\tau\pm i\nu$, the action considered in the monograph \cite[Sec.~7.3.6]{book} has the form
\begin{equation}
\mathcal L=c_0\eta\Tr[(D_-q_1^\dagger)(D_+q_2)],\qquad
q_1^\dagger=q_B^\dagger+\eta\beta_1q_F^\dagger,\qquad q_2=q_B+\eta\beta_2q_F .
\label{eq:action}
\end{equation}
The independent variables are the fundamental graded coordinates and their adjoints; the auxiliary Grassmann generators are not inverted. On the restricted domain the free equations include $D_+^2q_B=D_+^2q_F=0$. The real scalar projection adopted there supplies a regulated canonical Hamiltonian but not a separately derived ASA collapse generator; an ASA bifermionic Lagrangian coefficient is not automatically such a generator, and the effective equation $i\hbar\dot\psi=(H_S-i\hbar G_F)\psi$ remains conditional on an additional projection, which supersedes the earlier identification $H=\mathcal L$ in the collapse discussion \cite{gtdborn}. The free regulated model has noncompact flat directions, so an invariant coefficient measure is not automatically a normalisable ensemble; the microscopic calculation needs an explicit preparation, a system--sea split and a physical observable map \cite[Secs.~9.2--9.4]{book}.

\subsection{A positivity restriction on the auxiliary map}
For independent anticommuting square-zero generators of matching adjoint parity, $b=\beta_1\beta_2$ satisfies $b^\dagger=-b$, $b^2=0$ and $b^\dagger b=0$. For any normalised positive linear functional $\omega$ on this star algebra, Cauchy--Schwarz gives $|\omega(b)|^2\leq\omega(1)\omega(b^\dagger b)=0$, so a nonzero replacement $b\mapsto i\rho_0$ is not a positive expectation on that algebra (the Gram matrix of $1,b$ would have determinant $-\rho_0^2$), and a finite-dimensional star representation on a positive Hilbert space sends $b$ to zero. This does not prohibit an emergent change of product; it says that a positive physical state space cannot be justified by interpreting the replacement as ordinary averaging on the original auxiliary algebra.

\subsection{What the proposed vertex can distinguish}
The proposed coassociative interaction has the current--current form \cite{cyclic}
\begin{equation}
\mathcal V_4=\kappa\,\varphi_{abcd}\Tr(\mathcal J^{ij}_{ab}\mathcal J^{kl}_{cd}),\qquad \mathcal J^{ij}_{ab}=q_i^aq_j^b ,
\label{eq:vertex}
\end{equation}
with $\varphi_{abcd}$ the totally antisymmetric coassociative four-form, $a,\ldots,d$ octonionic indices and $i,\ldots,l$ labelling the coordinates. Total antisymmetry projects onto two-form currents, so $\varphi_{abcd}\delta_{cd}=0$ and a scalar-density bath contraction does not furnish a collapse coupling through this vertex. With independent antisymmetric index pairs, $\mathcal M_{(ab),(cd)}=\varphi_{abcd}$ satisfies $\mathcal M=2P_7-P_{14}$ in the real-octonion convention of that paper; if a projection supplied a real bath covariance $\mathcal N\succeq0$ and an effective coupling $g$, the induced covariance $g^2\mathcal M\mathcal N\mathcal M^T$ would be positive, opposite interaction signs are not negative variances, and the ratio $4{:}1$ holds only for an isotropic covariance. A pure associator mechanism vanishes if all factors entering it lie in one associative subalgebra. Freezing two factors of \eqref{eq:vertex} to backgrounds $B^c,B^d$ leaves the kernel $\varphi_{abcd}B^cB^d=\frac12\varphi_{abcd}[B^c,B^d]$, which vanishes for commuting backgrounds; a bilinear coupling of the retained block to the remaining coordinates therefore requires noncommuting backgrounds, and its spectral density is part of the specification of the interacting sector.

\section{A positive fermionic reservoir}\label{app:bath}
Let $\{c_a,c_b^\dagger\}=\delta_{ab}$, $H_B=\sum_{a=1}^N\epsilon_ac_a^\dagger c_a$ with $\epsilon_a=E_W[(a-\frac12)/N-\frac12]$ for a bandwidth $E_W$, $d=N^{-1/2}\sum_ac_a$ and $B=v(d^\dagger d-\frac12)$, a bounded self-adjoint bilinear ($B=(iv/2)\gamma_1\gamma_2$ in Majorana form, a Clifford product and not the square-zero auxiliary product). In the tracial preparation $\rho_B=\id/2^N$,
\begin{equation}
G_N(t)=\Tr[\rho_BB(t)B(0)]=\frac{v^2}{4N^2}\Bigl|\sum_ae^{-i\epsilon_at/\hbar}\Bigr|^2
\;\xrightarrow{N\to\infty}\;
G(t)=\frac{v^2}4\sinc^2\frac{E_Wt}{2\hbar},
\label{eq:G}
\end{equation}
with zero-frequency weight $D_B=\int_{-\infty}^\infty G\,\dd t=\pi\hbar v^2/(2E_W)$ and spectrum $S(\omega)=D_B(1-\hbar|\omega|/E_W)_+$: particle--hole energy differences fill a continuum and no linewidth is inserted. At fixed $N$ a constant component $v^2/4N$ remains and recurrences occur on the scale $2\pi\hbar N/E_W$. At temperature $T$ and zero chemical potential $D_B(T)=2\pi\hbar v^2T\tanh(E_W/4T)/E_W^2$, which tends to $D_B$ at large $T$ and vanishes linearly as $T\to0$; the retarded susceptibility $\chi_B(t)=-\frac2\hbar\theta(t)\operatorname{Im}G_T(t)$ vanishes in the tracial limit while the noise does not.

For $Q=\sum_jq_jP_j$ with $[H_S,Q]=0$ and $H=H_S+H_B+\lambda Q\otimes B$, the second-order Markov approximation gives $\dot\rho=-\frac{\Gamma_{\rm d}}2[Q,[Q,\rho]]$ with $\Gamma_{\rm d}=\lambda^2D_B(T)/\hbar^2$. An exact check uses the fermionic trace--determinant identity \cite{klich}: with $u_a=N^{-1/2}$ and $h_j=\operatorname{diag}(\epsilon_a)+\lambda q_jv|u\rangle\langle u|$, a coherence is multiplied by $F_{jk}(t)=e^{i\lambda v(q_j-q_k)t/2\hbar}\det[(\id+e^{ih_kt/\hbar}e^{-ih_jt/\hbar})/2]$; for $E_W=4$, $v=\hbar=1$, $q_j-q_k=2$ and the pre-recurrence window $t\in[10,30]$ the exact logarithmic decay slopes are $0.03164$ ($N=96$) and $0.03163$ ($N=192$) at $\lambda=0.2$ against $2\lambda^2D_B=0.03142$, and $0.007819$, $0.007816$ at $\lambda=0.1$ against $0.007854$. Since $[P_j\otimes\id,H]=0$ the exact evolution preserves every record-sector weight, and in the tracial preparation $U_j\rho_BU_j^\dagger=\rho_B$ for every $q_j$, so a final bath measurement cannot distinguish records: the reservoir dephases and does not select. The second-order coherent response $H_{\rm LS}\propto\operatorname{Im}\int_0^\infty G_T\,Q^2$ vanishes for the real tracial correlation but not at finite temperature; for $Q=Q_A+Q_B$ its cross term $Q_AQ_B$ must be given an admissible emergent support before a common bath is used as a model for spacelike laboratories.

\section{Probability details of the joint law}\label{app:joint}
\subsection{Why the norm weight appears}
Suppose a linear completely positive operation density $\mathcal I_y$ sends every pure input to the ray of an invertible $M_y\psi$, with Kraus operators $A_{yr}$. Purity forces each $A_{yr}\psi$ to be parallel to $M_y\psi$, so $M_y^{-1}A_{yr}$ maps every vector into its own span and is a multiple of the identity. Hence $\mathcal I_y(\rho)=r(y)M_y\rho M_y^\dagger$ with $r\geq0$, and the probability is $r(y)\|M_y\psi\|^2$. Once the instrument assumptions are made, updates and probabilities are not independent choices; the trace rule and affine dependence on preparations remain assumptions.

\subsection{The joint generator and consistency in time}
Acting on a smooth $F(p,Y)$, the joint generator of \eqref{eq:Y}--\eqref{eq:p} is
\begin{equation}
\mathcal AF=2\bar\ell\cdot\nabla_YF+\tfrac12\Delta_YF+\sum_{ja}b_{ja}\partial_{p_j}\partial_{Y_a}F+\tfrac12\sum_{jk}(bb^T)_{jk}\partial_{p_j}\partial_{p_k}F .
\label{eq:generator}
\end{equation}
The cross derivatives encode the correlation between histories and state changes; a noise spectrum alone does not fix them. For constant commuting channels the reference likelihood $Z_\tau=\sum_jp_j^0e^{2\ell_j\cdot Y_\tau-2|\ell_j|^2\tau}$ is a $\mathbb Q$-martingale, so \eqref{eq:measure} defines consistent finite-time laws without any assertion about an infinite-time absolutely continuous measure; with separate local exposure clocks the same holds on the product filtration.

\subsection{Finite-exposure error bound}\label{app:bound}
Let $q$ be the product reference Gaussian density, $\Lambda_j=e^{2y_j-2s_j}$ the likelihood ratio of record $j$ from \eqref{eq:localK}, $a_j=p_j^0\Lambda_j$ and $Z=\sum_ja_j$, so that the physical density is $Zq$ and $p_j=a_j/Z$. Then
\[
\E_{\mathbb P}\Bigl(1-\sum_jp_j^2\Bigr)=\sum_{j<k}\E_{\mathbb Q}\frac{2a_ja_k}{Z}\leq\sum_{j<k}\E_{\mathbb Q}\sqrt{a_ja_k},
\]
using $Z\geq a_j+a_k\geq2\sqrt{a_ja_k}$. Gaussian integration gives $\E_{\mathbb Q}\sqrt{\Lambda_j\Lambda_k}=e^{-(s_j+s_k)/2}$, and $\sum_jp_j^2\leq\max_jp_j$, which proves \eqref{eq:error}. For general dimensionless record operators the exponent is $D_{jk}/2$ with $D_{jk}=\sum_is_i(v_{ij}-v_{ik})^2$.

\section{The linearised-bath model}\label{app:bathmodel}
\subsection{Model and exact reduction}
Each complex block coordinate $Q_r$ has two real components, which the coupling below does not mix; it suffices to treat one real coordinate $q$ of mass $m$ with velocity $v=\dot q$, and to assemble $v_r=v_{r,x}+iv_{r,y}$ from two independent identical copies at the end. Couple $q$ to $N$ oscillators $z_k$ of frequency $\Omega_k$ through
\begin{equation}
H=\frac{p^2}{2m}+\sum_{k=1}^N\Bigl[\frac{\pi_k^2}{2}+\frac{\Omega_k^2}{2}\Bigl(z_k-\frac{c_k}{\Omega_k^2}q\Bigr)^2\Bigr],
\label{eq:CL}
\end{equation}
the Caldeira--Leggett form \cite{caldeira}, in which the counter-term $\sum_kc_k^2q^2/2\Omega_k^2$ keeps the retained coordinate translation invariant, as the free GTD block is. In the relative coordinates $u_k=z_k-c_kq/\Omega_k^2$ the equations close on $(p,u_k,\pi_k)$: $\dot p=\sum_kc_ku_k$, $\dot u_k=\pi_k-c_kp/m\Omega_k^2$, $\dot\pi_k=-\Omega_k^2u_k$. Eliminating the bath gives the exact generalised Langevin equation \cite{fordkacmazur,zwanzig}
\begin{equation}
m\dot v(t)=-\int_0^tK(t-s)v(s)\dd s+\xi(t),\qquad
K(t)=\sum_k\frac{c_k^2}{\Omega_k^2}\cos\Omega_kt,
\end{equation}
with $\xi$ a linear functional of the bath initial data. In the stationary Gaussian ensemble of \eqref{eq:CL} at temperature $T$---$p\sim\mathcal N(0,mT)$, $\pi_k\sim\mathcal N(0,T)$, $u_k\sim\mathcal N(0,T/\Omega_k^2)$, all independent, which is the Gibbs measure since the change $z_k\to u_k$ at fixed $q$ has unit Jacobian---$\xi$ is Gaussian with $\langle\xi(t)\xi(s)\rangle=TK(t-s)$, the classical fluctuation--dissipation relation, and $v$ is a stationary Gaussian process with $\langle v(t)v(0)\rangle=(T/m)[e^{\mathsf At}]_{pp}$, where $\mathsf A$ is the $(2N+1)$-dimensional linear generator. This is the formula evaluated in Figure~\ref{fig:bath}. The counter-term makes $P_{\rm tot}=p+\sum_k(c_k/\Omega_k^2)\pi_k$ conserved, so at every finite $N$ the velocity correlation has a constant component $T/M_{\rm eff}$ with $M_{\rm eff}=m+\sum_kc_k^2/\Omega_k^4$; in the Ohmic discretisation below $M_{\rm eff}\simeq m(1+\pi\gamma N/\omega_c)$, which tends to infinity in the continuum limit and removes the zero-frequency line.

\subsection{Ohmic discretisation and the Markov limit}
With spectral density $\varrho(\omega)=\frac\pi2\sum_k(c_k^2/\Omega_k)\delta(\omega-\Omega_k)=\gamma m\omega$ for $\omega<\omega_c$, discretised as $\Omega_k=(k-\frac12)\omega_c/N$ and $c_k^2=\frac2\pi\gamma m\Omega_k^2\omega_c/N$, the kernel is $K(t)=\frac{2\gamma m}{\pi}\sin(\omega_ct)/t\to2\gamma m\delta(t)$ as $\omega_c\to\infty$, the friction becomes $m\dot v=-\gamma mv+\xi$, and $v$ is an Ornstein--Uhlenbeck process with $\langle v(t)v(0)\rangle=(T/m)e^{-\gamma|t|}$. Assembling the complex velocity from two independent copies, $C_r(t)=2(T/m)e^{-\gamma|t|}$, so $\sigma_r^2=2T/m$, circularity $\langle v_r(t)v_r(s)\rangle=0$ holds for all time pairs, and $\langle J(t)J(0)\rangle=8\,c_1(t)c_2(t)$ in terms of the real-component correlations $c_r$. The ancillary script checks the Wick formula by Monte Carlo on independent Ornstein--Uhlenbeck processes, the exact finite-$N$ correlation against $e^{-\gamma t}$ ($N=64,256,1024$; maximal deviation before recurrence $0.020$, $0.010$, $0.005$), the variance of $Y(\tau)$ against \eqref{eq:ohmic} ($0.7\%$ at $N=1024$, $1\%$ at $N=512$), the spectral formula for $D_J$ against the closed form, the ballistic late-time growth of the $N=32$ bath, and the vanishing of $J$ for identical components; a separate script computes the spectrum of the linear generator for a negative-kinetic block coupled to a positive or a negative bath sector, the statement about $(2,2)$ forms recorded after Lemma~\ref{lem:positive}. For a super-Ohmic density $\varrho\propto\omega^{\mathsf s}$ the low-frequency velocity spectrum behaves as $\omega^{1-\mathsf s}$ for $1<\mathsf s<2$, so $S_1S_2$ is integrable only for $\mathsf s<\frac32$, while for $\mathsf s>2$ a finite effective mass restores a zero-frequency line; the Ohmic case is the simplest one with a Markov limit, and the general statement is the integrability condition of Proposition~\ref{prop:bath}. If the linearisation leaves a residual quadratic term for the block instead of cancelling it, the block velocity becomes that of a damped oscillator; its spectrum is then absolutely continuous with $S_r(0)=0$, since a confined coordinate has no velocity diffusion, but $\int S_1S_2\,\dd\omega$ remains positive, so $D_J>0$ still: the coefficient noise does not require the block itself to diffuse. The script evaluates this case as well.

\subsection{The Sutherland sea}\label{app:sutherland}
For the chain of Example~\ref{ex:sutherland}, with $\theta_d=\pi d/M$ and $k(\theta)=\mathsf g(\pi/\mathcal L)^4(6\csc^4\theta-4\csc^2\theta)$ the second derivative of the pair potential, the Hessian is the circulant $K_{ij}=-k(\theta_{i-j})$ for $i\neq j$, $K_{ii}=\sum_{d\neq0}k(\theta_d)$, with eigenvalues $\mathsf k_\ell=\sum_{d=1}^{M-1}k(\theta_d)[1-\cos(2\pi\ell d/M)]=4\mathsf g(\pi/\mathcal L)^4\ell^2(M-\ell)^2$, the result quoted in Ref.~\cite{projectvii}; the ancillary script confirms it to $10^{-8}$ for $M=16,64,256$. The frequencies \eqref{eq:sutherland_modes} follow with $\omega_\ell^2=\mathsf k_\ell/m$. In this subsection $m$ is the mass of a chain particle and $m_b$ that of the block coordinate (the mass called $m$ in the two preceding subsections). The real normal modes are $\sqrt{2/M}\cos(2\pi\ell n/M)$ and $\sqrt{2/M}\sin(2\pi\ell n/M)$ for $\ell=1,\ldots,M/2-1$, degenerate in pairs, together with the single mode $M^{-1/2}(-1)^n$ at $\ell=M/2$; the strain $x_1-x_0$ has amplitude $e_\ell$ on each, with $e_{\ell,c}^2+e_{\ell,s}^2=(8/M)\sin^2(\pi\ell/M)$ for a pair and $4/M$ for the edge mode. Writing the block Hamiltonian as \eqref{eq:CL} with $\Omega_k\to\omega_\ell$ and $c_k\to ce_\ell/\sqrt m$, the spectral density $\varrho(\omega)=\frac\pi2\sum_{\text{real modes}}(c^2e_\ell^2/m\omega_\ell)\delta(\omega-\omega_\ell)$ has, for $\ell\ll M$ where $\omega_\ell\simeq(4\Omega_{\max}/M)\ell$ and $\dd\ell/\dd\omega=M/(4\Omega_{\max})$, the continuum value $\varrho(\omega)\simeq\frac\pi2\frac{c^2}{m}\frac{8\pi^2\omega}{(4\Omega_{\max})^3}$, which is \eqref{eq:sutherland_ohmic}; the per-mode values at $\ell=1,2,3$ for $M=2048$ agree with it to $0.2\%$, and the ratio $\varrho/(\gamma m_b\omega)$ is $1.05$, $1.3$, $1.8$, $2.9$, $7.1$ at $\omega/\Omega_{\max}=0.05$, $0.25$, $0.5$, $0.75$, $0.95$. The sum in \eqref{eq:sutherland_meff}, over the pairs $\ell=1,\ldots,M/2$ with $\sum_\ell\ell^{-2}\to\pi^2/6$, gives $M_{\rm eff}-m_b\simeq\pi\gamma m_bM/(12\Omega_{\max})$; the exact sums at $M=1024$ and $4096$ agree with it to $3\%$. The exact velocity correlation is computed from the normal modes of the coupled quadratic system, $\langle v(t)v(0)\rangle=T\sum_nU_{qn}^2\cos\Omega_nt$ with $U^T\mathsf{Mass}\,U=\id$, which places the zero-frequency line at $TU_{q0}^2=T/M_{\rm eff}$ exactly; the kinetic-coupling case uses the same formula with the mass matrix $\bigl(\begin{smallmatrix}m_b&k_ve^T\\k_ve&m\id\end{smallmatrix}\bigr)$, $e=(-1,1,0,\ldots)$, and its zero mode carries weight $T/m_b$ exactly, as the conservation of $p_q$ requires. The colour sector of the texture, with its $r^2$ zero modes and its own bands \cite[eqs.~47--48]{projectvii}, is not included; it would add modes, not remove the positional ones.

\subsection{Scope}
The model is the coefficient-level consequence of a positive block kinetic term, a bilinear coordinate coupling and a stationary Gaussian preparation; by Lemma~\ref{lem:positive} the positive kinetic term is that of the reconstructed block, not a departure from it. Which factors of \eqref{eq:vertex} are frozen to produce the coupling to the texture, the value of $c$, and the attachment to the strain are part of the specification of the interacting sector and are not derived here. The calculation says nothing about the projection of $J$ to an operator on the effective state space, which is the open step identified in Section~\ref{sec:projection_result}, nor about the response gain of Proposition~\ref{prop:einstein}, which requires the back-reaction of the block on the sea to be specified.

\par\medskip\noindent\begin{minipage}{\textwidth}
\section*{Reproducibility and computational assistance}
The ancillary files contain finite Gaussian-law checks, local-operation tests, the entanglement-rate counterexample, the selection-rate bound for state-dependent unravellings, the stability check for an indefinite block, split-biquaternion and exterior-algebra checks, the exact free-coefficient history, the exact solution of the bare law and its comparison with the CSL law, the history-form drift formula and the fractional exit law with the Ornstein--Uhlenbeck sea of Proposition~\ref{prop:einstein}, the linearised-bath calculation with its finite-$N$ checks, the Sutherland-texture sea with its spectrum, finite-$M$ dynamics, zero-frequency line and kinetic-coupling case, reservoir checks, and the finite-exposure Bell calculation. None of the scripts simulates an interacting GTD derivation of collapse; the scope of each calculation is stated where it is used.

OpenAI ChatGPT-6 Astra and Anthropic Claude Fable 5.1 were used in developing and critically examining the constructions, preparing verification code, and drafting and revising the exposition. Three independent adversarial review passes by the latter identified a misreading of Ref.~\cite{uniqueness}, the kinetic-term question that Lemma~\ref{lem:positive} resolves, the linear-space argument of Section~\ref{sec:geodesicnorm}, the record proviso of Proposition~\ref{prop:bound}, and the separate status of the counter-term in Section~\ref{sec:einstein}. Responsibility for the scientific content rests with the author.
\end{minipage}\par

\end{document}